\documentclass[10pt]{article}
\usepackage{amsmath,amssymb,graphicx,hyperref,bm}
\usepackage[margin=1in]{geometry}
\usepackage{microtype}
\usepackage{amsthm}
\usepackage{algorithm}
\usepackage{algpseudocode}
\usepackage{float}
\usepackage{booktabs}
\usepackage{comment}
\newtheorem{theorem}{Theorem}
\newtheorem{lemma}{Lemma}
\newtheorem{remark}{Remark}
\newtheorem{notation}{Notation}
\newtheorem{assumption}{Assumption}
\newtheorem{proposition}{Proposition}
\newtheorem{conjecture}{Conjecture}
\usepackage{xcolor}
\colorlet{blue}{black}
\newcommand{\blue}[1]{\textcolor{black}{#1}}
\newcommand{\red}[1]{\textcolor{black}{#1}}

\usepackage{subcaption}
\usepackage{placeins}

\newif\ifshowfigures
\showfigurestrue

\newif\ifconvertedfigures
\convertedfigurestrue

\usepackage[normalem]{ulem}

\begin{document}
\Large \textbf{{Single- and Multilevel Quadrature with Error Control for Fourier Pricing under the Rough Heston Model}} \\[1.2em]
\normalsize
Chiheb Ben Hammouda$^{1}$, Abderrahmene Ben Romdhane$^{3}$, Michael Samet$^{2}$ and Raúl F. Tempone$^{2,3}$ \\[1.2em]
\small
$^{1}$Mathematical Institute, Utrecht University, 3584 CD Utrecht, the Netherlands.\\
$^{2}$Mathematics for Uncertainty Quantification, RWTH Aachen University, Aachen, Germany.\\
$^{3}$King Abdullah University of Science and Technology (KAUST), Thuwal, Saudi Arabia.\\

\begin{abstract}
\noindent
Unlike the classical Heston model, Fourier pricing under the rough Heston model requires solving a fractional Riccati equation at every quadrature point. Since the required resolution varies with model parameters and quadrature point, a single uniform time discretization can be inefficient. We develop single- and multilevel Gauss-Laguerre quadrature methods that balance the time discretization and Fourier quadrature errors. Both methods scale the laguerre weight to the estimated Fourier integrand decay. The single-level method allocates a prescribed tolerance between the two errors. The multilevel method splits the integrand into a level-zero term and level differences, selecting quadrature points separately at each level. Suppose that the Fourier integrand discretization error is $\mathcal{O}(\Delta t^p)$, that evaluating the characteristic function once costs $\mathcal{O}(\Delta t^{-\beta})$, and that the algebraic Gauss-Laguerre quadrature error is $\mathcal{O}(N^{-s_\mathrm{SL}/2})$, where $s_{\mathrm{SL}}$ is the smoothness index. Under this estimate and assumptions on the regularity and decay of level differences, we prove that the proposed single-level method requires $\mathcal{O}(\varepsilon^{-(\beta/p+2/s_{\mathrm{SL}})})$ computational work to achieve accuracy $\varepsilon$, whereas the proposed multilevel method requires $\mathcal{O}(\varepsilon^{-\beta/p})$ computational work. We also study root-exponential Gauss-Laguerre error models for practical multilevel quadrature allocation. Numerical experiments support the observed fractional Riccati and Fourier integrand convergence rates and root-exponential quadrature behavior, and show substantial reductions in quadrature cost from the proposed scaling. The multilevel method provides clear computational savings over the single-level method. We further benchmark the multilevel fractional Riccati method against the BL2 Markovian approximation~\cite{bayer2023weak} and report lower total CPU time in the tested configurations.

\end{abstract}

\vspace{0.4em}
\noindent\textbf{\textit{Keywords:}}
option pricing, rough Heston, Fourier methods, characteristic function,
fractional Riccati equation, hierarchical quadrature.

\vspace{0.4em}
\noindent\textbf{\textit{MSC2020 codes:}} 65D32, 65R10, 91G60, 91G20, 65Y20, 60G22.

\clearpage
\tableofcontents
\clearpage

\section{Introduction}
Fourier methods \cite{lewis2001simple,carr1999option,fang2009novel} form the basis of many of the most efficient pricing and calibration routines used in Markovian stochastic volatility \cite{gatheral2022volatility}, jump-diffusion \cite{duffie2002affine}, and L\'evy models \cite{schoutens2003levy}. This efficiency stems from the availability of the characteristic function of the {log price} process, which allows option valuation problems to be recast as Fourier inversion problems, replacing stochastic simulation-based methods by the numerical evaluation of deterministic contour integrals of analytic, but possibly  oscillatory, functions in the Fourier domain \cite{schmelzle2010option}.

In Markovian affine stochastic volatility models, the characteristic function is available in closed form as the explicit solution of systems of generalized Riccati ordinary differential equations (ODEs) \cite{duffie2002affine}. A canonical example is the Heston model \cite{heston1993}, which remains widely used in equity and foreign exchange markets \cite{gatheral2022volatility}. However, empirical evidence \cite{bayer2023rough} has revealed systematic shortcomings of such models in reproducing the behavior of implied volatility surfaces, in particular the pronounced steepness of the at-the-money skew at short maturities \cite{bayer2016pricing}. These observations motivated the development of rough volatility models, which \blue{incorporate memory effects by modeling the driving noise of the volatility process as a fractional Brownian motion}. Prominent examples include the rough Bergomi model \cite{bayer2016pricing} and the rough Heston model \cite{el2019roughening}.

The loss of Markovianity precludes the direct application of classical partial differential equation (PDE)-based pricing techniques and increases the computational complexity of Monte Carlo simulation methods due to the path-dependent nature of the volatility dynamics \cite{bayer2024efficient,bayer2016pricing}. Despite these challenges, Fourier-based pricing methods remain attractive in the rough Heston setting, following the seminal work of \cite{el2019characteristic}, who showed that the characteristic function admits a semi-explicit exponential-affine representation. As a consequence, option pricing can be reformulated as the numerical evaluation of a deterministic Fourier integral; however, in contrast to the Markovian affine case, this requires solving a fractional Riccati equation independently at each {quadrature point}, which becomes a major computational bottleneck in calibration settings where the characteristic function must be evaluated repeatedly across many parameter values.

This computational challenge has motivated two complementary lines of research aimed at restoring numerical tractability while preserving the  features of rough volatility dynamics. 
A first line of work replaces the original non-Markovian volatility process by a Markovian surrogate, constructed by approximating the fractional kernel  by a finite sum of exponentials \cite{abi2019multifactor,abi2019lifting,bayer2023weak}. 
{The exponential terms introduce a collection of mean-reverting factors and yield a finite-dimensional Markovian representation.} The resulting model retains the ability to reproduce key empirical features of rough volatility while replacing the fractional Riccati equation governing the characteristic function by a finite-dimensional system of classical Riccati ODEs.  

A second line of research retains the original rough volatility model and instead seeks to accelerate valuation directly. 
A substantial body of work within this line focuses on improving the numerical solution of the fractional Riccati equation itself. 
The standard approach remains the fractional Adams predictor-corrector scheme \cite{diethelm2002predictor}, which has been widely adopted in the rough volatility literature. Its direct implementation evaluates a discrete convolution over the full history at each time step and therefore requires $\mathcal{O}(\Delta t^{-2})$ operations over a fixed time interval \cite{li2009fractional,callegaro2018rough}. 
Several refinements have been proposed to enhance its efficiency, including Richardson-Romberg extrapolation combined with fractional power series expansions \cite{callegaro2018rough}, rational approximation techniques that construct fast surrogates for the Riccati solution \cite{gatheral2019rational}, as well as modified Adams-type schemes with improved stability and performance properties \cite{boyarchenko2025fast}. 
In parallel, other works have sought to accelerate Fourier pricing by focusing on the numerical evaluation of the Fourier integral itself, for instance through contour deformations and SINH-type accelerations \cite{boyarchenko2025fast}, or by adopting alternative transform-based methodologies such as the SINC method \cite{baschetti2022sinc}. 
These approaches improve the numerical resolution of the fractional Riccati solver or the efficiency of the quadrature, but treat the two components largely in isolation.

Despite this progress, several theoretical and numerical issues remain only partially understood and continue to constrain robust and systematically optimized implementations.
First, a full characterization of the relevant strip of analyticity of the rough Heston characteristic function is not available in explicit form, complicating principled contour selection and the tuning of damping parameters \cite{boyarchenko2025fast,bayer2023optimal}. In this work, we propose a practical heuristic that extends the {guideline for contour selection} of \cite{bayer2023optimal} to settings in which the analyticity strip is not available in closed form. Second, in contrast to the classical Heston model, for which the asymptotic behaviour of the characteristic function for large values of the Fourier integration variable is available in explicit form \cite{andersen2018robust}, no comparably rigorous asymptotic characterization has been derived for the rough Heston model. This gap complicates the construction of quadrature rules adapted to the tail behaviour and integral transformations, including the domain transformation strategies for quasi-Monte Carlo \cite{bayer2024quasi} and the double exponential quadrature method proposed in \cite{andersen2022high}.

The present work follows the second line of research in that it retains the exact rough Heston model and its fractional Riccati representation. Rather than studying the fractional Riccati solver and the Fourier quadrature separately, {we control their errors jointly and allocate the work required to meet a prescribed tolerance optimally. A related work derives an allocation of computational work between numerical quadrature and optimization that minimizes the cost of computing multivariate shortfall risk measures \cite{hammouda2026single}. While the time step determines the accuracy and cost of each characteristic function evaluation, the quadrature error estimate allows us to determine the number of required quadrature points in the Fourier domain. We first propose a single-level method that allocates the tolerance between the time discretization and quadrature errors and uses a scaled Gauss-Laguerre quadrature rule whose weight is matched to the estimated decay of the Fourier integrand to prevent issues documented for short maturities in \cite{boyarchenko2025fast}. We then extend the construction to a multilevel method, which applies the scaled Gauss-Laguerre quadrature rule to a level zero term and a hierarchy of level differences. \red{This construction follows the deterministic telescoping-quadrature paradigm \cite{harbrecht2012multilevel,griebel2020multilevel}; its specific contribution here is to apply the hierarchy jointly to the Fourier quadrature and the fractional-Riccati time discretization.} The proposed multilevel method selects the number of quadrature points separately at each level to optimally distribute the work over the hierarchy.} Our main contributions are as follows.
\begin{itemize}
	\item {\color{blue}
	We develop a single-level Fourier pricing method based on scaled Gauss-Laguerre quadrature.
	The pricing error is decomposed into a fractional-Riccati time-discretization error and a Fourier
	quadrature error, and the prescribed tolerance is allocated between them. The Laguerre scaling
	is chosen to reflect the decay of the Fourier integrand, and both algebraic and root-exponential
	quadrature error models are analyzed.
	}

	\item {\color{blue}
	Let $p$ denote the convergence order of the Fourier integrand under refinement of the fractional-
	Riccati time discretization, $\beta$ the per-node cost exponent, and $s_{\mathrm{SL}}$ the algebraic Gauss-
	Laguerre convergence exponent of the single-level integrand. We show that balancing the
	time-discretization and quadrature errors gives
	\[
	W_{\mathrm{SL}}(\varepsilon)
	=
	\mathcal{O}\!\left(
	\varepsilon^{-\beta/p-2/s_{\mathrm{SL}}}
	\right).
	\]
	We then develop a multilevel method based on a telescopic decomposition of the discretized
	Fourier integrand and level-dependent quadrature allocation. By exploiting the decay of the
	multilevel corrections, the method achieves
	\[
	W_{\mathrm{ML}}(\varepsilon)
	=
	\mathcal{O}\!\left(
	\varepsilon^{-\beta/p}
	\right)
	\]
	under the stated algebraic regularity and correction-decay assumptions. Root-exponential
	quadrature models are additionally used for the practical multilevel allocation.
	}

	\item {\color{blue}
	Numerical experiments support the observed fractional-Riccati/Fourier-integrand convergence
	rates and root-exponential quadrature behavior, and demonstrate substantial reductions in
	quadrature cost from the proposed scaling. We compare the single- and multilevel methods
	in terms of computational work and CPU time and observe clear savings from the multilevel
	construction. Finally, we benchmark the multilevel direct fractional-Riccati method against the
	\red{BL2 Markovian approximation~\cite{bayer2023weak} and report lower total CPU time in the four tested
	configurations, subject to reference-assisted BL2 factor selection.}
	}
\end{itemize}

The remainder of the paper is organized as follows. Section~\ref{sec:problem_setting} introduces the pricing framework and the Fourier representation. Section~\ref{sec:methodology} {develops the single- and multilevel methods and establishes their error and complexity estimates}. {Section~\ref{sec:numerical_results} tests these estimates, examines the effects of scaling the Gauss-Laguerre quadrature rule and allocating quadrature points among the levels, and presents the comparison with the BL2 benchmark.} Finally, Section~\ref{sec:conclusion} concludes and discusses possible extensions.

\section{Problem Setting and Pricing Framework}
\label{sec:problem_setting}
In this section, we formulate the valuation problem in Fourier space, state the underlying assumptions on the model and the payoff function, and fix the notation used throughout the paper. Section~\ref{sec:valuation_formula} introduces the Fourier pricing representation and the associated notation, while Section~\ref{sec:models_used} specifies the rough Heston dynamics and the payoff considered in this work.
\subsection{Fourier Pricing Valuation Formula}
\label{sec:valuation_formula}
Before stating the  Fourier-based valuation formula, we introduce the necessary notation, definitions, and assumptions.

\begin{notation}[Notations and definitions]
\mbox{}\par\vspace{0.5em}  
\begin{itemize}

    \item For $\xi \in \mathbb{C}$, $\Re[\xi]$ and $\Im[\xi]$ denote the real and imaginary parts of $\xi$.

    \item For $f \in L^1(\mathbb{R})$, $\widehat{f}$ denotes the Fourier transform of 
    $f:\mathbb{R}\to\mathbb{C}$, defined by
    \[
    \widehat{f}(\xi)=\int_{\mathbb{R}} e^{-\mathrm{i} \xi x}\,f(x)\,dx,
    \]
    with inverse transform
    \[
    f(x)=\frac{1}{2\pi}\int_{\mathbb{R}} e^{\mathrm{i} \xi x}\,\widehat{f}(\xi)\,d\xi.
    \]

    \item $X_t := (X_t)_{0 \le t \le T}$ denotes the {log price process} at time $t$,
    where $T$ is the maturity of the option. The dynamics of $X_t$ follow a stochastic volatility model 
    with parameter vector $\Theta_X$.

    \item The extended characteristic function of $X_t$ is defined as
    \[
    \phi(\xi,t) := \mathbb{E}\!\left[e^{\mathrm{i} \xi X_t}\right], \quad \xi \in \mathbb{C},
    \]
    when the expectation is {well defined} and admits an analytic continuation \cite{andersen2018robust}.

    \item The {characteristic function at maturity $T$} is denoted by
    \[
    \Phi(\xi) := \phi(\xi, T), \quad \xi \in \mathbb{C}.
    \]

\item The payoff function is denoted by $P:\mathbb{R}\to\mathbb{R}_+$. 

\item The generalized Fourier transform of the payoff is defined 
as
\[
\widehat{P}(\xi) := \int_{\mathbb{R}} e^{-\mathrm{i} \xi x} P(x)\,dx, \quad \xi\in\mathbb{C},
\]
whenever the integral converges.

    \item $\Theta_P = (K, T, r)$ denotes the vector of payoff and market parameters, where
    $K$ is the strike price, $T$ is the maturity, and $r$ is the risk-free interest rate.

    \item For stochastic processes $X$ and $Y$, $\langle X, Y \rangle_t$ denotes their quadratic 
    covariation process up to time $t$.

    \item For an interval $I \subset \overline{\mathbb{R}} := \mathbb{R}\cup\{-\infty,+\infty\}$,
    $L^p(I)$ denotes the space of measurable functions $f:I\to\mathbb{C}$ such that
    \[
    \|f\|_{L^p(I)} < \infty, \qquad 1 \le p \le \infty.
    \]

    \item For $f \in L^p([a,b])$, the $L^p$ {norm} is defined by
    \[
    \|f\|_{L^p([a,b])}
    =\left(\int_{a}^{b} |f(x)|^{p}\,dx\right)^{1/p}, 
    \quad 1\le p<\infty,
    \]
    and
    \[
    \|f\|_{L^\infty([a,b])}
    =\sup_{x\in[a,b]}|f(x)|.
    \]
    The interval $[a,b]$ may be finite, semi-infinite, or equal to $\mathbb{R}$.

 \item $
L_{bc}^1(\mathbb{R})
:=
\Big\{
f:\mathbb{R}\to\mathbb{C}
\ \Big|\
f \ \text{bounded and continuous},\ \|f\|_{L^1(\mathbb{R})}<\infty
\Big\}.$

    \item $AC_{\mathrm{loc}}(0,\infty)$ denotes the space of functions on $(0,\infty)$ that are 
    absolutely continuous on every compact subinterval.

    \item $\Gamma(\cdot)$ denotes the complex Gamma function, defined {for $\xi \in \mathbb{C}$}, $\Re[\xi]>0$ by
    \[
    \Gamma(\xi) = \int_0^\infty t^{\xi-1} e^{-t}\, dt.
    \]

\end{itemize}
\end{notation}

\begin{assumption}[Assumptions on the payoff]\label{ass:payoff}
\[
\delta_P
:=
\left\{
R \in \mathbb{R}
\;\middle|\;
x \mapsto e^{R x} P(x) \in L_{bc}^1(\mathbb{R}), \; and \; u \mapsto \widehat{P}(u + \mathrm{i}R) \in L^1(\mathbb{R})
\right\}
\neq \varnothing .
\]
\end{assumption}
\begin{assumption}[Assumptions on the model]\label{ass:model}
{\[
\delta_X
:=
\left\{
R \in \mathbb{R}
\;\middle|\;
\mathbb{E}\!\left[e^{-R X_T}\right] < \infty
\right\}
\neq \varnothing .
\]}
\end{assumption}

A general treatment of admissibility conditions for Fourier pricing formulas, covering different regularity and integrability assumptions on the payoff and the characteristic function, is given in \cite{eberlein2010analysis}. In the rough Heston model, however, the admissible strip of analyticity of the characteristic function is not known explicitly, in contrast to many Lévy models. This complicates the selection of the damping parameter. {We therefore propose, in Section~\ref{sec:numerical_results}, a practical selection rule that extends the approach of \cite{bayer2023optimal}.} Alternative choices of the damping parameter and the integration contour are discussed, for example, in \cite{boyarchenko2025fast}.

\begin{proposition}[Fourier pricing valuation formula]
\label{prop:fourier_formula}
\mbox{}\par\vspace{0.5em}  

We suppose that Assumptions~\ref{ass:payoff} and~\ref{ass:model} hold, and that $\delta_V = \delta_X \cap \delta_P \neq \emptyset$. {By~\cite{eberlein2010analysis,bayer2023optimal}, for any $R \in \delta_V$, the option value is given by}
\begin{equation}
\label{eq:fourier_pricing}
V(\Theta_X, \Theta_P)
= \frac{e^{-rT}}{2\pi}
\, \Re \!\left[
\int_{\mathbb{R}}
\Phi(u+\mathrm{i}R)
\, \widehat{P}(u+\mathrm{i}R)
\, du
\right].
\end{equation}
\end{proposition}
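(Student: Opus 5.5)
The plan is to apply Fourier inversion to a damped version of the payoff and then integrate against the law of $X_T$. The sign conventions need care: the transform is $\widehat f(\xi)=\int e^{-\mathrm{i}\xi x}f(x)\,dx$, and $\Phi(\xi)=\mathbb{E}[e^{\mathrm{i}\xi X_T}]$. Here $P$ is viewed as a function of the terminal log-price, so by risk-neutral pricing $V=e^{-rT}\mathbb{E}[P(X_T)]$.

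Fix $R\in\delta_V$ and set $g(x):=e^{Rx}P(x)$. By Assumption~\ref{ass:payoff}, $g\in L^1_{bc}(\mathbb{R})$. Its transform is
\[
\widehat g(u)=\int e^{-\mathrm{i}ux}e^{Rx}P(x)\,dx=\widehat P(u+\mathrm{i}R),
\]
since $e^{-\mathrm{i}(u+\mathrm{i}R)x}=e^{-\mathrm{i}ux}e^{Rx}$. Assumption~\ref{ass:payoff} also gives $\widehat g\in L^1(\mathbb{R})$. Because $g$ is continuous and both $g$ and $\widehat g$ are integrable, the inversion formula holds pointwise for every $x$, not merely almost everywhere:
\[
g(x)=\frac{1}{2\pi}\int e^{\mathrm{i}ux}\widehat P(u+\mathrm{i}R)\,du .
\]
Multiplying by $e^{-Rx}$ gives
\[
P(x)=\frac{1}{2\pi}\int e^{\mathrm{i}(u+\mathrm{i}R)x}\widehat P(u+\mathrm{i}R)\,du .
\]

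Next, evaluate at $x=X_T$ and take expectations. To swap $\mathbb{E}$ and $\int du$, I will use Fubini. The integrand has modulus
\[
e^{-RX_T}\,|\widehat P(u+\mathrm{i}R)|,
\]
and its joint integral equals $\mathbb{E}[e^{-RX_T}]\,\|\widehat P(\cdot+\mathrm{i}R)\|_{L^1}$. The first factor is finite by Assumption~\ref{ass:model} and the second by Assumption~\ref{ass:payoff}. This product structure is why the condition $R\in\delta_X\cap\delta_P$ is exactly what is needed.

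After the swap, the inner expectation is $\mathbb{E}[e^{\mathrm{i}(u+\mathrm{i}R)X_T}]=\Phi(u+\mathrm{i}R)$. This is well defined because $|e^{\mathrm{i}(u+\mathrm{i}R)X_T}|=e^{-RX_T}$ is integrable. The result is
\[
\mathbb{E}[P(X_T)]=\frac{1}{2\pi}\int\Phi(u+\mathrm{i}R)\widehat P(u+\mathrm{i}R)\,du .
\]
Since $P$ is real-valued, the left-hand side is real, so taking real parts changes nothing. Multiplying by $e^{-rT}$ then gives \eqref{eq:fourier_pricing}. If $P$ is written in terms of $S_T$ or $K$, one first reparametrizes to log-price; this does not affect the argument.

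The main obstacle is justifying pointwise inversion, because option payoffs such as calls are not integrable without damping. This is handled entirely by the assumptions: continuity of $g$ together with integrability of both $g$ and $\widehat g$ gives inversion at every point. That lets us plug in the random point $X_T$ with no null-set issue. Everything else reduces to the Fubini bound above, and the result agrees with \cite{eberlein2010analysis}.
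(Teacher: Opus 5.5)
Your proof is correct and is essentially the standard argument of \cite{eberlein2010analysis}, to which the paper defers without giving a proof of its own: damp the payoff, invert pointwise using $e^{R\cdot}P\in L_{bc}^1(\mathbb{R})$ and $\widehat P(\cdot+\mathrm{i}R)\in L^1(\mathbb{R})$, and swap $\mathbb{E}$ and $\int du$ by Fubini using $\mathbb{E}[e^{-RX_T}]<\infty$. Your sign bookkeeping is also consistent with the paper's conventions $\widehat f(\xi)=\int e^{-\mathrm{i}\xi x}f(x)\,dx$ and its definition of $\delta_X$.
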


\begin{proof}
We refer the reader to \cite{eberlein2010analysis,bayer2023optimal} for the proof
of the Fourier valuation formula.
\end{proof}

\vspace{1.2em}

\noindent

In what follows, based on equation~\eqref{eq:fourier_pricing}, we define the integrand of interest by
\begin{equation}
\label{eq:integrand}
g(u; R, \Theta_X, \Theta_P)
:= \frac{e^{-rT}}{2\pi}\,
\Re\!\left[\Phi(u+\mathrm{i}R)\,\widehat{P}(u+\mathrm{i}R)\right],
\qquad u \in \mathbb{R},\; R \in \delta_V.
\end{equation}

{\color{blue}
For $R\in\delta_V$, the integrand~\eqref{eq:integrand} is
even\footnote{Since $X_T$ and $P$ are real-valued,
$\Phi(-u+\mathrm{i}R)=\overline{\Phi(u+\mathrm{i}R)}$ and
$\widehat P(-u+\mathrm{i}R)=\overline{\widehat P(u+\mathrm{i}R)}$.
Therefore, the product $\Phi\widehat P$ is conjugated under
$u\mapsto-u$, and $g$ is its real part.}, hence
\begin{equation}
V(\Theta_X,\Theta_P)
=
\int_{\mathbb R}g(u)\,du
=
2\int_0^\infty g(u)\,du.
\label{eq:half_line_fourier_integral}
\end{equation}
All subsequent Fourier-domain analysis is therefore carried out for
$u\geq0$.
}

\subsection{Payoff and Asset Model}
\label{sec:models_used}
In this section, we specify the payoff of interest and introduce the underlying asset model used throughout this work.
{\subsubsection{Payoff and its Fourier Transform}}
Although the proposed framework can be extended to a wide variety of payoff functions (see \cite{eberlein2010analysis}), in this work we focus the European call option.  
Its payoff, for $K > 0$, is defined by
\begin{equation}
P(X_T) = \max(e^{X_T} - K,\, 0).
\label{eq:call_payoff}
\end{equation}
The corresponding Fourier transform of the payoff function is given by
\begin{equation}
\widehat{P}(\xi)
= -\,\frac{K^{\,1-\mathrm{i}\xi}}{\xi^2 + \mathrm{i} \xi},
\label{eq:call_payoff_transform}
\end{equation}
where $\xi := u + \mathrm{i} R$ with $u \in \mathbb{R}$ and $R \in \delta_P  =  \bigl\{ R \in \mathbb{R} \mid R < -1 \bigr\}.$

{\subsubsection{Model and the Corresponding Characteristic Function}}
Throughout this paper, we consider the rough Heston model, obtained by replacing the variance process in the classical Heston model~\cite{heston1993} with a fractional square-root process. Let $H\in(0,\frac12)$, $\nu,\gamma,\theta>0$, and $\rho\in[-1,1]$. The model dynamics under the risk-neutral measure are given by~\cite{el2019roughening}:
\begin{equation}
\begin{aligned}
dS_t
&=
S_t\bigl(r\,dt+\sqrt{V_t}\,dW_t^1\bigr),
\\[4pt]
V_t
&=
V_0
+
\frac{1}{\Gamma\!\left(H+\tfrac12\right)}
\int_0^t
(t-s)^{H-\tfrac12}
\bigl(
\gamma(\theta-V_s)\,ds
+
\nu\gamma\sqrt{V_s}\,d W_s^2
\bigr).
\end{aligned}
\label{eq:rHeston_model}
\end{equation}
Here, $S_0,V_0>0$, and all processes are defined on a filtered probability space $(\Omega,\mathcal F,(\mathcal F_t)_{t\in[0,T]})$. The processes $(W^1,W^2)$ are standard $\mathcal F$-Brownian motions satisfying $\mathrm{d}\langle W^1,W^2\rangle_t=\rho\,\mathrm{d}t$, and $\mathcal F_t:=\sigma(W_s^1,W_s^2;\,0\leq s\leq t)$.

We define the log price by $X_t:=\log S_t$ and set $\alpha:=H+\frac12\in(\frac12,1)$. It was proved in~\cite{el2019characteristic} that the characteristic function $\phi(\xi,t)$ of $X_t$ admits an exponential-affine representation in terms of the solution of a fractional Riccati equation. 
For $\xi=u+\mathrm{i}R$, with $u\in\mathbb R$ and
$R\in\delta_X$, let $h(\xi,\cdot)$ solve the fractional Riccati equation
\begin{equation}
D_t^\alpha h(\xi,t)
=
F\!\bigl(\xi,h(\xi,t)\bigr),
\qquad
0\leq t\leq T,
\qquad
{I_t^{1-\alpha}}h(\xi,0)=0,
\label{eq:riccati}
\end{equation}
where,
\begin{equation}
F(\xi,h)
:=
-\tfrac12\bigl(\xi^2+\mathrm{i}\xi\bigr)
+
\gamma\bigl(\mathrm{i}\xi\rho\nu-1\bigr)h
+
\tfrac{(\gamma\nu)^2}{2}h^2.
\label{eq:F_def}
\end{equation}

The operators ${I_t^\alpha}$ and ${D_t^\alpha}$ denote the Riemann-Liouville fractional integral and derivative:
$$
\begin{aligned}
I_t^\alpha u(t)
&=
\frac{1}{\Gamma(\alpha)}
\int_0^t (t-s)^{\alpha-1}u(s)\,ds,
\\[4pt]
D_t^\alpha u(t)
&=
\frac{1}{\Gamma(1-\alpha)}
\frac{d}{dt}
\int_0^t (t-s)^{-\alpha}u(s)\,ds.
\end{aligned}
$$

Equivalently, $h$ satisfies the Volterra equation
\begin{equation}
h(\xi,t)
=
\frac{1}{\Gamma(\alpha)}
\int_0^t
(t-s)^{\alpha-1}
F\!\bigl(\xi,h(\xi,s)\bigr)\,ds.
\label{eq:volterra_form}
\end{equation}
Since $h={I_t^\alpha} F(\xi,h)$, the identity ${I_t^{1-\alpha}I_t^\alpha=I_t^1}$ yields the equivalent representation given in Proposition~2.1 of~\cite{boyarchenko2025fast}:
$$
\phi(\xi,t)
=
\exp\!\left(
\mathrm{i}\xi\bigl(X_0+rt\bigr)
+
\int_0^t
\left[
\theta\gamma h(\xi,s)
+
V_0F\!\bigl(\xi,h(\xi,s)\bigr)
\right]\,ds
\right),
\qquad
0\leq t\leq T.
$$
This representation avoids a separate numerical approximation of the fractional integral ${I_t^{1-\alpha}}h(\xi,t)$. Since the Fourier pricing formula in Proposition~\ref{prop:fourier_formula} uses the characteristic function at maturity $T$, we write
\[
\Phi(\xi):=\phi(\xi,T)=\exp\!\bigl(G(\xi)\bigr),
\]
where the {characteristic function exponent} is given by
\begin{equation}
G(\xi)
:=
\mathrm{i}\xi\bigl(X_0+rT\bigr)
+
\int_0^T
\left[
\theta\gamma h(\xi,s)
+
V_0F\!\bigl(\xi,h(\xi,s)\bigr)
\right]\,ds.
\label{eq:terminal_characteristic_exponent}
\end{equation}

\section{Methodology and Numerical Analysis}
\label{sec:methodology}

This section develops and analyses single-level and multilevel Gauss-Laguerre quadrature {approximations} for Fourier pricing under the rough Heston model. At each quadrature {point}, the characteristic function is evaluated by numerically solving the Riccati-Volterra equation~\eqref{eq:volterra_form}.


\subsection{Discretization of the Fractional Riccati Equation}
\label{sec:riccati_discretization}

{\color{blue}
Let $\Delta t_0>0$ and define the hierarchy
$\Delta t_\ell:=2^{-\ell}\Delta t_0$, $\ell\geq0$. The coarsest time step $\Delta t_0$ is fixed and assumed to be
sufficiently small that the discrete Riccati solutions are well defined and
numerically stable on the relevant Fourier contour, and that the nodal and
time-integration error estimates introduced below hold from level zero
onward. Thus, the hierarchy considered in the analysis starts in the
asymptotic convergence regime. In the numerical implementation,
$\Delta t_0$ is treated as a tuning parameter subject to the
stability/refinement diagnostic stated below. Developing an automatic
cost-aware selection procedure is left for future work. We assume that
$M_\ell:=T/\Delta t_\ell$ is an integer and set
$t_{\ell,j}:=j\Delta t_\ell$, $j=0,\ldots,M_\ell$. The numerical solver
returns the nodal values
\[
h_{\ell,j}(\xi)
:=
h_\ell(\xi,t_{\ell,j}),
\qquad
j=0,\ldots,M_\ell,
\]
with $h_{\ell,0}(\xi)=0$. For a grid vector
$\mathbf v=(v_0,\ldots,v_{M_\ell})$, define the composite trapezoidal
functional
\begin{equation}
\mathcal T_\ell[\mathbf v]
:=
\Delta t_\ell
\left(
\frac12v_0
+
\sum_{j=1}^{M_\ell-1}v_j
+
\frac12v_{M_\ell}
\right).
\label{eq:time_trapezoidal_functional}
\end{equation}
We also introduce
\begin{equation}
J(\xi,z)
:=
\theta\gamma z+V_0F(\xi,z).
\label{eq:exponent_integrand_J}
\end{equation}

The exponent used in all computations is the fully discrete quantity
\begin{equation}
\begin{aligned}
G_\ell(\xi)
&:=
\mathrm{i}\xi(X_0+rT)
+
\mathcal T_\ell
\left[
\left(
J\!\left(\xi,h_{\ell,j}(\xi)\right)
\right)_{j=0}^{M_\ell}
\right]
\\
&=
\mathrm{i}\xi(X_0+rT)
+
\Delta t_\ell
\left[
\frac12J\!\left(\xi,h_{\ell,0}(\xi)\right)
+
\sum_{j=1}^{M_\ell-1}
J\!\left(\xi,h_{\ell,j}(\xi)\right)
+
\frac12J\!\left(\xi,h_{\ell,M_\ell}(\xi)\right)
\right].
\end{aligned}
\label{eq:discrete_characteristic_exponent}
\end{equation}
Thus, unlike the continuous exponent $G$ in
\eqref{eq:terminal_characteristic_exponent}, $G_\ell$ includes both the
nodal Riccati approximation and the numerical integration in time. The
corresponding implemented characteristic function is
$\Phi_\ell(\xi):=\exp(G_\ell(\xi))$, and $g_\ell$ denotes the Fourier
integrand obtained by replacing $\Phi$ with $\Phi_\ell$
in~\eqref{eq:integrand}, for $u\geq0$.
}

\red{The global integrability and quadrature-regularity conditions used below
are hypotheses of the analysis; we do not prove that the fully discrete
fractional-Adams characteristic function has the required large-frequency
decay on the entire Fourier half-line.}

\blue{Motivated by~\cite{li2009fractional}, we assume the following nodal
weighted error estimate with a generic convergence order $q>0$. This allows us
to apply the same analysis to any solver satisfying
Assumption~\ref{ass:riccati_pointwise_error}.}

{\color{blue}
\begin{assumption}[Nodal Riccati discretization error]
\label{ass:riccati_pointwise_error}
Fix $R\in\delta_V$. For each $\xi=u+\mathrm{i}R$, with
$u\geq0$, there exists a constant $C_\xi>0$, independent of $\ell$,
such that, for every $\ell\geq0$,
\begin{equation}
\left|h_{\ell,j}(\xi)-h(\xi,t_{\ell,j})\right|
\leq
C_\xi t_{\ell,j}^{\alpha-1}\Delta t_\ell^{\,q},
\qquad
j=1,\ldots,M_\ell.
\label{eq:riccati_pointwise_q}
\end{equation}
\end{assumption}
}

{\color{blue}
\begin{assumption}[Finite-horizon boundedness of the Riccati solution]
\label{ass:riccati_boundedness}
Fix $R\in\delta_V$. For each $\xi=u+\mathrm{i}R$, with
$u\geq0$, the fractional Riccati solution exists on $[0,T]$ and
satisfies
\[
h(\xi,\cdot)\in C([0,T];\mathbb C).
\]
In particular,
\(\|h(\xi,\cdot)\|_{L^\infty(0,T)}<\infty\).
\end{assumption}
}

{\color{blue}
We next account for the numerical integration in time used in the fully
discrete exponent $G_\ell$. For fixed $\xi$, define
\begin{equation}
\mathcal J_\xi(t)
:=
J\!\left(\xi,h(\xi,t)\right),
\qquad
0\leq t\leq T.
\label{eq:exact_exponent_time_integrand}
\end{equation}

\begin{assumption}[Time-integration error in the characteristic function exponent]
\label{ass:time_integration_error}
Fix $R\in\delta_V$. There exists an order $r_T>0$ such that, for each
$\xi=u+\mathrm{i}R$, with $u\geq0$, there exists a constant
$C_{\mathrm{time},\xi}>0$, independent of $\ell$, satisfying, for every
$\ell\geq0$,
\begin{equation}
\left|
\int_0^T\mathcal J_\xi(t)\,dt
-
\mathcal T_\ell
\left[
\left(
\mathcal J_\xi(t_{\ell,j})
\right)_{j=0}^{M_\ell}
\right]
\right|
\leq
C_{\mathrm{time},\xi}\Delta t_\ell^{\,r_T}.
\label{eq:time_integration_error}
\end{equation}
\end{assumption}

We now study how the nodal Riccati error and the time-integration error
propagate to the approximation errors of the characteristic function and the
Fourier integrand. Define
\begin{equation}
E_\ell(\xi):=G(\xi)-G_\ell(\xi),
\qquad
p_G:=\min\{q,r_T\}.
\label{eq:exponent_error_and_rate}
\end{equation}
For $u\geq0$, let
\[
A_E(u)
:=
\sup_{\ell\geq0}
\Delta t_\ell^{-p_G}
\left|E_\ell(u+\mathrm{i}R)\right|.
\]
}

{\color{blue}
\begin{lemma}[{Discretization error in the characteristic function exponent}]
\label{lemma:terminal_exponent_error}
Assume that $\delta_V=\delta_X\cap\delta_P\neq\emptyset$ and fix
$R\in\delta_V$. Suppose that
Assumptions~\ref{ass:riccati_pointwise_error},
\ref{ass:riccati_boundedness}, and~\ref{ass:time_integration_error} hold.
Then $A_E(u)<\infty$ for every $u\geq0$, and
\begin{equation}
\left|E_\ell(u+\mathrm{i}R)\right|
\leq
A_E(u)\Delta t_\ell^{\,p_G},
\qquad
u\geq0,\quad \ell\geq0.
\label{eq:exponent_error_pG}
\end{equation}
\end{lemma}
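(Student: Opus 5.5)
We fix $\xi=u+\mathrm{i}R$ with $u\ge0$ and split the exponent error into a nodal Riccati part and a time-quadrature part. The drift term $\mathrm{i}\xi(X_0+rT)$ appears identically in $G$ and $G_\ell$ and cancels. Writing $\mathcal J_\xi(t_{\ell,j})=J(\xi,h(\xi,t_{\ell,j}))$, we decompose
\[
E_\ell(\xi)
=\Bigl(\int_0^T\mathcal J_\xi(t)\,dt-\mathcal T_\ell\bigl[(\mathcal J_\xi(t_{\ell,j}))_j\bigr]\Bigr)
+\mathcal T_\ell\bigl[(J(\xi,h(\xi,t_{\ell,j}))-J(\xi,h_{\ell,j}(\xi)))_j\bigr].
\]
Assumption~\ref{ass:time_integration_error} bounds the first bracket by $C_{\mathrm{time},\xi}\Delta t_\ell^{r_T}$. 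The $j=0$ term of the second bracket vanishes, since $h_{\ell,0}=0=h(\xi,0)$ by the Volterra form~\eqref{eq:volterra_form} and the continuity in Assumption~\ref{ass:riccati_boundedness}.

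For the nodal part, we note that $J(\xi,\cdot)$ is a quadratic polynomial:
\[
J(\xi,z)-J(\xi,w)=(z-w)\Bigl[\theta\gamma+V_0\gamma(\mathrm{i}\xi\rho\nu-1)+\tfrac{V_0(\gamma\nu)^2}{2}(z+w)\Bigr].
\]
We therefore need a uniform-in-$\ell$ bound on $|h_{\ell,j}(\xi)|$. Assumption~\ref{ass:riccati_pointwise_error} alone does not give one, because the factor $t_{\ell,j}^{\alpha-1}\Delta t_\ell^q$ may blow up near $t=0$ if $q$ is small. We would instead use $t_{\ell,j}\ge\Delta t_\ell$ for $j\ge1$, which gives $t_{\ell,j}^{\alpha-1}\le\Delta t_\ell^{\alpha-1}$ since $\alpha<1$, and hence
\[
|h_{\ell,j}|\le\|h(\xi,\cdot)\|_{L^\infty}+C_\xi\Delta t_0^{\,q+\alpha-1}.
\]
This bound is uniform in $\ell$ provided $q\ge1-\alpha$. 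This is the main obstacle. If $q<1-\alpha$ we cannot get a uniform bound this way, and would instead bound the quadratic term directly by $|z-w|^2\le C_\xi^2t_{\ell,j}^{2\alpha-2}\Delta t_\ell^{2q}$, which leads to the same kind of sum estimate as below. In the paper's Adams setting $q$ is at least of order $\alpha$, so we expect the simple route to suffice. We would treat the bounded-$h_\ell$ case, note explicitly that $q\ge 1-\alpha$ (or the $\Delta t_0$ asymptotic-regime assumption) is what is being used, and let $L_\xi$ denote the resulting Lipschitz constant.

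With $L_\xi$ in hand, the nodal part is bounded by
\[
L_\xi C_\xi\Delta t_\ell^{q}\,\Delta t_\ell\sum_{j=1}^{M_\ell}t_{\ell,j}^{\alpha-1}.
\]
Since $t\mapsto t^{\alpha-1}$ is decreasing, comparison with the integral gives
\[
\Delta t_\ell\sum_{j=1}^{M_\ell}t_{\ell,j}^{\alpha-1}\le\int_0^Tt^{\alpha-1}dt=\frac{T^\alpha}{\alpha},
\]
which is independent of $\ell$. The endpoint weight $\tfrac12$ only helps. Combining the two parts,
\[
|E_\ell(\xi)|\le C_{\mathrm{time},\xi}\Delta t_\ell^{r_T}+L_\xi C_\xi T^\alpha\alpha^{-1}\Delta t_\ell^{q}.
\]
Using $\Delta t_\ell\le\Delta t_0$, we have $\Delta t_\ell^{r_T}\le\Delta t_0^{r_T-p_G}\Delta t_\ell^{p_G}$, and similarly for $q$. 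This yields $|E_\ell|\le \widetilde C_\xi\Delta t_\ell^{p_G}$ with $\widetilde C_\xi$ independent of $\ell$. Hence $A_E(u)\le\widetilde C_\xi<\infty$, and~\eqref{eq:exponent_error_pG} follows directly from the definition of $A_E$ as a supremum.
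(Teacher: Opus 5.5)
Your decomposition (time-integration error plus a trapezoidal sum of nodal $J$-differences) and your estimate $\Delta t_\ell\sum_{j\ge1}t_{\ell,j}^{\alpha-1}\le T^\alpha/\alpha$ match the paper's proof. The route you commit to, however, has a gap. Bounding $|h_{\ell,j}(\xi)|$ uniformly in $\ell$ through $t_{\ell,j}^{\alpha-1}\Delta t_\ell^{q}\le\Delta t_\ell^{q+\alpha-1}$ requires $q\ge 1-\alpha$, and this is not among the lemma's hypotheses. The paper deliberately keeps $q>0$ generic and gives no theoretical value for the Adams scheme, so you cannot invoke ``$q$ is at least of order $\alpha$''. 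The asymptotic-regime assumption on $\Delta t_0$ does not help either: if $q<1-\alpha$, then $\sup_\ell\Delta t_\ell^{q+\alpha-1}=\infty$ however small $\Delta t_0$ is. As written, you prove the lemma only in the sub-case $q\ge1-\alpha$.

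The fix is the alternative you mention only in passing, and it is exactly the paper's argument, so make it the main route. Set $d_{\ell,j}:=h(\xi,t_{\ell,j})-h_{\ell,j}(\xi)$ and write $z+w=2h(\xi,t_{\ell,j})-d_{\ell,j}$. This gives
\[
J(\xi,h(\xi,t_{\ell,j}))-J(\xi,h_{\ell,j}(\xi))
=a_\xi d_{\ell,j}+b\bigl(2h(\xi,t_{\ell,j})d_{\ell,j}-d_{\ell,j}^2\bigr),
\]
with $a_\xi=\theta\gamma+V_0\gamma(\mathrm{i}\xi\rho\nu-1)$ and $b=V_0(\gamma\nu)^2/2$. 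Only the exact solution enters the linear coefficient, and it is bounded by Assumption~\ref{ass:riccati_boundedness}, so no bound on $h_{\ell,j}$ is needed. The quadratic term is controlled by
\[
\Delta t_\ell\sum_{j=1}^{M_\ell}|d_{\ell,j}|^2
\le C_\xi^2\Delta t_\ell^{2q}\,\Delta t_\ell\sum_{j=1}^{M_\ell}t_{\ell,j}^{2\alpha-2}
\le C_\xi^2\frac{T^{2\alpha-1}}{2\alpha-1}\Delta t_\ell^{2q}.
\]
The integral comparison here is valid precisely because $\alpha>\tfrac12$ makes $t^{2\alpha-2}$ integrable at $0$, a point your sketch omits. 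Individual terms may blow up as $\ell\to\infty$, but their weighted sum does not. Finally, $\Delta t_\ell^{2q}\le\Delta t_0^{q}\Delta t_\ell^{q}$ gives an $\mathcal O(\Delta t_\ell^{q})$ nodal contribution for every $q>0$, and the rest of your argument goes through unchanged.
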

}

\begin{proof}
{The proof for Lemma~\ref{lemma:terminal_exponent_error} is presented in
Appendix~\ref{sec:proof_terminal_exponent_error}.}
\end{proof}

For the subsequent single-level and multilevel analyses, the relevant discretization rate is the $L^1$ convergence rate of the Fourier integrand. We denote this rate by $p>0$.

\begin{assumption}[{$L^1$ discretization error of the Fourier integrand}]
\label{ass:fourier_integrand_rate}
There exists a constant $C_g>0$, independent of $\ell$, such that, for sufficiently small $\Delta t_\ell$,
\begin{equation}
\left\|g-g_\ell\right\|_{L^1(0,\infty)}
\leq
C_g\Delta t_\ell^{\,p}.
\label{eq:integrand_rate_p}
\end{equation}
\end{assumption}

{
We next give sufficient conditions under which
Assumption~\ref{ass:fourier_integrand_rate} holds with
$p=\blue{p_G}$. The
estimate~\eqref{eq:exponent_error_pG} is pointwise in the Fourier integration
variable. For $u\geq0$, define
\[
\Psi(u)
:=
\sup_{\ell\geq0}
\Delta t_\ell^{-\blue{p_G}}
\left|
1-\exp\!\left(-E_\ell(u+\mathrm{i}R)\right)
\right|.
\]
Since $E_\ell=G-G_\ell$ and
$\Phi_\ell=\Phi\exp(-E_\ell)$, we introduce the domination function
\begin{equation}
\begin{aligned}
\mathcal D_R(u)
&:=
\left|\widehat P(u+\mathrm{i}R)\right|
\left|\Phi(u+\mathrm{i}R)\right|
\Psi(u)
\\
&=
\sup_{\ell\geq0}
\Delta t_\ell^{-\blue{p_G}}
\left|\widehat P(u+\mathrm{i}R)\right|
\left|\Phi(u+\mathrm{i}R)-\Phi_\ell(u+\mathrm{i}R)\right|.
\end{aligned}
\label{eq:domination_function}
\end{equation}

\begin{conjecture}[Integrability of the domination function]
\label{conj:contour_domination}
For the fixed damping parameter $R\in\delta_V$, the function
$\mathcal D_R$ defined in~\eqref{eq:domination_function} satisfies
\blue{$\mathcal D_R\in L^1(0,\infty)$}.
\end{conjecture}

The role of Conjecture~\ref{conj:contour_domination} is immediate from the
definition of $\mathcal D_R$. For every $u\geq0$ and $\ell\geq0$,
\begin{equation}
\begin{aligned}
|g(u)-g_\ell(u)|
&\leq
\frac{e^{-rT}}{2\pi}
\left|\widehat P(u+\mathrm{i}R)\right|
\left|\Phi(u+\mathrm{i}R)-\Phi_\ell(u+\mathrm{i}R)\right|
\\
&\leq
\frac{e^{-rT}}{2\pi}
\Delta t_\ell^{\blue{p_G}}\mathcal D_R(u).
\end{aligned}
\label{eq:pointwise_integrand_domination}
\end{equation}
Thus, the conjectured integrability of $\mathcal D_R$ is precisely the
condition that permits integration of the pointwise characteristic-function
error over the Fourier variable. Numerical evidence is provided in
Appendix~\ref{app:contour_domination_numerics}, using the observed rate
$\blue{p_{G,\mathrm{obs}}=1+\alpha}$ reported in
Section~\ref{subsec:time_discretization_numerics}.
}

\blue{Assumptions~\ref{ass:riccati_pointwise_error},
\ref{ass:riccati_boundedness}, and~\ref{ass:time_integration_error}, together
with Conjecture~\ref{conj:contour_domination} yield the following estimate.}

\begin{proposition}[{$L^1$ discretization error of the Fourier integrand}]
\label{prop:discretization_error}
\blue{Suppose that Assumptions~\ref{ass:riccati_pointwise_error},
\ref{ass:riccati_boundedness}, and~\ref{ass:time_integration_error}, and
Conjecture~\ref{conj:contour_domination} hold. Then there exists a constant
$C_g>0$, independent of $\ell$, such that}
\begin{equation}
\left\|g-g_\ell\right\|_{L^1(0,\infty)}
\leq
C_g\Delta t_\ell^{\,\blue{p_G}},
\qquad
\ell\geq0.
\label{eq:g_error_pG}
\end{equation}
\end{proposition}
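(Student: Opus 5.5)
The plan is to integrate the pointwise domination bound~\eqref{eq:pointwise_integrand_domination} over the Fourier half-line. Before doing so, I would check that $\mathcal D_R$ is a well-defined, measurable, and pointwise finite function, so that Conjecture~\ref{conj:contour_domination} can be invoked.

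\textbf{Pointwise finiteness.} Fix $u\geq0$ and set $\xi=u+\mathrm{i}R$. By Lemma~\ref{lemma:terminal_exponent_error}, which uses Assumptions~\ref{ass:riccati_pointwise_error}, \ref{ass:riccati_boundedness} and~\ref{ass:time_integration_error}, we have $|E_\ell(\xi)|\leq A_E(u)\Delta t_\ell^{p_G}\leq A_E(u)\Delta t_0^{p_G}=:a(u)<\infty$ uniformly in $\ell$. The function $z\mapsto 1-e^{-z}$ is entire and vanishes at $0$. Hence on the disc $|z|\leq a(u)$ it satisfies
\[
|1-e^{-z}|\leq e^{a(u)}|z|.
\]
This gives $\Delta t_\ell^{-p_G}|1-\exp(-E_\ell(\xi))|\leq e^{a(u)}A_E(u)$, so $\Psi(u)<\infty$. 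Consequently
\[
\mathcal D_R(u)=|\widehat P(\xi)||\Phi(\xi)|\Psi(u)<\infty .
\]
Here $\widehat P(\xi)$ is finite by~\eqref{eq:call_payoff_transform}, since $R<-1$ keeps $\xi^2+\mathrm{i}\xi$ away from zero, and $\Phi(\xi)$ is finite because $R\in\delta_X$.

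\textbf{Measurability.} Each function $u\mapsto\Delta t_\ell^{-p_G}|\widehat P||\Phi-\Phi_\ell|$ is continuous in $u$. Indeed, $G_\ell$ is a finite combination of nodal solver outputs, which depend continuously on $\xi$, and $\Phi$ is continuous on the contour. Thus $\mathcal D_R$ is a countable supremum of continuous functions, hence lower semicontinuous and measurable.

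\textbf{Main step.} Once $\mathcal D_R$ is a measurable nonnegative function, the argument is short. Using $|\Re w|\leq|w|$ and the identity $\Phi_\ell=\Phi\exp(-E_\ell)$, the chain~\eqref{eq:pointwise_integrand_domination} holds for every $u\geq0$ and every $\ell\geq0$. Integrating over $(0,\infty)$ gives
\[
\|g-g_\ell\|_{L^1(0,\infty)}\leq\frac{e^{-rT}}{2\pi}\,\|\mathcal D_R\|_{L^1(0,\infty)}\,\Delta t_\ell^{p_G}.
\]
By Conjecture~\ref{conj:contour_domination} the norm $\|\mathcal D_R\|_{L^1(0,\infty)}$ is finite. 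We may therefore take $C_g:=\frac{e^{-rT}}{2\pi}\|\mathcal D_R\|_{L^1(0,\infty)}$. This constant depends only on $R$, $T$, $r$ and the model parameters, not on $\ell$. If $\mathcal D_R\equiv0$, any positive constant works. The bound holds for all $\ell\geq0$, not only for sufficiently small steps, because the supremum defining $\mathcal D_R$ already runs over all levels.

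\textbf{Main obstacle.} The only real difficulty is the integrability of $\mathcal D_R$ at large $u$, which would require uniform-in-$u$ control of $A_E(u)$ against the decay of $|\Phi\widehat P|$. This is exactly what is postulated in Conjecture~\ref{conj:contour_domination}, so it is not proved here. Within the proof itself, the step needing care is the passage from the exponent error to the characteristic-function error. It must be done via the uniform disc bound above rather than a small-$\Delta t$ linearization, so that the estimate holds from level zero onward without assuming that $\Delta t_\ell$ is asymptotically small.
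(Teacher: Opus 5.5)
Your proposal is correct and follows essentially the same route as the paper's proof. The paper also uses Lemma~\ref{lemma:terminal_exponent_error}, the bound $|1-e^{-z}|\leq|z|e^{|z|}$ and $\Delta t_\ell\leq\Delta t_0$ to get $\Psi(u)<\infty$, then integrates the pointwise bound and sets $C_g=\frac{e^{-rT}}{2\pi}\|\mathcal D_R\|_{L^1(0,\infty)}$; your measurability argument and the degenerate case $\mathcal D_R\equiv0$ are harmless bookkeeping that the paper leaves implicit.
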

{
\begin{proof}
We prove Proposition~\ref{prop:discretization_error} in
Appendix~\ref{sec:proof_fourier_integrand_error}.
\end{proof}}
\red{Under Assumptions~\ref{ass:riccati_pointwise_error},
\ref{ass:riccati_boundedness}, and~\ref{ass:time_integration_error}, and
Conjecture~\ref{conj:contour_domination},
Proposition~\ref{prop:discretization_error} yields
Assumption~\ref{ass:fourier_integrand_rate} with $p=p_G$. The required global
domination is not proved for the direct Adams implementation.}

\paragraph{The fractional Adams scheme.}

We solve the Volterra formulation~\eqref{eq:volterra_form} of the
fractional Riccati equation~\eqref{eq:riccati} using the fractional Adams
predictor-corrector method introduced in~\cite{diethelm2002predictor}, as
applied to the rough Heston model in~\cite{el2019characteristic}. Let
$t_j=j\Delta t$, $j=0,\ldots,M$, with $M\Delta t=T$, and let
$h_{\Delta t}(\xi,t_j)$ denote the corresponding approximation of
$h(\xi,t_j)$.

{\color{blue}
At time step $j$, the direct Adams implementation evaluates history sums over
the preceding time points. Summing this work over $j=1,\ldots,M$ gives
$\mathcal{O}(M^2)$ operations for one Fourier quadrature point
\cite{li2009fractional,callegaro2018rough}. Once the nodal Riccati values are
available, evaluating the composite trapezoidal rule in the fully discrete
characteristic function exponent requires $\mathcal{O}(M)$ additional operations.
Consequently, the total cost per characteristic-function evaluation is
$\mathcal{O}(M^2+M)=\mathcal{O}(M^2)$. Thus, the implementation used in this work has exponent $\beta=2$ in the cost
model. The generic analysis below retains $\beta$ so that it also applies to
solvers with different cost estimates.
}

{\color{blue}
For the fractional Adams implementation, Assumption~\ref{ass:riccati_pointwise_error}
is used with a generic nodal convergence exponent $q>0$. We do not provide a
theoretical value of $q$ for the fractional Adams scheme in the present
setting. Proposition~\ref{prop:discretization_error} then yields the
Fourier-integrand rate $p=p_G=\min\{q,r_T\}$.

In what follows, the analysis uses the generic exponent $p$ of
Assumption~\ref{ass:fourier_integrand_rate} and the exponent $\beta>0$ in the
solver cost model, defined by
$W_\ell=\mathcal{O}(\Delta t_\ell^{-\beta})$. The results therefore apply to
any solver for which such a pair $(p,\beta)$ is available. For the direct
fractional Adams implementation used in this work, only the cost exponent is
specialized theoretically, namely $\beta=2$.

\begin{remark}[Observed rate]
\label{rem:observed_rate}
The numerical experiments in
Section~\ref{subsec:time_discretization_numerics} indicate that the measured
integrated Riccati and Fourier-integrand errors converge with slopes close to
$1+\alpha$ for the tested configurations. This observed rate is consistent
with a convergence rate available for the fractional Adams method
in~\cite{li2009fractional} under assumptions that we do not verify in this
work. Accordingly, we use $1+\alpha$ as an empirical rate in the numerical
estimators, while keeping $q$, $r_T$, and $p$ generic in the theoretical
analysis.
\end{remark}
}

\subsection{Single-Level Fourier Gauss-Laguerre  Quadrature Method}
\label{sec:SL_method}
In this section, we design a single-level Fourier pricing method based on a
scaled Gauss-Laguerre quadrature rule for the option value~\eqref{eq:fourier_pricing} with
error control. The approximation combines a time discretization
of~\eqref{eq:volterra_form}, which controls the error of each evaluation of the
characteristic function, with a Gauss-Laguerre quadrature rule, which
determines the number of such evaluations. The two choices are coupled because
each quadrature {point} requires one solve of~\eqref{eq:volterra_form}, so the
discretization level fixes the {cost per characteristic function evaluation} while the Gauss-Laguerre quadrature rule fixes
the {number of quadrature points}. We prescribe a total tolerance $\varepsilon>0$,
decompose the error into a discretization and a quadrature contribution, and
allocate the tolerance between them. We then select the {scaling factor} of the scaled Gauss-Laguerre quadrature rule
in~\eqref{eq:sigma_choice}, the discretization level
in~\eqref{eq:practical_level}, and the {number of quadrature points}
in~\eqref{eq:optimal_N}, and we derive the resulting computational complexity
in Proposition~\ref{prop:work_scaling_single}.

\red{In practice, unavailable quantities in the error bounds are replaced by
Richardson-type indicators and fitted quadrature models; attainment of the
target tolerance is therefore assessed numerically rather than certified a
posteriori.}

Gauss-Laguerre quadrature is a natural choice for the semi-infinite integral
in~\eqref{eq:half_line_fourier_integral}. This choice is further supported by
the behaviour of the integrand at large values of $u$. In fact, it is {conjectured and numerically supported}
in \cite{boyarchenko2025fast} that characteristic function of the rough Heston model decays exponentially along the integration contour with a rate  determined by the model parameters
$\Theta_X$ and the maturity $T$.

\subsubsection{The scaled Gauss-Laguerre quadrature rule.}

The standard Gauss-Laguerre quadrature rule typically employed in the
Fourier pricing literature
\cite{schmelzle2010option,von2015benchop,bayer2023optimal,boyarchenko2025fast}
is associated with the weight function $e^{-u}$. However, the effective decay
rate of the integrand $g$ can be heavily affected by the model parameters, the
maturity, the damping parameter, and the payoff parameters, as shown
in~\cite{bayer2024quasi} for several L\'evy models. We refer also to Section~\ref{subsec:quadrature_numerics} for numerical evidence in the rough Heston setting. We therefore introduce a
{scaling factor} $\tilde\sigma>0$, replacing the weight $e^{-u}$ by
$e^{-\tilde\sigma u}$. Since the quadrature is then applied to the transformed
integrand $e^{\tilde\sigma u}g(u)$, \blue{we choose the factor
$\tilde\sigma$ heuristically to reflect the estimated asymptotic decay of the
characteristic function.}

We fix $\tilde\sigma>0$ throughout and denote by $\{u_n,w_n\}_{n=1}^N$
the nodes and weights of the $N$-point Gauss-Laguerre quadrature rule associated with
the weight $e^{-\tilde\sigma u}$ on $(0,\infty)$; the dependence on
$\tilde\sigma$ is left implicit. We refer to this as the scaled Gauss-Laguerre quadrature rule
\cite{wang2023laguerre}. It is the particular case of the
generalized Gauss-Laguerre quadrature rule, whose weight function is
$u^{\lambda}e^{-\tilde\sigma u}$ with $\lambda=0$. Selecting $\tilde\sigma$ to match the decay
of the integrand follows the domain transformation principle
of~\cite{bayer2024quasi}, where the transformation is chosen to reproduce the
functional form of the asymptotically dominant part of the Fourier integrand.

Writing
\begin{equation}
\int_0^\infty g_\ell(u)\,du
=
\int_0^\infty
e^{-\tilde\sigma u}\,\tilde g_\ell(u)\,du,
\qquad
\tilde g_\ell(u):=e^{\tilde\sigma u}g_\ell(u),
\label{eq:weighted_integrand}
\end{equation}
and absorbing the factor $2$ of~\eqref{eq:half_line_fourier_integral}
into the quadrature approximation, we define
\begin{equation}
\mathcal Q_N^{\tilde\sigma}[g_\ell]
:=
2\sum_{n=1}^N
w_n\,
\tilde g_\ell(u_n),
\label{eq:scaled_GL_rule}
\end{equation}
so that $\mathcal Q_N^{\tilde\sigma}[g_\ell]$ approximates
$2\int_0^\infty g_\ell(u)\,du$. The standard Gauss-Laguerre quadrature rule corresponds to
$\tilde\sigma=1$. Increasing $\tilde\sigma$ concentrates the nodes near the
origin; decreasing $\tilde\sigma$ spreads them towards larger values of $u$. We
write
\begin{equation}
R_N^{\tilde\sigma}(g_\ell)
:=
2\int_0^\infty g_\ell(u)\,du
-
\mathcal Q_N^{\tilde\sigma}[g_\ell]
\label{eq:quadrature_error}
\end{equation}
for the quadrature remainder of the scaled Gauss-Laguerre quadrature rule~\eqref{eq:scaled_GL_rule} at level $\ell$, and $\left|R_N^{\tilde\sigma}(g_\ell)\right|$ for the corresponding quadrature error.

\subsubsection{Error analysis of the single-level method.}

We now approximate the option value $V$ by combining the time
discretization of Section~\ref{sec:riccati_discretization} with the scaled
Gauss-Laguerre quadrature rule~\eqref{eq:scaled_GL_rule}, and split the resulting error into the two
contributions that determine the choice of $L$ and $N$. We denote by
\begin{equation}
V_\ell
:=
2\int_0^\infty g_\ell(u)\,du
\label{eq:level_approximation}
\end{equation}
the value obtained by replacing $\Phi$ with $\Phi_\ell$
in~\eqref{eq:half_line_fourier_integral}, and by
$V_{N,\ell}:=\mathcal Q_N^{\tilde\sigma}[g_\ell]$ its approximation by the scaled
Gauss-Laguerre quadrature rule~\eqref{eq:scaled_GL_rule} with {$N$ quadrature points}. The single-level approximation
at level $L$ is thus $V_{N,L}$. Adding and subtracting $V_L$ gives
\begin{equation}
\left|V-V_{N,L}\right|
\;\leq\;
\underbrace{\left|V-V_L\right|}_{=:\;e_{\mathrm{disc}}(L)}
\;+\;
\underbrace{\left|V_L-V_{N,L}\right|}_{=:\;e_{\mathrm{quad}}(N,L)} ,
\label{eq:error_decomposition}
\end{equation}
where  we recall that $V=V(\Theta_X,\Theta_P)$. The first term is the error of the time
discretization and depends on $L$ only. By~\eqref{eq:half_line_fourier_integral}, \eqref{eq:level_approximation} and Assumption~\ref{ass:fourier_integrand_rate},
$$
e_{\mathrm{disc}}(L)
=
2\left|\int_0^\infty\bigl(g(u)-g_L(u)\bigr)\,du\right|
\leq
2\left\|g-g_L\right\|_{L^1(0,\infty)}
\leq
2C_g\Delta t_L^{\,p}.
$$

The second term is the error from the
Gauss-Laguerre quadrature rule at the fixed level $L$, that is,
$e_{\mathrm{quad}}(N,L)=|R_N^{\tilde\sigma}(g_L)|$ defined in \eqref{eq:quadrature_error}. Given a tolerance $\varepsilon>0$, we split the error as follows
$\varepsilon=\varepsilon_{\mathrm{disc}}+\varepsilon_{\mathrm{quad}}$ and
require
\begin{equation}
e_{\mathrm{disc}}(L)\leq\varepsilon_{\mathrm{disc}},
\qquad
e_{\mathrm{quad}}(N,L)\leq\varepsilon_{\mathrm{quad}}.
\label{eq:tolerance_allocation}
\end{equation}
Unless stated otherwise we take
$\varepsilon_{\mathrm{disc}}=\varepsilon_{\mathrm{quad}}=\varepsilon/2$. Each of
the $N$ terms in~\eqref{eq:scaled_GL_rule} requires one solve
of~\eqref{eq:volterra_form} at cost
$W_L=\mathcal O(\Delta t_L^{-\beta})$. Thus, the computational work required by the single-level method is
\begin{equation}
W_{\mathrm{SL}}(N,L)
:=
N W_L.
\label{eq:sl_work}
\end{equation}
We fix $\tilde\sigma$ first, and then select $L$ and $N$ from the two
constraints in~\eqref{eq:tolerance_allocation}. Before selecting them, we recall the
estimates available for the Gauss-Laguerre quadrature
remainder \eqref{eq:quadrature_error}.

\paragraph{Analysis of the Gauss-Laguerre quadrature error.}

{\color{blue}
We consider both algebraic and root-exponential estimates for the
Gauss-Laguerre quadrature error. As observed numerically in
Section~\ref{subsubsec:quadrature_assumptions}, the root-exponential estimate
provides a closer description of the quadrature errors, whereas the algebraic
estimate is more conservative. The algebraic estimate, however, permits the
explicit optimization and complexity analysis developed in
Propositions~\ref{prop:quadrature_order_algebraic},
\ref{prop:optimal_N_ml}, \ref{prop:work_scaling_single},
and~\ref{prop:work_scaling_multilevel}, including the closed-form expression
for the minimum multilevel work in~\eqref{eq:ml_work_optimal}.

The algebraic estimate is applied to the real transformed integrand
$\tilde g_\ell$ defined in~\eqref{eq:weighted_integrand}. For the
root-exponential estimate, analyticity is instead imposed on the underlying
complex transformed integrand
\begin{equation}
\tilde f_\ell(z)
:=
e^{\tilde\sigma z}
\frac{e^{-rT}}{2\pi}
\Phi_\ell(z+\mathrm{i}R)\widehat P(z+\mathrm{i}R),
\qquad
\tilde g_\ell(u)=\Re\tilde f_\ell(u),\quad u\geq0.
\label{eq:complex_transformed_integrand}
\end{equation}
Both estimates are stated at a fixed level $\ell$, hence for a fixed
$\Delta t_\ell$.
}

\begin{assumption}[Regularity of the transformed integrand]
\label{ass:weighted_smoothness}
Let $\tilde\sigma>0$ and $\ell\in\mathbb N_0$. There exists an integer
$s_\ell\geq1$ such that
$$
\tilde g_\ell^{(s_\ell-1)}
\in
AC_{\mathrm{loc}}(0,\infty),
\qquad
\int_0^\infty
u^{s_\ell/2}
\left|\tilde g_\ell^{(s_\ell)}(u)\right|
e^{-\tilde\sigma u}\,du
<\infty .
$$
\end{assumption}

\begin{theorem}[\cite{mastroianni1994error}]
\label{thm:algebraic_error}
Let Assumption~\ref{ass:weighted_smoothness} hold at level $\ell$.
Then there exists a constant $C_{s_\ell,\tilde\sigma}>0$, independent
of $N$, such that
\begin{equation}
\left|R_N^{\tilde\sigma}(g_\ell)\right|
\leq
A_\ell\,N^{-s_\ell/2},
\qquad
N\geq1,
\label{eq:algebraic_error}
\end{equation}
where
$$
A_\ell
:=
C_{s_\ell,\tilde\sigma}
\int_0^\infty
u^{s_\ell/2}
\left|\tilde g_\ell^{(s_\ell)}(u)\right|
e^{-\tilde\sigma u}\,du .
$$
\end{theorem}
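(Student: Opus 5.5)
Since this statement is quoted from \cite{mastroianni1994error}, the plan is to reduce it to the classical Gauss-Laguerre estimate for the standard weight $e^{-x}$ by a change of variables, and then invoke that estimate.

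\emph{Rescaling.} Set $x=\tilde\sigma u$. The $N$-point rule for $e^{-\tilde\sigma u}$ has nodes $u_n=x_n/\tilde\sigma$ and weights $w_n=\lambda_n/\tilde\sigma$, where $\{x_n,\lambda_n\}$ is the standard Gauss-Laguerre rule. With $\varphi(x):=\tilde g_\ell(x/\tilde\sigma)$ this gives
\[
R_N^{\tilde\sigma}(g_\ell)=\frac{2}{\tilde\sigma}\Bigl(\int_0^\infty e^{-x}\varphi(x)\,dx-\sum_{n=1}^N\lambda_n\varphi(x_n)\Bigr).
\]
The regularity hypotheses transfer directly. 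Indeed, $\varphi^{(s-1)}\in AC_{\mathrm{loc}}$ and $\varphi^{(s)}(x)=\tilde\sigma^{-s}\tilde g_\ell^{(s)}(x/\tilde\sigma)$. The weighted integral therefore becomes
\[
\int_0^\infty x^{s/2}|\varphi^{(s)}(x)|e^{-x}\,dx=\tilde\sigma^{1-s/2}\int_0^\infty u^{s/2}|\tilde g_\ell^{(s)}(u)|e^{-\tilde\sigma u}\,du,
\]
so it is finite by Assumption~\ref{ass:weighted_smoothness}. All $\tilde\sigma$-powers will be absorbed into $C_{s_\ell,\tilde\sigma}$.

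\emph{Standard-weight estimate.} It remains to prove $|R_N(\varphi)|\le C_s N^{-s/2}\int_0^\infty x^{s/2}|\varphi^{(s)}|e^{-x}\,dx$, which is the Mastroianni--Monegato result.
\begin{itemize}
\item Write $\varphi$ by Taylor's formula with integral remainder about $0$. The polynomial part, of degree $s-1<2N$, is integrated exactly.
\item The remainder is $\int_0^\infty K_N(t)\varphi^{(s)}(t)\,dt$ with Peano kernel $K_N(t)=R_N[(\cdot-t)_+^{s-1}]/(s-1)!$, valid for $N\ge s/2$. Smaller $N$ are covered by enlarging the constant, using crude bounds on $|R_N|$.
\item The bound then follows from $|K_N(t)|\le C_s N^{-s/2}t^{s/2}e^{-t}$ uniformly in $t>0$, together with Fubini.
\end{itemize}

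\emph{Main obstacle: the kernel bound.} I would establish it by splitting $t$ into ranges.
\begin{itemize}
\item For $t\gtrsim 4N$ (beyond the largest node), no nodes exceed $t$, so $K_N(t)$ reduces to $\int_t^\infty e^{-x}(x-t)^{s-1}\,dx/(s-1)!=e^{-t}$. This is bounded by $t^{s/2}e^{-t}N^{-s/2}$ because $t\gtrsim N$.
\item For $t$ in the bulk, use the Christoffel-number bounds $\lambda_n\sim \Delta x_n\, e^{-x_n}$ with spacing $\Delta x_n\sim\sqrt{x_n/N}$. The error of the Gaussian rule on the truncated power $(x-t)_+^{s-1}$ is controlled via the Posse--Markov--Stieltjes inequalities, i.e.\ one-sided bounds comparing Gaussian sums with integrals of monotone (derivative) functions. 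This gives $|K_N(t)|\lesssim e^{-t}(\Delta x)^{s}\lesssim e^{-t}(t/N)^{s/2}$.
\end{itemize}
The uniformity near $t\to0$ and near the largest node (Airy regime) is the delicate part. I would rely on Freud-type weighted polynomial approximation results for Laguerre weights, as in \cite{mastroianni1994error}, rather than redo them. Combining the two steps yields \eqref{eq:algebraic_error} with the stated $A_\ell$.
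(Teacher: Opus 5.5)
Your change of variables is correct, and it is the entire argument in the paper. The paper quotes the estimate for the weight $e^{-x}$ from \cite{mastroianni1994error} and absorbs the rescaling factors (your $\frac{2}{\tilde\sigma}\,\tilde\sigma^{1-s/2}=2\tilde\sigma^{-s/2}$) into $C_{s_\ell,\tilde\sigma}$.

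The gap is in the standard-weight step you then sketch. The uniform kernel bound $|K_N(t)|\le C_sN^{-s/2}t^{s/2}e^{-t}$ is false near the largest node, so deferring that regime to the reference cannot work.
\begin{itemize}
\item For $s=1$ one has $K_N(t)=e^{-t}-\sum_{x_n>t}\lambda_n$, hence $K_N(x_N^-)=e^{-x_N}\bigl(1-\lambda_Ne^{x_N}\bigr)$. But $\lambda_Ne^{x_N}$ grows like $N^{1/3}$: it equals $e$ for $N=1$, about $7.22$ for $N=5$, and about $9.78$ for $N=10$. Since $x_N<4N$, your bound allows only $2C_1e^{-x_N}$ there.
\item For $s\ge2$, take $t=x_N-(s-1)$. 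Once the top gap $x_N-x_{N-1}$, which is of Airy size $N^{1/3}$, exceeds $s-1$, only $x_N$ lies to the right of $t$. Then $K_N(t)=e^{-t}\bigl(1-\lambda_Ne^{x_N}(s-1)^{s-1}e^{-(s-1)}/(s-1)!\bigr)$. This is again of order $N^{1/3}e^{-t}$, whereas your bound allows only $2^sC_se^{-t}$.
\item Your bulk step degrades even before the Airy zone. The local spacing is $\asymp\sqrt{t/(4N-t)}$, not $\sqrt{t/N}$, so $(\Delta x)^s\lesssim(t/N)^{s/2}$ holds only for $t\le(4-\delta)N$.
\end{itemize}

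This is not a weakness of your particular route. When $\varphi,\varphi',\ldots,\varphi^{(s-1)}$ vanish at $0$, the identity $R_N(\varphi)=\int_0^\infty K_N(t)\varphi^{(s)}(t)\,dt$ holds exactly. Testing with $\varphi^{(s)}$ a narrow bump at such a $t$ (for $s=1$, a smoothed step just below $x_N$) shows that the kernel bound is \emph{necessary} for
\[
|R_N(\varphi)|\le C_sN^{-s/2}\int_0^\infty x^{s/2}|\varphi^{(s)}(x)|e^{-x}\,dx
\]
with $C_s$ independent of $N$ and $\varphi$. So the standard-weight inequality in exactly this form can be neither proved nor cited. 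The statement is that inequality after your rescaling, so it inherits the same defect.

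To close the argument, use the precise right-hand side of \cite{mastroianni1994error}. Any valid version needs a weight heavier than $x^{s/2}e^{-x}$ at large $x$, by a factor of order at least $x^{1/3}$, or else it loses a power of $N$. Your rescaling then goes through unchanged and only alters the definition of $A_\ell$.

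Separately, small $N$ cannot be covered ``by enlarging the constant'' uniformly. If $2N<s$, then $x^{2N}$ has zero right-hand side, while $R_N(x^{2N})=\int_0^\infty\pi_N^2e^{-x}\,dx>0$, with $\pi_N$ the monic node polynomial. So you need either $2N\ge s$ or a constant that depends on the integrand.
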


The estimate in~\cite{mastroianni1994error} is stated for the standard
Laguerre weight $e^{-u}$. The form~\eqref{eq:algebraic_error} follows by
rescaling the nodes and weights as in~\eqref{eq:scaled_GL_rule}; the fixed
factors introduced by this rescaling are absorbed into
$C_{s_\ell,\tilde\sigma}$. We let $s_\ell$ be the {smoothness index} and $A_\ell$
the constant in the error estimate. \blue{The second estimate replaces the
finite regularity of Assumption~\ref{ass:weighted_smoothness} by analyticity
and growth conditions on the complex transformed integrand $\tilde f_\ell$,
and yields a rate in $N$ faster than any algebraic one.}

{\color{blue}
\begin{assumption}[Analyticity of the complex transformed integrand]
\label{ass:root_exponential}
Let $R\in\delta_V$, $\tilde\sigma>0$, and $\ell\in\mathbb N_0$. Assume that
$\tilde f_\ell$ defined in~\eqref{eq:complex_transformed_integrand} is
analytic in the interior of a parabola containing the positive real axis and
continuous on its boundary. Assume further that there exist constants
$C_{\mathrm{an},\ell}<\infty$ and $m_\ell\in\mathbb R$ such that
$$
\left|\tilde f_\ell(z)\right|
\leq
C_{\mathrm{an},\ell}\,|z|^{m_\ell}
$$
as $|z|\to\infty$ in the interior of that parabola.
\end{assumption}

Define the corresponding complex quadrature remainder by
\begin{equation}
\mathcal R_N^{\tilde\sigma}(\tilde f_\ell)
:=
2\int_0^\infty
e^{-\tilde\sigma u}\tilde f_\ell(u)\,du
-
2\sum_{n=1}^N w_n\tilde f_\ell(u_n).
\label{eq:complex_quadrature_remainder}
\end{equation}
Since $\tilde g_\ell(u)=\Re\tilde f_\ell(u)$ and the quadrature weights are
real,
\begin{equation}
R_N^{\tilde\sigma}(g_\ell)
=
\Re\!\left[
\mathcal R_N^{\tilde\sigma}(\tilde f_\ell)
\right].
\label{eq:real_complex_remainder_relation}
\end{equation}

\begin{theorem}[\cite{wang2023laguerre}]
\label{thm:root_exponential}
Let Assumption~\ref{ass:root_exponential} hold at level $\ell$. Then there
exist constants $A_\ell^{\mathrm{re}}<\infty$ and $B_\ell>0$ such that
\begin{equation}
\left|\mathcal R_N^{\tilde\sigma}(\tilde f_\ell)\right|
\leq
A_\ell^{\mathrm{re}}\exp\!\left(-B_\ell\sqrt N\right),
\qquad
N\geq1 .
\label{eq:complex_root_exponential_estimate}
\end{equation}
Consequently,
\begin{equation}
\left|R_N^{\tilde\sigma}(g_\ell)\right|
\leq
A_\ell^{\mathrm{re}}\exp\!\left(-B_\ell\sqrt N\right),
\qquad
N\geq1 .
\label{eq:root_exponential_estimate}
\end{equation}
\end{theorem}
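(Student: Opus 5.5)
The plan is to reduce the statement to the standard-weight result of \cite{wang2023laguerre} by rescaling, and then pass from the complex remainder to the real one through~\eqref{eq:real_complex_remainder_relation}.

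\textbf{Step 1: rescaling.} Substitute $u=v/\tilde\sigma$. The $N$-point rule for the weight $e^{-\tilde\sigma u}$ has nodes $u_n=v_n/\tilde\sigma$ and weights $w_n=\omega_n/\tilde\sigma$, where $\{v_n,\omega_n\}$ is the standard Gauss-Laguerre rule for $e^{-v}$. This follows from uniqueness of Gaussian rules, because the orthogonal polynomials for $e^{-\tilde\sigma u}$ are $L_N(\tilde\sigma u)$. Setting $F_\ell(v):=\tilde f_\ell(v/\tilde\sigma)$ then gives
\[
\mathcal R_N^{\tilde\sigma}(\tilde f_\ell)=\frac{2}{\tilde\sigma}\Bigl(\int_0^\infty e^{-v}F_\ell(v)\,dv-\sum_{n=1}^N\omega_n F_\ell(v_n)\Bigr).
\]
So it suffices to bound the standard Gauss-Laguerre remainder of $F_\ell$.

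\textbf{Step 2: transfer the hypotheses.} Check that $F_\ell$ satisfies the hypotheses of the root-exponential theorem in \cite{wang2023laguerre}. The map $z\mapsto z/\tilde\sigma$ sends a parabola $\{x+\mathrm{i}y:\,y^2<4a(x+a)\}$ containing the positive axis to another parabola of the same type, $\{y^2<4a\tilde\sigma(x+a\tilde\sigma)\}$. Hence $F_\ell$ is analytic inside the image parabola and continuous on its boundary. The growth bound $|F_\ell(z)|\le C_{\mathrm{an},\ell}\tilde\sigma^{-m_\ell}|z|^{m_\ell}$ holds there, with the constant changed but the exponent unchanged. Invoking the theorem of \cite{wang2023laguerre} yields $|\cdot|\le C\exp(-B\sqrt N)$ for $N$ large. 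The constant $B$ depends on the parabola parameter $a\tilde\sigma$; I would record this as $B_\ell$.

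\textbf{Step 3: extend to all $N\ge1$.} For the finitely many small $N$, bound the remainder crudely. By positivity of the weights,
\[
\Bigl|\sum_n\omega_nF_\ell(v_n)\Bigr|\le \sum_n\omega_n\sup_{v\ge0}|F_\ell(v)|(1+v)^{-m_\ell}(1+v_n)^{m_\ell}.
\]
For $m_\ell\le0$ this is at most $\sup|F_\ell|$. For $m_\ell>0$ it is bounded using the finitely many nodes. Then enlarge $A_\ell^{\mathrm{re}}$ so that the estimate holds for every $N\ge1$. Together with the factor $2/\tilde\sigma$ from Step~1, this gives~\eqref{eq:complex_root_exponential_estimate}.

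\textbf{Step 4: the real remainder.} The weights are real and $\tilde g_\ell=\Re\tilde f_\ell$ on $[0,\infty)$, so~\eqref{eq:real_complex_remainder_relation} holds. Then $|\Re w|\le|w|$ gives~\eqref{eq:root_exponential_estimate} immediately.

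\textbf{Main obstacle.} The hard step is Step~2: matching precisely the form of the hypotheses in \cite{wang2023laguerre}, in particular the parabola normalization, continuity up to the boundary, and the allowed algebraic growth. If that result requires the growth bound uniformly on the closed region rather than only asymptotically, I would first use continuity on compact pieces of the closed parabola to upgrade the asymptotic bound to a global one of the form $C(1+|z|)^{m_\ell}$, and then apply the theorem.
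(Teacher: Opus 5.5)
Your proposal is correct and follows the paper's argument. The paper likewise rescales the standard-weight estimate of Theorem~6.3 in \cite{wang2023laguerre} through $u_n=v_n/\tilde\sigma$ and $w_n=\omega_n/\tilde\sigma$, absorbs the fixed factors into $A_\ell^{\mathrm{re}}$ and $B_\ell$, enlarges $A_\ell^{\mathrm{re}}$ to cover the finitely many small $N$, and concludes with~\eqref{eq:real_complex_remainder_relation}; your Steps~2--3 only spell out what the paper leaves implicit. One cosmetic point: your formula for the parabola $\{y^2<4a\tilde\sigma(x+a\tilde\sigma)\}$ is right, but it is the image of the original parabola under $z\mapsto\tilde\sigma z$ (the domain of $F_\ell$), not under $z\mapsto z/\tilde\sigma$.
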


The estimate in Theorem~6.3 of~\cite{wang2023laguerre} is stated for the
standard Laguerre weight $e^{-u}$ and for all sufficiently large $N$.
Rescaling the nodes and weights as in~\eqref{eq:scaled_GL_rule} gives
\eqref{eq:complex_root_exponential_estimate}, with the fixed factors absorbed
into $A_\ell^{\mathrm{re}}$ and $B_\ell$. For each fixed level $\ell$,
$A_\ell^{\mathrm{re}}$ can be increased to cover the finitely many remaining
positive integers. The real estimate~\eqref{eq:root_exponential_estimate}
then follows from~\eqref{eq:real_complex_remainder_relation}. Such an
analyticity condition is consistent with those used for contour deformations
in Fourier pricing~\cite{boyarchenko2025fast}, but is not available in closed
form for the rough Heston model and is therefore assumed. The parabola and
the growth condition depend on $\tilde\sigma$ and the damping parameter $R$.
}

\subsubsection{\blue{Numerical parameter selection for the single-level method}}
\label{subsubsec:sl_numerical_parameter_selection}

\paragraph{\blue{Scaling factor.}}

The nodes of the scaled Gauss-Laguerre quadrature rule~\eqref{eq:scaled_GL_rule} are those of the standard
Gauss-Laguerre quadrature rule divided by $\tilde\sigma$, so $\tilde\sigma$ determines the range of values of $u$
at which {the integrand is evaluated}. We therefore \blue{choose $\tilde\sigma$ heuristically} to reflect the estimated decay
rate of $g$, which is primarily determined by the characteristic function. In addition, by the explicit
formula for $\widehat P$ in Section~\ref{sec:models_used}, we know that the payoff
transform contributes an algebraic factor and \blue{does not affect the exponential decay rate} \cite{lewis2001simple}.
As $|\xi|\to\infty$ along the contour, it is conjectured in~\cite{boyarchenko2025fast}
that the solution of~\eqref{eq:volterra_form} behaves as
$h(\xi,\tau)=h_\infty\xi+\mathcal O(1)$ with
$h_\infty=-(\mathrm i\rho+\sqrt{1-\rho^2})/(\gamma\nu)$, and that the {characteristic function
exponent}~\eqref{eq:terminal_characteristic_exponent} satisfies
$\bigl(G(\xi)-\mathrm i\xi(X_0+rT)\bigr)/\xi\to-c_\infty(T)$ with
\begin{equation}
\operatorname{Re}c_\infty(T)
=
\frac{\sqrt{1-\rho^2}}{\gamma\nu}
\left(
\gamma\theta T+\frac{V_0\,T^{1-\alpha}}{\Gamma(2-\alpha)}
\right) ,
\label{eq:c_infty}
\end{equation}
under which $|\Phi(u+\mathrm iR)|$ decays like
$e^{-\operatorname{Re}c_\infty(T)u}$. This conjectured asymptotic behaviour is
supported by the numerical experiments in~\cite{boyarchenko2025fast} and motivates the choice
\begin{equation}
\tilde\sigma
:=
\operatorname{Re}c_\infty(T) .
\label{eq:sigma_choice}
\end{equation}
\red{For $|\rho|<1$, this choice aligns the exponential weight with the
estimated large-$u$ decay of the Fourier integrand. At $|\rho|=1$, the scale
degenerates to zero; these endpoint cases are not considered here and would
require a separate positive scale.}
Relative to this estimate, the standard Gauss-Laguerre quadrature rule can leave $\tilde g$
exponentially growing at short maturities and can place its nodes over an unnecessarily wide range of
$u$ at long maturities. The resulting loss of accuracy
at short maturities and at strikes far from the spot price is documented
in~\cite{boyarchenko2025fast}. {Section~\ref{subsec:quadrature_numerics} provides corresponding numerical evidence for the \blue{parameter sets} considered here.}

\paragraph{\blue{Discretization level.}}
By Assumption~\ref{ass:fourier_integrand_rate}, the discretization error
satisfies $e_{\mathrm{disc}}(L)\leq2C_g\Delta t_L^{\,p}$, so the first
constraint in~\eqref{eq:tolerance_allocation} is satisfied whenever
$\Delta t_L\leq(\varepsilon_{\mathrm{disc}}/2C_g)^{1/p}$. Since
$\Delta t_\ell=2^{-\ell}\Delta t_0$, this is equivalent to
$$
\ell
\geq
\frac{1}{p}\log_2\!\left(
\frac{2C_g\Delta t_0^{\,p}}{\varepsilon_{\mathrm{disc}}}
\right),
$$
and the smallest required theoretical level is
\begin{equation}
{
\left\lceil
\frac{1}{p}\log_2\!\left(
\frac{2C_g\Delta t_0^{\,p}}{\varepsilon_{\mathrm{disc}}}
\right)
\right\rceil
},
\label{eq:theoretical_level_bound}
\end{equation}
where $\lceil x\rceil$ denotes the smallest integer greater than or equal to $x \in \mathbb{R}$. The constant $C_g$ of
Assumption~\ref{ass:fourier_integrand_rate} is not available, so
\eqref{eq:theoretical_level_bound} is used in the complexity analysis rather
than in the computations. \blue{To derive a practical rule for selecting the
level $L$, we instead use a Richardson estimator based on the difference
between two consecutive levels.}
{\color{blue}
By the definitions of $V$ and $V_\ell$
in~\eqref{eq:half_line_fourier_integral} and
\eqref{eq:level_approximation}, respectively,
\[
|V_\ell-V|
=
2\left|\int_0^\infty(g-g_\ell)(u)\,du\right|
\leq
2\|g-g_\ell\|_{L^1(0,\infty)},
\]
However, this bound only shows that the price error is at most of order
$\Delta t_\ell^{\,p}$; it does not guarantee that the price error is
asymptotically proportional to $\Delta t_\ell^{\,p}$. Positive and negative
parts of $g-g_\ell$ may cancel after integration, producing a smaller leading
price error. The following assumption rules out this cancellation and provides
the leading-order expansion required by the Richardson estimator.
}

{
\begin{assumption}[Sharp price-discretization expansion]
\label{ass:sharp_price_expansion}
There exists a constant $c_V\neq0$ such that
\begin{equation}
V_\ell-V
=
c_V\Delta t_\ell^{\,p}
+
o\!\left(\Delta t_\ell^{\,p}\right),
\qquad
\ell\to\infty,
\label{eq:sharp_expansion}
\end{equation}
where $p>0$ is the exponent in
Assumption~\ref{ass:fourier_integrand_rate}.
Thus, the price has the same leading convergence order as the $L^1$ integrand
error, with no cancellation of its leading contribution after integration.
\end{assumption}

\blue{Assumption~\ref{ass:sharp_price_expansion} is stronger than
Assumption~\ref{ass:fourier_integrand_rate} and is used only to justify the
Richardson estimator; it is not used in
Propositions~\ref{prop:work_scaling_single}--\ref{prop:work_scaling_multilevel}.}

We provide numerical evidence for
Assumption~\ref{ass:sharp_price_expansion} in
Section~\ref{subsec:time_discretization_numerics}.
\blue{Under this assumption, Proposition~\ref{prop:richardson} relates the
discretization error to the difference between two consecutive levels and
thereby yields a computable rule for selecting $L$.}
}

\begin{proposition}[Asymptotic Richardson error estimate]
\label{prop:richardson}
{
Let $\Delta t_\ell=2^{-\ell}\Delta t_0$ and suppose that
Assumption~\ref{ass:sharp_price_expansion} holds.
Define the price error and its Richardson estimator by
\begin{equation}
E_\ell^{\,V}
:=
e_{\mathrm{disc}}(\ell)
=
|V-V_\ell|,
\qquad
\widehat E_\ell^{\,V}
:=
\frac{|V_\ell-V_{\ell-1}|}{2^{\,p}-1}.
\label{eq:richardson_price_notation}
\end{equation}
Then
\begin{equation}
E_\ell^{\,V}
=
\widehat E_\ell^{\,V}
\left(1+o(1)\right),
\qquad
\ell\to\infty .
\label{eq:richardson_error_estimate}
\end{equation}
}
\end{proposition}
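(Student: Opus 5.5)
The plan is to work directly from the sharp expansion in Assumption~\ref{ass:sharp_price_expansion}. Write $V_\ell - V = c_V\Delta t_\ell^{\,p} + r_\ell$ with $r_\ell = o(\Delta t_\ell^{\,p})$. Then $V_{\ell-1}-V = c_V\Delta t_{\ell-1}^{\,p} + r_{\ell-1}$ with $\Delta t_{\ell-1} = 2\Delta t_\ell$. Since $\Delta t_{\ell-1}^{\,p} = 2^p\Delta t_\ell^{\,p}$, we also have $r_{\ell-1} = o(\Delta t_{\ell-1}^{\,p}) = o(\Delta t_\ell^{\,p})$.

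Subtracting the two expansions gives
\[
V_\ell - V_{\ell-1} = c_V(1-2^p)\Delta t_\ell^{\,p} + (r_\ell - r_{\ell-1}) = -c_V(2^p-1)\Delta t_\ell^{\,p}\bigl(1+o(1)\bigr).
\]
The factorization is legitimate because $c_V\neq0$ and $2^p-1>0$ for $p>0$. Hence
\[
\widehat E_\ell^{\,V} = |c_V|\Delta t_\ell^{\,p}\,\bigl|1+o(1)\bigr|.
\]
Likewise, $E_\ell^{\,V} = |V_\ell - V| = |c_V|\Delta t_\ell^{\,p}\,\bigl|1+o(1)\bigr|$.

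Taking the ratio, both quantities are eventually nonzero because $c_V\neq0$. This gives
\[
\frac{E_\ell^{\,V}}{\widehat E_\ell^{\,V}} = \frac{|1+o(1)|}{|1+o(1)|} = 1+o(1),
\]
which is \eqref{eq:richardson_error_estimate}.

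The only delicate point is the role of $c_V\neq0$. It allows us to divide by $|c_V|\Delta t_\ell^{\,p}$ and absorb the remainders $r_\ell$ and $r_{\ell-1}$ into relative $o(1)$ terms. It also ensures that $\widehat E_\ell^{\,V}\neq0$ for all sufficiently large $\ell$, so the ratio is well defined.

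Beyond this, one only needs the elementary facts that $|1+o(1)| = 1+o(1)$ and that a quotient of two $1+o(1)$ terms is again $1+o(1)$. I do not expect any real obstacle. Assumption~\ref{ass:fourier_integrand_rate} is not needed beyond fixing the meaning of $p$.
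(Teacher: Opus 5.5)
Your proposal is correct and follows essentially the same route as the paper. The paper likewise applies Assumption~\ref{ass:sharp_price_expansion} at levels $\ell$ and $\ell-1$, uses $\Delta t_{\ell-1}=2\Delta t_\ell$ to get $V_\ell-V_{\ell-1}=-c_V(2^p-1)\Delta t_\ell^{\,p}+o(\Delta t_\ell^{\,p})$, and then compares $|V-V_\ell|$ with $|V_\ell-V_{\ell-1}|/(2^p-1)$ using $c_V\neq0$.
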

\begin{proof}
{We prove Proposition~\ref{prop:richardson} in
Appendix~\ref{sec:proof_richardson}.}
\end{proof}
For the computable approximations, define the {signed quadrature error}
\begin{equation}
q_{N,\ell}
:=
V_{N,\ell}-V_\ell.
\label{eq:quadrature_perturbation}
\end{equation}
{\color{blue}
Suppose that $V_{N,\ell}$ and $V_{N,\ell-1}$ are evaluated with the same
scaled Gauss--Laguerre rule, that is, with the same $N$, scaling parameter
$\tilde\sigma$, nodes, and weights. By linearity of the quadrature rule, the
quadrature perturbation of the level difference is
\begin{equation}
\begin{aligned}
e_{N,\ell}^{\Delta}
&:=
\left|
(V_{N,\ell}-V_{N,\ell-1})-(V_\ell-V_{\ell-1})
\right| \\
&=
\left|q_{N,\ell}-q_{N,\ell-1}\right|
=
\left|R_N^{\tilde\sigma}(g_\ell-g_{\ell-1})\right|.
\end{aligned}
\label{eq:level_difference_quadrature_perturbation}
\end{equation}
Consequently, the reverse triangle inequality gives
\begin{equation}
\bigl|
|V_{N,\ell}-V_{N,\ell-1}|
-|V_\ell-V_{\ell-1}|
\bigr|
\leq
e_{N,\ell}^{\Delta}
=
\left|R_N^{\tilde\sigma}(g_\ell-g_{\ell-1})\right|.
\label{eq:absolute_level_difference_perturbation}
\end{equation}
Suppose that, for the chosen quadrature order $N$, there exists a constant
$C_{Q,N}<\infty$, independent of $\ell$, such that
\begin{equation}
\left|R_N^{\tilde\sigma}(g_\ell-g_{\ell-1})\right|
\leq
C_{Q,N}\Delta t_\ell^{\,p}.
\label{eq:quadrature_condition_order_p}
\end{equation}
Define the computable Richardson estimator by
\begin{equation}
\widehat E_{N,\ell}^{\,V}
:=
\frac{|V_{N,\ell}-V_{N,\ell-1}|}{2^p-1}.
\label{eq:computable_richardson_estimator}
\end{equation}
Proposition~\ref{prop:richardson} and
\eqref{eq:absolute_level_difference_perturbation} then give
\begin{equation}
\left|
e_{\mathrm{disc}}(\ell)-\widehat E_{N,\ell}^{\,V}
\right|
\leq
\frac{C_{Q,N}}{2^p-1}\Delta t_\ell^{\,p}
+o\!\left(\Delta t_\ell^{\,p}\right).
\label{eq:computable_richardson_perturbation}
\end{equation}
Thus, the accuracy of the computable Richardson estimator
then depends on $C_{Q,N}$ being small relative to the leading level-difference
coefficient $(2^p-1)|c_V|$. The theoretical complexity analysis continues to
use the certified $L^1$ bound and level~\eqref{eq:theoretical_level_bound};
the rule below is its practical, asymptotic replacement.
}

{
Evaluating this Richardson error estimate at every level would require the
corresponding level approximations. Instead, we compute the first level
difference with \red{$\bar N^{\mathrm{comp}}$} quadrature points and define
\begin{equation}
D_1
:=
\left|
\red{V_{\bar N^{\mathrm{comp}},1}-V_{\bar N^{\mathrm{comp}},0}}
\right|.
\label{eq:pilot_richardson_difference}
\end{equation}
For $\ell\geq1$, we use the \blue{asymptotic Richardson-type discretization indicator}
\begin{equation}
\eta_{\mathrm{disc}}(\ell)
:=
\frac{D_1}{2^{\,p}-1}
2^{-p(\ell-1)}.
\label{eq:practical_disc_indicator}
\end{equation}
\blue{This asymptotic indicator} extrapolates the asymptotic rate from the first level
difference and is not a certified error bound unless levels $0$ and $1$ are
already in the asymptotic regime and \blue{the quadrature error of their level
difference is sufficiently small}.
\blue{Accordingly, $\Delta t_0$ must be chosen sufficiently small that $D_1$
is numerically stable under further time refinement.}
We select the smallest level $L\geq1$ satisfying
$\eta_{\mathrm{disc}}(L)\leq\varepsilon_{\mathrm{disc}}$, namely
\begin{equation}
L
:=
\blue{\max\!\Biggl\{1,}
{
\left\lceil
1
+\frac{1}{p}\log_2\!\left(
\frac{D_1}
{\left(2^{\,p}-1\right)\varepsilon_{\mathrm{disc}}}
\right)
\right\rceil
}
\blue{\Biggr\}}.
\label{eq:practical_level}
\end{equation}
The behaviour of this level selection rule is assessed in
Section~\ref{sec:numerical_results}.
}

\paragraph{\blue{Number of quadrature points.}}

The level $L$ is fixed by~\eqref{eq:practical_level}. The second
constraint in~\eqref{eq:tolerance_allocation} is then a separate condition on $N$.  We select $N$ differently based on the used quadrature error estimate. 

\begin{remark}[Relation between the two quadrature estimates]
\label{rem:quadrature_estimates}
We note that the two estimates are not interchangeable. If the quadrature error
satisfies~\eqref{eq:root_exponential_estimate}, then, for every fixed $s>0$, it
also satisfies an estimate of the form~\eqref{eq:algebraic_error} with a
suitable constant and for $N$ large enough. However,  the converse does not hold.
The algebraic estimate is in this sense the conservative one, and the number of
{quadrature points} it selects through~\eqref{eq:optimal_N} exceeds the number
selected by Proposition~\ref{prop:quadrature_order_root_exponential} for
$\varepsilon_{\mathrm{quad}}$ small enough. Where~\eqref{eq:root_exponential_estimate}
holds, the latter therefore gives the sharper estimate of the work required to reach
a prescribed tolerance.
\end{remark}

{
\begin{proposition}[Number of quadrature points under the algebraic error estimate]
\label{prop:quadrature_order_algebraic}
Let Assumption~\ref{ass:weighted_smoothness} hold at level $L$ and let
$\varepsilon_{\mathrm{quad}}>0$. Then every $N\geq N_{\mathrm{alg}}$
satisfies the quadrature constraint in~\eqref{eq:tolerance_allocation}, where
\begin{equation}
N_{\mathrm{alg}}
:=
{
\left\lceil
\left(
\frac{A_L}{\varepsilon_{\mathrm{quad}}}
\right)^{2/s_L}
\right\rceil
}.
\label{eq:optimal_N}
\end{equation}
Moreover, $N_{\mathrm{alg}}$ is the smallest integer {number of quadrature points} certified by
the algebraic estimate~\eqref{eq:algebraic_error} to satisfy the constraint.
\end{proposition}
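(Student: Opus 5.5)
The plan is to work directly from Theorem~\ref{thm:algebraic_error} at level $L$. Assumption~\ref{ass:weighted_smoothness} holds at level $L$, so for every $N\geq1$ we have
\[
e_{\mathrm{quad}}(N,L)=\left|R_N^{\tilde\sigma}(g_L)\right|\leq A_L N^{-s_L/2}.
\]
The quadrature constraint in~\eqref{eq:tolerance_allocation} is therefore certified as soon as $A_L N^{-s_L/2}\leq\varepsilon_{\mathrm{quad}}$.

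Since $s_L\geq1$, the map $N\mapsto A_L N^{-s_L/2}$ is nonincreasing on $[1,\infty)$. The certification condition is thus equivalent to the explicit inequality
\[
N\geq\left(\frac{A_L}{\varepsilon_{\mathrm{quad}}}\right)^{2/s_L},
\]
obtained by raising both sides to the positive power $2/s_L$. Any integer $N\geq N_{\mathrm{alg}}$ satisfies this inequality because $\lceil x\rceil\geq x$, and monotonicity then gives the first claim.

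For minimality, take any integer $N<N_{\mathrm{alg}}$. By the defining property of the ceiling, such an $N$ is strictly below $(A_L/\varepsilon_{\mathrm{quad}})^{2/s_L}$. Hence $A_L N^{-s_L/2}>\varepsilon_{\mathrm{quad}}$, and the algebraic bound~\eqref{eq:algebraic_error} does not certify the constraint. This shows $N_{\mathrm{alg}}$ is the smallest integer certified by the estimate.

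Two degenerate cases need a word. If $A_L=0$, the formula gives $N_{\mathrm{alg}}=0$, and the statement should be read with the convention $N\geq1$: every admissible $N$ is certified, so the minimal admissible choice is $N=1$. If $(A_L/\varepsilon_{\mathrm{quad}})^{2/s_L}<1$, the ceiling is at most $1$ and the same reading applies. No step is a real obstacle. The only care needed is to stress that minimality refers to what the bound \emph{certifies}, not to the true quadrature error, which may already be below $\varepsilon_{\mathrm{quad}}$ for smaller $N$.
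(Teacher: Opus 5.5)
Your proposal is correct and follows the paper's route. Like the paper, you invert the algebraic bound $A_L N^{-s_L/2}\leq\varepsilon_{\mathrm{quad}}$ to obtain $N\geq(A_L/\varepsilon_{\mathrm{quad}})^{2/s_L}$ and then take the ceiling. You additionally spell out the minimality step, namely that every integer below $N_{\mathrm{alg}}$ fails the certified inequality, together with the degenerate cases $A_L=0$ or $(A_L/\varepsilon_{\mathrm{quad}})^{2/s_L}<1$; the paper leaves both implicit.
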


\begin{proof}
{By~\eqref{eq:algebraic_error}, the quadrature constraint holds whenever
$N\geq(A_L/\varepsilon_{\mathrm{quad}})^{2/s_L}$. Therefore,
\eqref{eq:optimal_N} is the smallest integer certified by
\eqref{eq:algebraic_error}, and every larger integer also satisfies the
constraint.}
\end{proof}

Under Assumption~\ref{ass:root_exponential}, the corresponding certified
{number of quadrature points} is obtained by inverting the root-exponential error estimate.

\begin{proposition}[Number of quadrature points under the root-exponential error estimate]
\label{prop:quadrature_order_root_exponential}
Let Assumption~\ref{ass:root_exponential} hold for fixed $L$ and let
$0<\varepsilon_{\mathrm{quad}}<A_L^{\mathrm{re}}$. Then every
$N\geq N_{\mathrm{re}}$ satisfies the quadrature constraint
in~\eqref{eq:tolerance_allocation}, where
\begin{equation}
N_{\mathrm{re}}
:=
\left\lceil
\frac{1}{B_L^2}
\log^2\!\left(
\frac{A_L^{\mathrm{re}}}{\varepsilon_{\mathrm{quad}}}
\right)
\right\rceil .
\label{eq:optimal_N_root_exponential}
\end{equation}
Moreover, $N_{\mathrm{re}}$ is the smallest integer {number of quadrature points} certified by
the root-exponential estimate~\eqref{eq:root_exponential_estimate} to satisfy
the constraint.
\end{proposition}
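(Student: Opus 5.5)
The plan mirrors the proof of Proposition~\ref{prop:quadrature_order_algebraic}: Theorem~\ref{thm:root_exponential} gives a certified bound on the quadrature error, and we invert that bound. By~\eqref{eq:root_exponential_estimate} at level $L$,
\[
e_{\mathrm{quad}}(N,L)=\left|R_N^{\tilde\sigma}(g_L)\right|\leq A_L^{\mathrm{re}}\exp\!\left(-B_L\sqrt N\right)
\]
for every $N\geq1$. So the quadrature constraint in~\eqref{eq:tolerance_allocation} holds as soon as $A_L^{\mathrm{re}}\exp(-B_L\sqrt N)\leq\varepsilon_{\mathrm{quad}}$.

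We then solve this inequality for $N$. Since $B_L>0$ and the logarithm is increasing, the condition is equivalent to
\[
B_L\sqrt N\geq\log\!\left(A_L^{\mathrm{re}}/\varepsilon_{\mathrm{quad}}\right).
\]
The hypothesis $0<\varepsilon_{\mathrm{quad}}<A_L^{\mathrm{re}}$ makes the right-hand side strictly positive. Both sides are therefore nonnegative, and squaring is an equivalence. The condition becomes
\[
N\geq B_L^{-2}\log^2\!\left(A_L^{\mathrm{re}}/\varepsilon_{\mathrm{quad}}\right).
\]
The set of integers satisfying this is exactly $\{N\in\mathbb N: N\geq N_{\mathrm{re}}\}$, with $N_{\mathrm{re}}$ as in~\eqref{eq:optimal_N_root_exponential}. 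This proves that every $N\geq N_{\mathrm{re}}$ meets the constraint. Note that $N_{\mathrm{re}}\geq1$, because the threshold is positive, so Theorem~\ref{thm:root_exponential} applies at $N_{\mathrm{re}}$.

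For minimality, we argue that the characterization is an equivalence. For any integer $N<N_{\mathrm{re}}$, $N$ lies strictly below the threshold, because $N_{\mathrm{re}}$ is the ceiling of the threshold. Then $A_L^{\mathrm{re}}\exp(-B_L\sqrt N)>\varepsilon_{\mathrm{quad}}$, so the root-exponential estimate does not certify the constraint for that $N$. Hence $N_{\mathrm{re}}$ is the smallest integer certified by~\eqref{eq:root_exponential_estimate}.

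The only delicate point is the squaring step. This is where the assumption $\varepsilon_{\mathrm{quad}}<A_L^{\mathrm{re}}$ is used: without a positive logarithm, squaring could reverse the inequality and break the equivalence. If the assumption failed, the bound would already hold for all $N\geq1$. The rest is routine monotonicity.
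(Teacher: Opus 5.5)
Your proposal is correct and follows essentially the same route as the paper: invert the root-exponential bound, use $0<\varepsilon_{\mathrm{quad}}<A_L^{\mathrm{re}}$ so that the logarithm is positive and squaring is an equivalence, then take the ceiling. Your explicit minimality argument and the check $N_{\mathrm{re}}\geq1$ just spell out steps that the paper's proof states in one line.
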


\begin{proof}
By~\eqref{eq:root_exponential_estimate}, the quadrature constraint holds
whenever
$A_L^{\mathrm{re}}\exp(-B_L\sqrt N)\leq
\varepsilon_{\mathrm{quad}}$. Since
$0<\varepsilon_{\mathrm{quad}}<A_L^{\mathrm{re}}$, this sufficient condition
is equivalent to
$N\geq B_L^{-2}\log^2(A_L^{\mathrm{re}}/\varepsilon_{\mathrm{quad}})$.
Therefore,~\eqref{eq:optimal_N_root_exponential} is the smallest integer
certified by~\eqref{eq:root_exponential_estimate}, and every larger integer
also satisfies the constraint.
\end{proof}
}

{
\begin{remark}[Practical estimation of the quadrature parameters]
The theoretical selection rules~\eqref{eq:optimal_N}
and~\eqref{eq:optimal_N_root_exponential} use the quadrature parameters at the
selected finest level $L$. Estimating these parameters directly would require
repeated quadrature computations at the most expensive discretization level.
In practice, we instead estimate the algebraic parameters $A_0$ and $s_0$ and
the root-exponential parameters $A_0^{\mathrm{re}}$ and $B_0$ at level zero,
and use them as surrogates for $A_L$, $s_L$, $A_L^{\mathrm{re}}$, and $B_L$,
respectively. Section~\ref{subsubsec:quadrature_assumptions}, in particular
Figure~\ref{fig:single_level_cover_fits}, provides numerical evidence that
these parameters vary only weakly across the discretization levels considered.
This replacement is an empirical implementation choice and is not part of the
certified bounds in Propositions~\ref{prop:quadrature_order_algebraic}
and~\ref{prop:quadrature_order_root_exponential}, which retain the actual
level-$L$ parameters.
\end{remark}
}

As
$\varepsilon_{\mathrm{quad}}\to0$, the {number of quadrature points selected by}~\eqref{eq:optimal_N} grows
algebraically in $1/\varepsilon_{\mathrm{quad}}$, whereas
\eqref{eq:optimal_N_root_exponential} grows as
$\log^2(1/\varepsilon_{\mathrm{quad}})$.

\subsubsection{\blue{Computational complexity}}

We study how the computational work required by the proposed single-level method scales as $\varepsilon\to0$.

\begin{proposition}[Complexity of the single-level method under the algebraic
quadrature error estimate]
\label{prop:work_scaling_single}
Let Assumption~\ref{ass:fourier_integrand_rate} hold. Suppose
that there exist an integer {$s_{\mathrm{SL}}\geq1$}, a constant $A<\infty$ and
$\varepsilon_0>0$ such that, for every
$\varepsilon\in(0,\varepsilon_0]$, Assumption~\ref{ass:weighted_smoothness}
holds at level $L(\varepsilon)$ with
$$
{s_{L(\varepsilon)}\geq s_{\mathrm{SL}}},
\qquad
A_{L(\varepsilon)}\leq A.
$$
{
Suppose also that, for some constants $C_W<\infty$ and $\beta>0$, the work
of one integrand evaluation at level $\ell$ satisfies
\[
W_\ell\leq C_W\Delta t_\ell^{-\beta},
\qquad \ell\geq0.
\]
}
Let $L$ be the smallest level given by
\eqref{eq:theoretical_level_bound}, and let {$N=N_{\mathrm{alg}}$} be given by
Proposition~\ref{prop:quadrature_order_algebraic}, with
$\varepsilon_{\mathrm{disc}}
=\varepsilon_{\mathrm{quad}}=\varepsilon/2$. Then
$|V-V_{N,L}|\leq\varepsilon$ and
\begin{equation}
W_{\mathrm{SL}}(\varepsilon)
:=
W_{\mathrm{SL}}(N,L)
=
\mathcal O\!\left(
{\varepsilon^{-\left(\frac{\beta}{p}+\frac{2}{s_{\mathrm{SL}}}\right)}}
\right),
\qquad
\varepsilon\to0 .
\label{eq:work_scaling_single}
\end{equation}
\end{proposition}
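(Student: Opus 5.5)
The plan is to establish the accuracy claim from the error decomposition and then bound the two factors of the work $W_{\mathrm{SL}}(N,L)=N W_L$ separately in terms of $\varepsilon$.

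\emph{Accuracy.} Start from the decomposition~\eqref{eq:error_decomposition}, $|V-V_{N,L}|\le e_{\mathrm{disc}}(L)+e_{\mathrm{quad}}(N,L)$. The level $L$ from~\eqref{eq:theoretical_level_bound} guarantees $\Delta t_L\le(\varepsilon_{\mathrm{disc}}/2C_g)^{1/p}$. Hence, by the bound $e_{\mathrm{disc}}(L)\le 2C_g\Delta t_L^{p}$ derived from Assumption~\ref{ass:fourier_integrand_rate}, we get $e_{\mathrm{disc}}(L)\le\varepsilon/2$. For $\varepsilon$ small enough, $\Delta t_L$ lies in the regime where Assumption~\ref{ass:fourier_integrand_rate} applies; we restrict $\varepsilon_0$ if needed, and we take $L\ge0$ by convention, so that for larger $\varepsilon$ only constants are affected. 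For the quadrature term, Proposition~\ref{prop:quadrature_order_algebraic} applied at level $L(\varepsilon)$ gives $e_{\mathrm{quad}}(N_{\mathrm{alg}},L)\le\varepsilon/2$. Summing the two bounds gives $|V-V_{N,L}|\le\varepsilon$.

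\emph{Work of one evaluation.} Because $L$ is the smallest admissible level, the level $L-1$ fails the constraint, at least when $L\ge1$; the case $L=0$ only contributes a constant. Therefore $\Delta t_L=\tfrac12\Delta t_{L-1}>\tfrac12(\varepsilon/4C_g)^{1/p}$. This gives
\[
W_L\le C_W\Delta t_L^{-\beta}\le C_W 2^{\beta}(4C_g)^{\beta/p}\varepsilon^{-\beta/p}.
\]
The factor $2^\beta$ coming from the dyadic ceiling is the only point that needs care. It is harmless because it is independent of $\varepsilon$.

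\emph{Number of points.} Using the uniform hypotheses $A_{L(\varepsilon)}\le A$ and $s_{L(\varepsilon)}\ge s_{\mathrm{SL}}$, we write
\[
N_{\mathrm{alg}}\le 1+\bigl(2A/\varepsilon\bigr)^{2/s_L}.
\]
The main obstacle is that the exponent $2/s_L$ varies with $\varepsilon$. I would handle it by monotonicity. For $\varepsilon$ small enough that $2A/\varepsilon\ge1$ (shrink $\varepsilon_0$ accordingly), $s_L\ge s_{\mathrm{SL}}$ implies $(2A/\varepsilon)^{2/s_L}\le(2A/\varepsilon)^{2/s_{\mathrm{SL}}}$. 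If instead $2A/\varepsilon<1$, then $N_{\mathrm{alg}}=1$. In both cases $N_{\mathrm{alg}}\le 2\max\{1,(2A)^{2/s_{\mathrm{SL}}}\}\varepsilon^{-2/s_{\mathrm{SL}}}$ for $\varepsilon\le\min\{1,\varepsilon_0\}$.

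Multiplying the bound on $N_{\mathrm{alg}}$ by the bound on $W_L$ gives $W_{\mathrm{SL}}(\varepsilon)=\mathcal O\bigl(\varepsilon^{-(\beta/p+2/s_{\mathrm{SL}})}\bigr)$, which is~\eqref{eq:work_scaling_single}.
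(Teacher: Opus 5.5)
Your proposal is correct and follows the paper's proof essentially step by step: the minimality of $L$ gives $\Delta t_L>\tfrac12(\varepsilon_{\mathrm{disc}}/2C_g)^{1/p}$ and hence $W_L=\mathcal O(\varepsilon^{-\beta/p})$, while the ceiling bound on $N_{\mathrm{alg}}$ together with $A_L\le A$, $s_L\ge s_{\mathrm{SL}}$ and a base $A/\varepsilon_{\mathrm{quad}}\ge1$ gives $N_{\mathrm{alg}}=\mathcal O(\varepsilon^{-2/s_{\mathrm{SL}}})$. Your extra bookkeeping only makes explicit what the paper leaves implicit; this includes the explicit constants, the cases $L=0$ and $2A/\varepsilon<1$, and the ``sufficiently small $\Delta t_\ell$'' proviso of Assumption~\ref{ass:fourier_integrand_rate}.
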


\begin{proof}
{We prove Proposition~\ref{prop:work_scaling_single} in
Appendix~\ref{app:complexity_proof_single}.}
\end{proof}

{
\begin{proposition}[Complexity of the single-level method under the root-exponential
quadrature error estimate]
\label{prop:work_scaling_single_root_exponential}
Let Assumption~\ref{ass:fourier_integrand_rate} hold and let
$\varepsilon_{\mathrm{disc}}=\varepsilon_{\mathrm{quad}}=\varepsilon/2$.
Let $L$ be the smallest level given by~\eqref{eq:theoretical_level_bound}.
Assume that Assumption~\ref{ass:root_exponential} holds at level $L$ and that
$0<\varepsilon_{\mathrm{quad}}<A_L^{\mathrm{re}}$. Let
$N=N_{\mathrm{re}}$ be given by
Proposition~\ref{prop:quadrature_order_root_exponential}. Then
$|V-V_{N,L}|\leq\varepsilon$.

For the asymptotic work estimate, suppose in addition that there exist
constants $A^{\mathrm{re}}<\infty$, $B>0$, and $\varepsilon_0>0$ such that,
for every $\varepsilon\in(0,\varepsilon_0]$, the constants at the selected
level $L=L(\varepsilon)$ satisfy
\[
A_{L(\varepsilon)}^{\mathrm{re}}
\leq
A^{\mathrm{re}},
\qquad
B_{L(\varepsilon)}
\geq
B.
\]
{
Assume also that, for some constants $C_W<\infty$ and $\beta>0$, the work
of one integrand evaluation at level $\ell$ satisfies
\[
W_\ell\leq C_W\Delta t_\ell^{-\beta},
\qquad \ell\geq0.
\]
}
Then
\begin{equation}
W_{\mathrm{SL}}(\varepsilon)
:=
W_{\mathrm{SL}}(N,L)
=
\mathcal O\!\left(
\varepsilon^{-\beta/p}\log^2(1/\varepsilon)
\right),
\qquad
\varepsilon\to0 .
\label{eq:work_root_exponential}
\end{equation}
\end{proposition}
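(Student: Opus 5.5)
The proof follows the same two-part structure as Proposition~\ref{prop:work_scaling_single}: first establish the error bound $|V-V_{N,L}|\leq\varepsilon$, then bound the work $W_{\mathrm{SL}}(N,L)=N\,W_L$ factor by factor.

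\emph{Accuracy.} Use the decomposition~\eqref{eq:error_decomposition}. By Assumption~\ref{ass:fourier_integrand_rate} and the computation preceding~\eqref{eq:theoretical_level_bound},
\[
e_{\mathrm{disc}}(L)\leq 2C_g\Delta t_L^{\,p}.
\]
The choice of $L$ as the smallest integer in~\eqref{eq:theoretical_level_bound} gives $\Delta t_L\leq(\varepsilon_{\mathrm{disc}}/2C_g)^{1/p}$, so $e_{\mathrm{disc}}(L)\leq\varepsilon/2$. Since Assumption~\ref{ass:root_exponential} holds at level $L$ and $0<\varepsilon_{\mathrm{quad}}<A_L^{\mathrm{re}}$, Proposition~\ref{prop:quadrature_order_root_exponential} with $N=N_{\mathrm{re}}$ gives $e_{\mathrm{quad}}(N,L)=|R_N^{\tilde\sigma}(g_L)|\leq\varepsilon/2$. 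Summing the two contributions gives the claimed accuracy.

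\emph{Cost per evaluation.} Because $L$ is the smallest admissible integer, $L-1$ violates the criterion (when $L\geq1$). Hence
\[
\Delta t_L=\tfrac12\Delta t_{L-1}>\tfrac12\Bigl(\frac{\varepsilon}{4C_g}\Bigr)^{1/p}.
\]
The case $L=0$, which gives a fixed $\Delta t_0$, is absorbed into constants; for $\varepsilon$ small one has $L\geq1$, possibly after shrinking $\varepsilon_0$. Consequently
\[
W_L\leq C_W\Delta t_L^{-\beta}\leq C_W2^{\beta}(4C_g)^{\beta/p}\,\varepsilon^{-\beta/p}.
\]

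\emph{Number of points.} For $\varepsilon\leq\varepsilon_0$, insert the uniform bounds $A^{\mathrm{re}}_{L(\varepsilon)}\leq A^{\mathrm{re}}$ and $B_{L(\varepsilon)}\geq B$ into~\eqref{eq:optimal_N_root_exponential}. The logarithm is positive because $\varepsilon/2<A_L^{\mathrm{re}}\leq A^{\mathrm{re}}$, and $\log$ is monotone, so
\[
N\leq 1+B^{-2}\log^2\!\bigl(2A^{\mathrm{re}}/\varepsilon\bigr).
\]
Writing $\log(2A^{\mathrm{re}}/\varepsilon)=\log(1/\varepsilon)+\log(2A^{\mathrm{re}})$, we get $N\leq C\log^2(1/\varepsilon)$ for $\varepsilon$ small enough, say $\varepsilon\leq\min\{\varepsilon_0,e^{-1}\}$, with the $\varepsilon$-independent offset absorbed into $C$. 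Multiplying the bounds on $N$ and $W_L$ yields~\eqref{eq:work_root_exponential}.

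The main obstacle is making the bound on $N$ uniform. The constants in Proposition~\ref{prop:quadrature_order_root_exponential} depend on $L(\varepsilon)$, and only the assumed uniform bounds on $A^{\mathrm{re}}_{L(\varepsilon)}$ and $B_{L(\varepsilon)}$ stop $N$ from blowing up as $L(\varepsilon)\to\infty$. The remaining care needed is in handling the ceiling, the sign of the logarithm, and the edge case $L=0$, each of which only enlarges the constants.
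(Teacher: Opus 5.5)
Your proposal is correct and follows essentially the same route as the paper's proof. Accuracy comes from the split~\eqref{eq:error_decomposition}, the minimality of $L$ gives $\Delta t_L\asymp\varepsilon^{1/p}$ and hence $W_L=\mathcal O(\varepsilon^{-\beta/p})$, and the uniform bounds on $A^{\mathrm{re}}_{L(\varepsilon)}$ and $B_{L(\varepsilon)}$ give $N_{\mathrm{re}}=\mathcal O(\log^2(1/\varepsilon))$. Your explicit handling of the sign of the logarithm, the ceiling, and the case $L=0$ only spells out what the paper covers with ``for all sufficiently small $\varepsilon$''.
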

}

\begin{proof}
{We prove
Proposition~\ref{prop:work_scaling_single_root_exponential} in
Appendix~\ref{app:complexity_proof_single_root_exponential}.}
\end{proof}

\subsection{Multilevel Fourier Gauss-Laguerre  Quadrature Method}
\label{sec:ML_method}

In this section, we design a multilevel Fourier pricing method based on the
scaled Gauss-Laguerre quadrature rule for the option value~\eqref{eq:fourier_pricing}. The
single-level method of Section~\ref{sec:SL_method} evaluates the integrand at
every quadrature {point} at the level $L$ selected by~\eqref{eq:practical_level},
so all $N$ solves of~\eqref{eq:volterra_form} are performed at the finest and
most expensive level. The multilevel method instead splits the integrand into a
{level zero term} and a hierarchy of level differences, and applies the scaled
Gauss-Laguerre quadrature rule to each. The {number of quadrature points} is then chosen
separately at each level such that the
total work is minimized by distributing the {quadrature points} across the
hierarchy, placing few of them at the levels where one evaluation is expensive, and more quadrature points where the fractional Riccati equation is solved at a coarser mesh.
The level $L$ remains given by~\eqref{eq:practical_level}, and the quadrature
tolerance $\varepsilon_{\mathrm{quad}}$ of~\eqref{eq:tolerance_allocation} is
distributed among the levels. We state the assumptions on the regularity and
the level decay of the level differences, derive the allocation of the
{quadrature points}, and obtain the resulting computational complexity in
Proposition~\ref{prop:work_scaling_multilevel}.

We first write the integrand at level $L$ as a telescoping sum. For
$\ell\geq1$ we define the level differences
\begin{equation}
\Delta g_\ell
:=
g_\ell-g_{\ell-1},
\qquad
\Delta\tilde g_\ell
:=
\tilde g_\ell-\tilde g_{\ell-1} ,
\label{eq:level_difference}
\end{equation}
with $\tilde g_\ell$ given by~\eqref{eq:weighted_integrand}, and we set
{the level-zero integrands $g_0$ and $\tilde g_0$ apart from the level
differences.} Derivatives of the level differences are denoted for $m \in \mathbb{N}$  by
$\left(\Delta\tilde g_\ell\right)^{(m)}
=\tilde g_\ell^{(m)}-\tilde g_{\ell-1}^{(m)}$.
By definition of \eqref{eq:level_difference},
\begin{equation}
{
g_L
=
g_0+\sum_{\ell=1}^L\Delta g_\ell .
}
\label{eq:telescoping_g}
\end{equation}
Let $\mathbf N:=(N_0,\ldots,N_L)$, with $N_\ell\geq1$ for every $\ell$.
{Applying the scaled Gauss-Laguerre quadrature
rule~\eqref{eq:scaled_GL_rule} with $N_0$ quadrature points to $g_0$ and with
$N_\ell$ quadrature points to $\Delta g_\ell$ for $\ell\geq1$,} we define the
multilevel approximation
\begin{equation}
{
V_{\mathbf N,L}
:=
\mathcal Q_{N_0}^{\tilde\sigma}[g_0]
+
\sum_{\ell=1}^L
\mathcal Q_{N_\ell}^{\tilde\sigma}[\Delta g_\ell] .
}
\label{eq:ml_approximation}
\end{equation}
For $\ell\geq1$, evaluating $\Delta g_\ell$ at one {quadrature point}
requires one solve of~\eqref{eq:volterra_form} at level $\ell$ and one at level
$\ell-1$, at a cost of $W_\ell+W_{\ell-1}$. {The level-zero term costs
$W_0$ per quadrature point.} The computational work
required by the multilevel method is
\begin{equation}
{
W_{\mathrm{ML}}(\mathbf N,L)
=
W_0N_0
+
\sum_{\ell=1}^L
\left(W_\ell+W_{\ell-1}\right)N_\ell .
}
\label{eq:ml_work}
\end{equation}

\subsubsection{\blue{Error analysis and practical parameter selection for the multilevel method}}
\label{subsubsec:ml_error_parameter_selection}

We now split the error of~\eqref{eq:ml_approximation} into the two
contributions of~\eqref{eq:error_decomposition}. Adding and subtracting $V_L$
gives
\begin{equation}
\left|V-V_{\mathbf N,L}\right|
\;\leq\;
\underbrace{\left|V-V_L\right|}_{=\;e_{\mathrm{disc}}(L)}
\;+\;
\underbrace{\left|V_L-V_{\mathbf N,L}\right|}_{=:\;e_{\mathrm{quad}}(\mathbf N,L)} .
\label{eq:ml_error_decomposition}
\end{equation}
The discretization error is unchanged, and $L$ is therefore selected
by~\eqref{eq:practical_level}. The Richardson estimate of
Proposition~\ref{prop:richardson}, together with the condition on the
quadrature errors stated after that proposition, applies without modification.
By~\eqref{eq:level_approximation}, \eqref{eq:telescoping_g},
\eqref{eq:ml_approximation} and~\eqref{eq:quadrature_error},
$$
{
V_L-V_{\mathbf N,L}
=
R_{N_0}^{\tilde\sigma}(g_0)
+
\sum_{\ell=1}^L
R_{N_\ell}^{\tilde\sigma}(\Delta g_\ell) .
}
$$
hence
\begin{equation}
{
e_{\mathrm{quad}}(\mathbf N,L)
\leq
|R_{N_0}^{\tilde\sigma}(g_0)|
+
\sum_{\ell=1}^L
\left|R_{N_\ell}^{\tilde\sigma}(\Delta g_\ell)\right| .
}
\label{eq:ml_quadrature_split}
\end{equation}
We impose on the level differences the regularity required by
Theorem~\ref{thm:algebraic_error}.

\begin{assumption}[Regularity of the level differences]
\label{ass:level_difference_regularity}
{Let $\tilde\sigma>0$ and $\ell\in\mathbb N$. There exists an integer
$s_\ell\geq1$ such that
$$
\left(\Delta\tilde g_\ell\right)^{(s_\ell-1)}
\in
AC_{\mathrm{loc}}(0,\infty),
\qquad
\int_0^\infty
u^{s_\ell/2}
\left|\left(\Delta\tilde g_\ell\right)^{(s_\ell)}(u)\right|
e^{-\tilde\sigma u}\,du
<\infty .
$$}
\end{assumption}

{At level zero, Assumption~\ref{ass:weighted_smoothness} with smoothness
index $s_0$ and Theorem~\ref{thm:algebraic_error} give
\[
\left|R_N^{\tilde\sigma}(g_0)\right|
\leq
A_0N^{-s_0/2}.
\]
For $\ell\geq1$, Assumption~\ref{ass:level_difference_regularity} is
Assumption~\ref{ass:weighted_smoothness} applied to the integrand
$\Delta g_\ell$. Hence, Theorem~\ref{thm:algebraic_error} gives}
\begin{equation}
{
\left|R_N^{\tilde\sigma}(\Delta g_\ell)\right|
\leq
A_\ell\,N^{-s_\ell/2},
\quad
A_\ell
:=
C_{s_\ell,\tilde\sigma}
\int_0^\infty
u^{s_\ell/2}
\left|\left(\Delta\tilde g_\ell\right)^{(s_\ell)}(u)\right|
e^{-\tilde\sigma u}\,du,
\qquad
\ell\geq1.
}
\label{eq:level_difference_estimate}
\end{equation}

{The allocation below uses a common smoothness exponent for the level
differences. We therefore work with an integer $s\geq1$ satisfying}
\begin{equation}
{
s\leq s_\ell,
\qquad
\ell=1,\ldots,L .
}
\label{eq:common_order}
\end{equation}
{The largest possible value is
$s=\min_{1\leq\ell\leq L}s_\ell$. Since $N_\ell\geq1$,
\eqref{eq:level_difference_estimate} and~\eqref{eq:common_order} give}
\begin{equation}
{
\left|R_{N_\ell}^{\tilde\sigma}(\Delta g_\ell)\right|
\leq
A_\ell\,N_\ell^{-s/2},
\qquad
\ell=1,\ldots,L .
}
\label{eq:common_order_estimate}
\end{equation}
{For a prescribed tolerance, \eqref{eq:common_order} involves the
finitely many correction levels $1,\ldots,L$. The complexity statement
concerns $L\to\infty$ and requires in addition that $s$ can be chosen
independently of $L$. Section~\ref{subsubsec:quadrature_assumptions}, in
particular Figure~\ref{fig:correction_algebraic_cover}, provides numerical
evidence that the fitted smoothness indices of the level differences are
approximately constant across the levels considered, supporting the use of a
common correction index $s$. The level-zero index $s_0$ is treated separately
and is examined in Figure~\ref{fig:single_level_cover_fits}. For the complexity
analysis, we impose an additional assumption on the level dependence of the
constants $A_\ell$.}

\begin{assumption}[Decay of the quadrature constants of the level differences]
\label{ass:level_difference_decay}
There exists a constant $C_A<\infty$, independent of $\ell$, such that the
constants $A_\ell$ of~\eqref{eq:level_difference_estimate} satisfy
$$
A_\ell
\leq
C_A\,\Delta t_\ell^{\,p} ,
\qquad
\ell\geq1 .
$$
\end{assumption}

Assumption~\ref{ass:level_difference_decay} is separate from
Assumption~\ref{ass:fourier_integrand_rate}. The latter controls
$\|g-g_\ell\|_{L^1(0,\infty)}$, whereas $A_\ell$ is defined
in~\eqref{eq:level_difference_estimate} by a weighted seminorm of the
derivative of order $s_\ell$ of $\Delta\tilde g_\ell$.
{Section~\ref{subsubsec:quadrature_assumptions}, in particular
Figure~\ref{fig:correction_algebraic_cover}, provides numerical evidence that
the fitted constants $A_\ell$ decay with an exponent close to $p$, supporting
Assumption~\ref{ass:level_difference_decay} for the levels considered.}
\paragraph{Allocation of {quadrature points across levels}.}
We now select the {number of quadrature points at each level}.
By~\eqref{eq:ml_quadrature_split} and~\eqref{eq:common_order_estimate}, the
second constraint in~\eqref{eq:tolerance_allocation} is satisfied if we
{allocate one half of $\varepsilon_{\mathrm{quad}}$ to the level-zero
term and one half to the level differences, and impose}
\begin{equation}
{
\begin{aligned}
A_0N_0^{-s_0/2}
&\leq
\frac{\varepsilon_{\mathrm{quad}}}{2},
\\
\sum_{\ell=1}^L
A_\ell N_\ell^{-s/2}
&\leq
\frac{\varepsilon_{\mathrm{quad}}}{2}.
\end{aligned}
}
\label{eq:ml_quadrature_constraint}
\end{equation}
{These two constraints imply
$e_{\mathrm{quad}}(\mathbf N,L)\leq\varepsilon_{\mathrm{quad}}$.}
The entries of $\mathbf N$ are integers, so the minimization of the
work~\eqref{eq:ml_work} subject to~\eqref{eq:ml_quadrature_constraint} is an
integer optimization problem. We relax it and minimize over  the set of real positive numbers 
$N_0,\ldots,N_L \in \mathbb{R}_+$ instead. Proposition~\ref{prop:optimal_N_ml} solves the
relaxed problem in closed form, and we then set
$N_\ell=\lceil N_\ell^\star\rceil$.

\begin{proposition}[Allocation of {quadrature points across levels}]
\label{prop:optimal_N_ml}
{Let Assumption~\ref{ass:weighted_smoothness} hold at level zero with
smoothness index $s_0$, let
Assumption~\ref{ass:level_difference_regularity} hold for
$\ell=1,\ldots,L$, let $s$ satisfy~\eqref{eq:common_order}, and let
$\varepsilon_{\mathrm{quad}}>0$. Assume that $A_0>0$ and $A_\ell>0$ for
$\ell=1,\ldots,L$. The minimizer of~\eqref{eq:ml_work} over
$N_0,\ldots,N_L>0$ subject to the split
constraints~\eqref{eq:ml_quadrature_constraint} is given by}
\begin{equation}
{
\begin{aligned}
N_0^\star
&=
\left(
\frac{2A_0}{\varepsilon_{\mathrm{quad}}}
\right)^{2/s_0},
\\
N_\ell^\star
&=
\left(
\frac{A_\ell}{W_\ell+W_{\ell-1}}
\right)^{\frac{2}{s+2}}
\left(
\frac{2}{\varepsilon_{\mathrm{quad}}}
\sum_{k=1}^L
A_k^{\frac{2}{s+2}}
\left(W_k+W_{k-1}\right)^{\frac{s}{s+2}}
\right)^{\frac{2}{s}},
\qquad
\ell=1,\ldots,L .
\end{aligned}
}
\label{eq:optimal_N_ml}
\end{equation}
\end{proposition}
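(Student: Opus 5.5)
The plan is to exploit the fact that the objective~\eqref{eq:ml_work} and the constraints~\eqref{eq:ml_quadrature_constraint} decouple into two independent subproblems. The first involves only $N_0$: minimize $W_0N_0$ subject to $A_0N_0^{-s_0/2}\leq\varepsilon_{\mathrm{quad}}/2$. The second involves only $(N_1,\ldots,N_L)$: minimize $\sum_{\ell=1}^L c_\ell N_\ell$, with $c_\ell:=W_\ell+W_{\ell-1}>0$, subject to $\sum_{\ell=1}^L A_\ell N_\ell^{-s/2}\leq\varepsilon_{\mathrm{quad}}/2$. Because the total work is the sum of the two objectives and the constraint sets form a product, the joint minimizer is the pair of the two separate minimizers.

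The level-zero subproblem is elementary. The map $N_0\mapsto A_0N_0^{-s_0/2}$ is strictly decreasing, so the feasible set is the interval $N_0\geq(2A_0/\varepsilon_{\mathrm{quad}})^{2/s_0}$. A linear objective with positive slope is minimized at the left endpoint, which gives $N_0^\star$.

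For the correction subproblem, I first argue that the constraint is active at any minimizer. If it were slack, one could decrease some $N_\ell$ slightly, stay feasible, and strictly reduce the cost. Existence of a minimizer also needs a check, since the domain $(0,\infty)^L$ is open. Feasibility keeps each $N_\ell$ bounded below by $(2A_\ell/\varepsilon_{\mathrm{quad}})^{2/s}>0$, because every term of the constraint sum is positive. A finite cost level keeps each $N_\ell$ bounded above. Hence the relevant sublevel set is compact, and a minimizer exists.

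With existence in hand, I introduce a Lagrangian $\sum c_\ell N_\ell+\mu(\sum A_\ell N_\ell^{-s/2}-\varepsilon_{\mathrm{quad}}/2)$. Stationarity gives $c_\ell=\mu\frac{s}{2}A_\ell N_\ell^{-s/2-1}$, so
\[
N_\ell=\Bigl(\tfrac{\mu s}{2}\Bigr)^{2/(s+2)}\Bigl(\tfrac{A_\ell}{c_\ell}\Bigr)^{2/(s+2)}.
\]
Substituting this into the active constraint yields
\[
A_\ell N_\ell^{-s/2}=\Bigl(\tfrac{\mu s}{2}\Bigr)^{-s/(s+2)}A_\ell^{2/(s+2)}c_\ell^{s/(s+2)}.
\]
Summing over $\ell$ and setting the total equal to $\varepsilon_{\mathrm{quad}}/2$ gives
\[
\Bigl(\tfrac{\mu s}{2}\Bigr)^{2/(s+2)}=\Bigl(\tfrac{2}{\varepsilon_{\mathrm{quad}}}\sum_k A_k^{2/(s+2)}c_k^{s/(s+2)}\Bigr)^{2/s}.
\]
Inserting this back into the expression for $N_\ell$ reproduces~\eqref{eq:optimal_N_ml}.

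The main point needing care is that this stationary point is the global minimizer, not merely a critical point. I would handle it by convexity: the objective is linear and $N\mapsto A_\ell N^{-s/2}$ is strictly convex on $(0,\infty)$, so the problem is convex, and the KKT conditions with $\mu>0$ are sufficient for optimality. Strict convexity of the constraint, together with activity, gives uniqueness.

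As an alternative that avoids Lagrange multipliers, I would apply H\"older's inequality to $\sum_k A_k^{2/(s+2)}c_k^{s/(s+2)}=\sum_k (A_kN_k^{-s/2})^{2/(s+2)}(c_kN_k)^{s/(s+2)}$, with exponents $(s+2)/2$ and $(s+2)/s$. This bounds the sum by $(\varepsilon_{\mathrm{quad}}/2)^{2/(s+2)}(\sum c_kN_k)^{s/(s+2)}$ for every feasible point. The resulting lower bound on the work is attained exactly when the two sequences are proportional, which is the case at $N_\ell^\star$.
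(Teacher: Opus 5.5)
Your proposal is correct and follows the paper's proof essentially step for step. The paper uses the same decoupling into a level-zero problem, solved because its constraint is active, and a correction problem treated as a convex program whose stationarity condition $c_\ell=\mu\frac{s}{2}A_\ell N_\ell^{-s/2-1}$ is solved and substituted into the active constraint to fix the multiplier. Your compactness-based existence argument, the strict-convexity uniqueness remark, and the H\"older lower bound (which reproduces the minimal work in~\eqref{eq:ml_work_optimal} without multipliers) are all valid but supplementary, since convexity together with a KKT point with $\mu>0$ already certifies the global minimizer.
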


\begin{proof}
{We prove Proposition~\ref{prop:optimal_N_ml} in
Appendix~\ref{app:optimal_N_ml}.}
\end{proof}
{The two constraints in~\eqref{eq:ml_quadrature_constraint} are active
at the minimizer. Writing
$\mathbf N^\star:=(N_0^\star,\ldots,N_L^\star)$, the resulting computational
work is}
\begin{equation}
{
\begin{aligned}
W_{\mathrm{ML}}(\mathbf N^\star,L)
={}&
W_0
\left(
\frac{2A_0}{\varepsilon_{\mathrm{quad}}}
\right)^{2/s_0}
\\
&+
\left(
\frac{2}{\varepsilon_{\mathrm{quad}}}
\right)^{2/s}
\left(
\sum_{\ell=1}^L
A_\ell^{\frac{2}{s+2}}
\left(W_\ell+W_{\ell-1}\right)^{\frac{s}{s+2}}
\right)^{\frac{s+2}{s}} .
\end{aligned}
}
\label{eq:ml_work_optimal}
\end{equation}
In practice we take $N_\ell=\lceil N_\ell^\star\rceil$.
\red{This upward rounding preserves the split error constraints but is not
claimed to solve the exact integer optimization problem.}

\blue{\subsubsection{Computational complexity.}}

The allocation~\eqref{eq:optimal_N_ml} is valid for any prescribed
tolerance. {For the complexity estimate, we additionally assume
$s>2p/\beta$. Under this condition, the finest correction level determines
the asymptotic order of the correction contribution
in~\eqref{eq:ml_work_optimal}. The separate level-zero contribution has order
$\mathcal O(\varepsilon^{-2/s_0})$. \blue{For the direct fractional Adams
scheme, $\beta=2$, so the condition becomes $s>p$. The numerical evidence in
Section~\ref{subsubsec:quadrature_assumptions} indicates that the fitted
effective regularity indices are above the corresponding thresholds for the
parameter sets and finite-range convergence rates considered.}}

\begin{proposition}[Complexity of the multilevel method under the algebraic
quadrature error estimate]
\label{prop:work_scaling_multilevel}
{Let Assumption~\ref{ass:weighted_smoothness} hold at level zero with
smoothness index $s_0$, and let Assumptions~\ref{ass:fourier_integrand_rate},
\ref{ass:level_difference_regularity}, and
\ref{ass:level_difference_decay} hold.} Suppose that the costs satisfy
{
\begin{equation}
W_\ell\leq C_W\Delta t_\ell^{-\beta},
\qquad \ell\geq0,
\label{eq:multilevel_cost_bound}
\end{equation}
}
{for some $C_W<\infty$ and $\beta>0$. Suppose in addition that there
exists an integer $s$,
independent of $L$, satisfying~\eqref{eq:common_order} and
$s>2p/\beta$. Assume also that $A_0>0$ and $A_\ell>0$ for
$\ell=1,\ldots,L$.} Let
$\varepsilon_{\mathrm{disc}}=\varepsilon_{\mathrm{quad}}=\varepsilon/2$,
let $L$ be the smallest level given by~\eqref{eq:theoretical_level_bound}, and set
$N_\ell=\lceil N_\ell^\star\rceil$, where $N_\ell^\star$ is given
by~\eqref{eq:optimal_N_ml}. Then
\[
|V-V_{\mathbf N,L}|\leq\varepsilon
\]
{and}
\begin{equation}
{
W_{\mathrm{ML}}(\mathbf N,L)
=
\mathcal O\!\left(
\varepsilon^{-2/s_0}
+
\varepsilon^{-\beta/p}
\right),
\qquad
\varepsilon\to0 .
}
\label{eq:work_scaling_multilevel_general}
\end{equation}
{If, in addition,
\[
s_0\geq\frac{2p}{\beta},
\]
then}
\begin{equation}
{
W_{\mathrm{ML}}(\mathbf N,L)
=
\mathcal O\!\left(\varepsilon^{-\beta/p}\right),
\qquad
\varepsilon\to0.
}
\label{eq:work_scaling_multilevel}
\end{equation}
\end{proposition}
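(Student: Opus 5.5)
The argument splits into an accuracy part and a work part. For accuracy, use the decomposition \eqref{eq:ml_error_decomposition}. Since $L$ is the smallest level from \eqref{eq:theoretical_level_bound}, we have $\Delta t_L\leq(\varepsilon_{\mathrm{disc}}/2C_g)^{1/p}$. Assumption~\ref{ass:fourier_integrand_rate} then gives $e_{\mathrm{disc}}(L)\leq 2C_g\Delta t_L^{\,p}\leq\varepsilon/2$. For the quadrature part, Proposition~\ref{prop:optimal_N_ml} shows that $\mathbf N^\star$ makes both constraints in \eqref{eq:ml_quadrature_constraint} active. The map $N\mapsto N^{-s/2}$ is decreasing, so rounding up to $N_\ell=\lceil N_\ell^\star\rceil$ preserves both constraints. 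Together with \eqref{eq:ml_quadrature_split} and \eqref{eq:common_order_estimate}, this gives $e_{\mathrm{quad}}\leq\varepsilon/2$, and summing the two parts yields $|V-V_{\mathbf N,L}|\leq\varepsilon$.

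For the work, bound $W_{\mathrm{ML}}(\mathbf N,L)$ by $W_{\mathrm{ML}}(\mathbf N^\star,L)$ plus a rounding surcharge $W_0+\sum_{\ell=1}^L(W_\ell+W_{\ell-1})$. By \eqref{eq:multilevel_cost_bound} and the geometric growth $\Delta t_\ell^{-\beta}=2^{\ell\beta}\Delta t_0^{-\beta}$, the surcharge is $\mathcal O(\Delta t_L^{-\beta})$. Minimality of $L$ gives $\Delta t_L\geq\tfrac12(\varepsilon/4C_g)^{1/p}$, so the surcharge is $\mathcal O(\varepsilon^{-\beta/p})$. The first term of \eqref{eq:ml_work_optimal} is $W_0(4A_0/\varepsilon)^{2/s_0}=\mathcal O(\varepsilon^{-2/s_0})$.

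For the second term of \eqref{eq:ml_work_optimal}, insert Assumption~\ref{ass:level_difference_decay} and $W_\ell+W_{\ell-1}\leq C\Delta t_\ell^{-\beta}$. Each summand is then bounded by $C\,\Delta t_\ell^{(2p-s\beta)/(s+2)}=C\,\Delta t_0^{(2p-s\beta)/(s+2)}\,2^{\ell(s\beta-2p)/(s+2)}$. The hypothesis $s>2p/\beta$ makes the exponent positive, so this is a geometric series dominated by its last term. This is the step where the hypothesis $s>2p/\beta$ and the $L$-independence of $s$ are essential, since otherwise the constants would depend on $L$. Hence the sum is $\mathcal O(\Delta t_L^{(2p-s\beta)/(s+2)})$, and raising it to the power $(s+2)/s$ gives $\mathcal O(\Delta t_L^{2p/s-\beta})$. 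Multiplying by $(4/\varepsilon)^{2/s}$ and using $\Delta t_L\asymp\varepsilon^{1/p}$ from both sides, we get $\varepsilon^{-2/s}\cdot\varepsilon^{2/s-\beta/p}=\varepsilon^{-\beta/p}$. The main obstacle is exactly this exponent bookkeeping: it requires the two-sided relation $\Delta t_L\asymp\varepsilon^{1/p}$, which I obtain from the minimality of the ceiling in \eqref{eq:theoretical_level_bound} (the lower bound). It also requires $C_A$ and $C_W$ to be uniform in $\ell$, so that the geometric-sum constant does not depend on $L$.

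Adding the three contributions gives \eqref{eq:work_scaling_multilevel_general}. If moreover $s_0\geq 2p/\beta$, then $2/s_0\leq\beta/p$. Hence $\varepsilon^{-2/s_0}\leq\varepsilon^{-\beta/p}$ for $\varepsilon\leq1$, and \eqref{eq:work_scaling_multilevel} follows.
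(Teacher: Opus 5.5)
Your proposal is correct and follows essentially the same route as the paper's proof. Like the paper, you show that rounding up preserves the two active split constraints, bound the relaxed work \eqref{eq:ml_work_optimal} through Assumption~\ref{ass:level_difference_decay} and a geometric sum dominated by the finest level (using $s>2p/\beta$ with $s$ independent of $L$), bound the rounding surcharge by $\mathcal O(\Delta t_L^{-\beta})$, and use minimality of $L$ to obtain $\Delta t_L\asymp\varepsilon^{1/p}$ and hence powers of $\varepsilon$.
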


We prove Proposition~\ref{prop:work_scaling_multilevel} in
Appendix~\ref{app:complexity_proof_multilevel}.

\paragraph{Allocation of $\mathbf{N}$ under the root-exponential estimate.}

{
The allocation of~\eqref{eq:optimal_N_ml} is based on the algebraic error
estimate~\eqref{eq:algebraic_error}, which we use mainly for the complexity
analysis. In the numerical implementation we instead allocate the quadrature
points based on the root-exponential quadrature error estimates observed for
the level-zero term and the level differences. Let
$A_0^{\mathrm{re}}>0$ and $B_0>0$ denote the level-zero prefactor and rate.
For the correction levels, let $A_\ell^{\mathrm{re}}>0$ and $B_\ell>0$
denote the corresponding prefactors and rates. Section~\ref{subsubsec:quadrature_assumptions},
in particular Figure~\ref{fig:correction_exponential_cover}, provides
numerical evidence that the fitted rates $B_\ell$ are approximately constant
across the correction levels considered. We therefore use a common correction
rate $B>0$, that is, $B_\ell=B$ for $\ell=1,\ldots,L$, and write the empirical
error estimates as
\begin{equation}
\begin{aligned}
\left|R_N^{\tilde\sigma}(g_0)\right|
&\leq
A_0^{\mathrm{re}}
\exp\!\left(-B_0\sqrt N\right),
\\
\left|R_N^{\tilde\sigma}(\Delta g_\ell)\right|
&\leq
A_\ell^{\mathrm{re}}
\exp\!\left(-B\sqrt N\right),
\qquad \ell=1,\ldots,L.
\end{aligned}
\label{eq:ml_root_exponential_estimate}
\end{equation}
We then determine the number of quadrature points at each level by solving
\begin{equation}
\begin{aligned}
\min_{N_0,\ldots,N_L}
\quad&
W_0N_0
+
\sum_{\ell=1}^L
\left(W_\ell+W_{\ell-1}\right)N_\ell
\\
\text{subject to}\quad&
A_0^{\mathrm{re}}
\exp\!\left(-B_0\sqrt{N_0}\right)
+
\sum_{\ell=1}^L
A_\ell^{\mathrm{re}}
\exp\!\left(-B\sqrt{N_\ell}\right)
\leq
\varepsilon_{\mathrm{quad}},
\\
&
N_0\geq1,
\qquad
N_\ell\geq1,
\quad
\ell=1,\ldots,L .
\end{aligned}
\label{eq:ml_root_exponential_optimization}
\end{equation}
We solve the
resulting problem~\eqref{eq:ml_root_exponential_optimization} numerically, and take the ceilings of the resulting values,
which preserves the constraint
in~\eqref{eq:ml_root_exponential_optimization} because each error term is
decreasing in its corresponding number of quadrature points.
}

{
\begin{remark}[Practical construction of the multilevel quadrature parameters]
\label{rem:practical_ml_quadrature_parameters}
The algebraic allocation~\eqref{eq:optimal_N_ml} and the root-exponential
optimization~\eqref{eq:ml_root_exponential_optimization} involve the
prefactors $A_\ell$ and $A_\ell^{\mathrm{re}}$ at all correction levels.
Fitting these prefactors separately would require pilot quadrature
computations at every increasingly expensive level. In practice, we estimate
the level-zero parameters $A_0$, $s_0$, $A_0^{\mathrm{re}}$, and $B_0$, and
the first-correction prefactors $A_1$ and $A_1^{\mathrm{re}}$. For
$\ell=1,\ldots,L$, we then use the propagated surrogates
\[
\widehat A_\ell
:=
A_1
\left(
\frac{\Delta t_\ell}{\Delta t_1}
\right)^p,
\qquad
\widehat A_\ell^{\mathrm{re}}
:=
A_1^{\mathrm{re}}
\left(
\frac{\Delta t_\ell}{\Delta t_1}
\right)^p.
\]
The common algebraic index $s$ and the common root-exponential rate $B$ are
chosen as described above. Figures~\ref{fig:correction_algebraic_cover}
and~\ref{fig:correction_exponential_cover} provide numerical evidence that
the correction rates are approximately level independent and that both fitted
prefactor sequences decay with an exponent close to $p$. Thus, the
quadrature parameters on the finer correction levels can be constructed
without fitting their quadrature errors separately. The propagated
prefactors are empirical implementation choices; the algebraic complexity
analysis continues to rely on Assumption~\ref{ass:level_difference_decay}.
\end{remark}
}

\section{Numerical Results}
\label{sec:numerical_results}

{In this section, we \red{provide finite-range numerical diagnostics for} the assumptions and error estimates
introduced in Section~\ref{sec:methodology} and assess the performance of the
single-level and multilevel pricing methods developed in
Sections~\ref{sec:SL_method} and~\ref{sec:ML_method}.
\red{These diagnostics do not prove the corresponding bounds on the entire
Fourier half-line.}
\blue{Section~\ref{subsec:time_discretization_numerics} studies the convergence
of the fractional Adams approximation and of the corresponding fully discrete
Fourier integrand. We compare their measured finite-range slopes with the
empirical reference rate $1+\alpha$ discussed in
Remark~\ref{rem:observed_rate}; these diagnostics do not assign values to the
generic theoretical exponents $q$, $r_T$, or $p$.} {We also test the sharp price expansion
in Assumption~\ref{ass:sharp_price_expansion} and the Richardson estimator of
Proposition~\ref{prop:richardson}.}
Section~\ref{subsec:quadrature_numerics} examines the choice of the
Gauss-Laguerre scaling factor in~\eqref{eq:sigma_choice} and the algebraic and
root-exponential quadrature error estimates of
Section~\ref{sec:SL_method}. For the multilevel method, we additionally examine
the regularity of the level differences $\Delta g_\ell$, the existence of a
common lower bound $s$ for the smoothness indices $s_\ell$
in~\eqref{eq:common_order}, and the decay of the constants $A_\ell$ defined
in~\eqref{eq:level_difference_estimate}, as assumed in
Assumption~\ref{ass:level_difference_decay}.
Finally, Section~\ref{subsec:pricing_comparison} compares the accuracy and
computational cost of the single-level and multilevel methods, relates the
observed work to the complexity results of
Sections~\ref{sec:SL_method} and~\ref{sec:ML_method}, and compares the multilevel method
with the BL2 Markovian approximation in ~\cite{bayer2023weak}.}

\paragraph{Benchmark parameter sets.}

We consider European call options throughout the numerical experiments. We use
the EuRos and SPY rough Heston parameter sets reported
in \cite{boyarchenko2025fast}{, both originating from calibrations to market
implied volatility data,} as representative examples of two distinct
parameter regimes. The used model parameters are given in
Table~\ref{tab:numerical_benchmarks}.

\begin{table}[H]
    \centering
    \small
    \caption{
    Rough Heston parameter sets used in the numerical experiments.
    }
    \label{tab:numerical_benchmarks}
    \begin{tabular}{ccccccc}
        \toprule
        Case
        & $\alpha$
        & $\gamma$
        & $\nu$
        & $\rho$
        & $V_0$
        & $\theta$
        \\
        \midrule
        EuRos
        & $0.62$
        & $0.1$
        & $0.331$
        & $-0.681$
        & $0.0392$
        & $0.3156$
        \\
        SPY
        & $0.7151$
        & $1.8967$
        & $0.6144356$
        & $-0.6704$
        & $0.06246$
        & $0.03848$
        \\
        \bottomrule
    \end{tabular}
\end{table}

\blue{Unless stated otherwise, the time-step hierarchy is initialized with
$\Delta t_0=T/32$. The refinement results in
Section~\ref{subsec:time_discretization_numerics} provide numerical evidence,
for the tested configurations, that this choice yields stable fractional
Adams approximations and that the first levels already exhibit the
asymptotic behavior used by the Richardson indicator.}

\paragraph{{Selection of the damping parameter.}}
\label{par:practical_damping_selection}
{The damping rule proposed in~\cite{bayer2023optimal} selects $R$ by solving}
\[
{
R^\star
\in
\operatorname*{arg\,min}_{R\in\delta_V}
\sup_{u\in\mathbb{R}}
|g(u;R,\Theta_X,\Theta_P)|
=
\operatorname*{arg\,min}_{R\in\delta_V}
\blue{\left|g(0;R,\Theta_X,\Theta_P)\right|}.
}
\]
{\color{blue}
For the rough Heston model, this optimization problem cannot be applied
directly because the model admissibility set $\delta_X$ in
Assumption~\ref{ass:model}, and hence
$\delta_V=\delta_P\cap\delta_X$, is not characterized analytically. Define
\red{Moment explosions and critical moments in the rough Heston model are
studied in~\cite{gerhold2019moment,kellerressel2020comparison}.}
the right critical log-price moment by
\[
u_+(T)
:=
\sup\left\{
u>1:\mathbb E\!\left[e^{uX_T}\right]<\infty
\right\}.
\]
For the European call payoff~\eqref{eq:call_payoff}, the explicit transform
in~\eqref{eq:call_payoff_transform} gives
$\delta_P=(-\infty,-1)$, and full Fourier admissibility requires
\[
R\in(-u_+(T),-1).
\]
The payoff boundary $R=-1$, which corresponds to the pole at
$\xi=-\mathrm{i}$, can be enforced exactly. We therefore take the payoff
margin $\eta_P=0.5$, use $R_0=-1-\eta_P=-1.5$, and minimize the numerical
approximation of $|g(0;R,\Theta_X,\Theta_P)|$ using L-BFGS-B subject to the
one-sided constraint
\[
R\leq-1-\eta_P=-1.5.
\]
This constraint guarantees payoff-transform admissibility. However, because
no certified value or lower bound for $u_+(T)$ is available for the parameter
regimes considered, the model-moment boundary $R>-u_+(T)$ cannot be enforced.
The resulting damping selection is therefore heuristic with respect to model
admissibility. Although the numerical Riccati solution is finite at the
selected $R$, this does not certify model-moment admissibility. The selected
damping parameters for every distinct numerical configuration are reported in
Table~\ref{tab:numerical_damping_values} in
Appendix~\ref{app:numerical_damping_values}.

The scaling parameter of the
Gauss--Laguerre rule is subsequently chosen according
to~\eqref{eq:sigma_choice}.
}

{\color{blue}
\subsection{Numerical evidence for the time-discretization assumptions}
\label{subsec:time_discretization_numerics}

We examine the time-discretization assumptions for the two configurations used
in Section~\ref{subsubsec:single_multilevel_comparison}: EuRos with $T=2$,
$S_0=K=1000$, and $r=0$, and SPY with $T=2/365$, $S_0=K=1$, and $r=0$.
The damping parameter and the default time-step hierarchy are selected as
described above. All reported reference quantities use
$\Delta t_{\mathrm{ref}}=T/4096$. Unless stated otherwise, the reference
Fourier integrals are evaluated using $\bar N=128$ Gauss--Laguerre nodes.

We first assess Assumption~\ref{ass:riccati_pointwise_error}. Since the exact
Riccati solution is unavailable, let $\bar h$ denote the Adams approximation
computed on the reference grid. For numerical validation, we use the reference
error
\begin{equation}
\bar E_\ell^{\,h,\mathrm{w}}(u)
:=
\max_{1\leq j\leq M_\ell}
t_{\ell,j}^{1-\alpha}
\left|
h_{\ell,j}(u+\mathrm{i}R)
-\bar h(u+\mathrm{i}R,t_{\ell,j})
\right|.
\label{eq:numerical_weighted_nodal_error}
\end{equation}
Figure~\ref{fig:nodal_adams_rate} provides numerical evidence for
Assumption~\ref{ass:riccati_pointwise_error} while estimating the numerical
rate $q$ from the decay of $\bar E_\ell^{\,h,\mathrm{w}}(u)$ under refinement.

\begin{figure}[!htbp]
    \centering
    \begin{subfigure}[t]{0.49\textwidth}
        \centering
        \includegraphics[width=\linewidth]
        {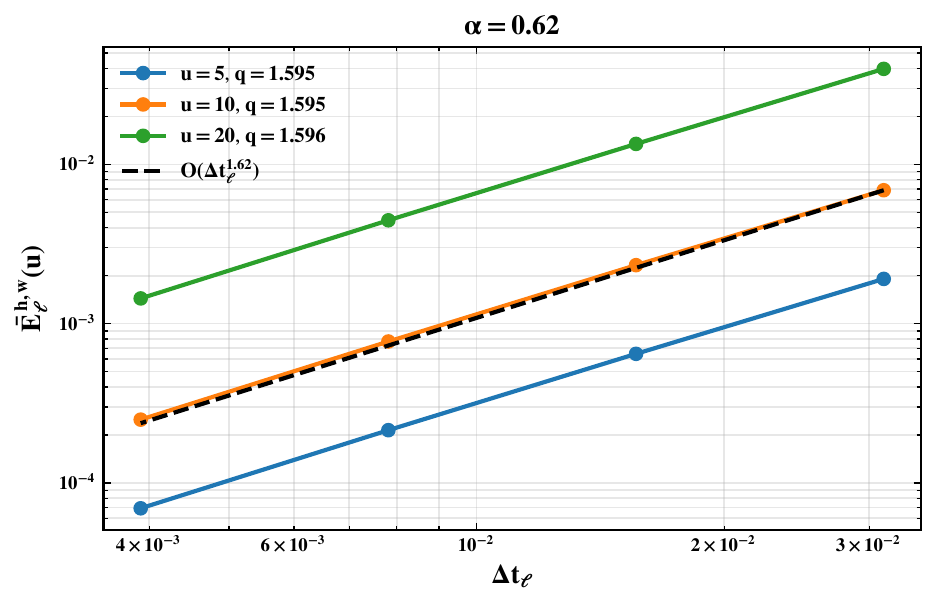}
        \caption{EuRos: $u\in\{5,10,20\}$.}
    \end{subfigure}
    \hfill
    \begin{subfigure}[t]{0.49\textwidth}
        \centering
        \includegraphics[width=\linewidth]
        {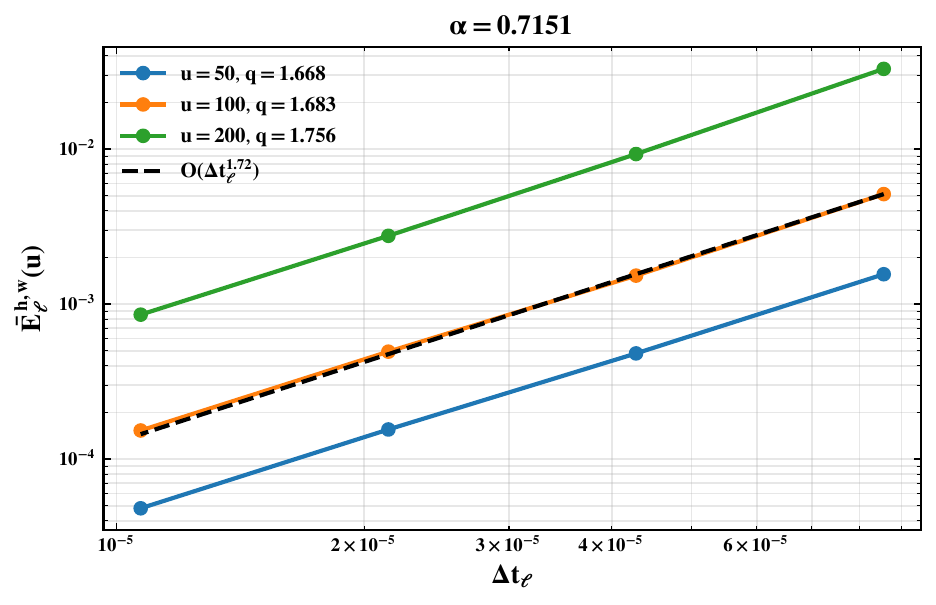}
        \caption{SPY: $u\in\{50,100,200\}$.}
    \end{subfigure}
    \caption{
    $\bar E_\ell^{\,h,\mathrm{w}}(u)$ for EuRos and SPY. The dashed lines
    indicate $\mathcal O(\Delta t_\ell^{1+\alpha})$.
    }
    \label{fig:nodal_adams_rate}
\end{figure}

The fitted rates are approximately $1.595$--$1.596$ for EuRos and
$1.668$--$1.756$ for SPY, broadly consistent with $1+\alpha=1.62$ and
$1+\alpha=1.7151$, respectively. As discussed in
Remark~\ref{rem:observed_rate}, these results provide numerical evidence for
the empirical rate $1+\alpha$ in the tested configurations, but do not provide
a theoretical proof or identify the generic exponent $q$.
Accordingly, we use $1+\alpha$ as the numerical rate in the estimators below.

We next assess the time-integration rate $r_T$ in
Assumption~\ref{ass:time_integration_error}. For
$\xi=u+\mathrm{i}R$, define the time-integration error
\[
E_\ell^{\,\mathrm{time}}(\xi)
:=
\left|
\int_0^T \mathcal J_\xi(t)\,dt
-
\mathcal T_\ell
\left[
\left(\mathcal J_\xi(t_{\ell,j})\right)_{j=0}^{M_\ell}
\right]
\right|.
\]
The exact function $\mathcal J_\xi$ is unavailable because it depends on the
unknown Riccati solution $h$. Using $\bar h$, define
\[
\bar{\mathcal J}_\xi(t)
:=
J\!\left(\xi,\bar h(\xi,t)\right).
\]
For numerical validation, we approximate $E_\ell^{\,\mathrm{time}}(\xi)$ by
\[
\bar E_\ell^{\,\mathrm{time}}(\xi)
:=
\left|
\mathcal T_{\mathrm{ref}}[\bar{\mathcal J}_\xi]
-\mathcal T_\ell[\bar{\mathcal J}_\xi]
\right|,
\]
where $\mathcal T_{\mathrm{ref}}$ and $\mathcal T_\ell$ denote the same
composite trapezoidal rule on the reference grid and on the grid with time
step $\Delta t_\ell$, respectively. The decay of this estimate under
refinement is shown in
Figure~\ref{fig:time_integration_rate}.

\begin{figure}[!htbp]
    \centering
    \begin{subfigure}[t]{0.49\textwidth}
        \centering
        \includegraphics[width=\linewidth]
        {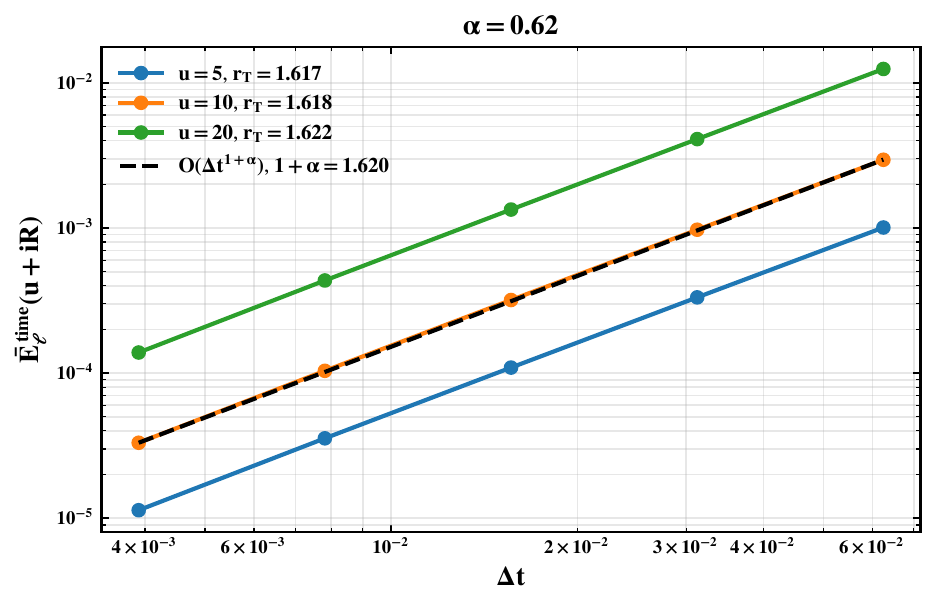}
        \caption{EuRos configuration.}
    \end{subfigure}
    \hfill
    \begin{subfigure}[t]{0.49\textwidth}
        \centering
        \includegraphics[width=\linewidth]
        {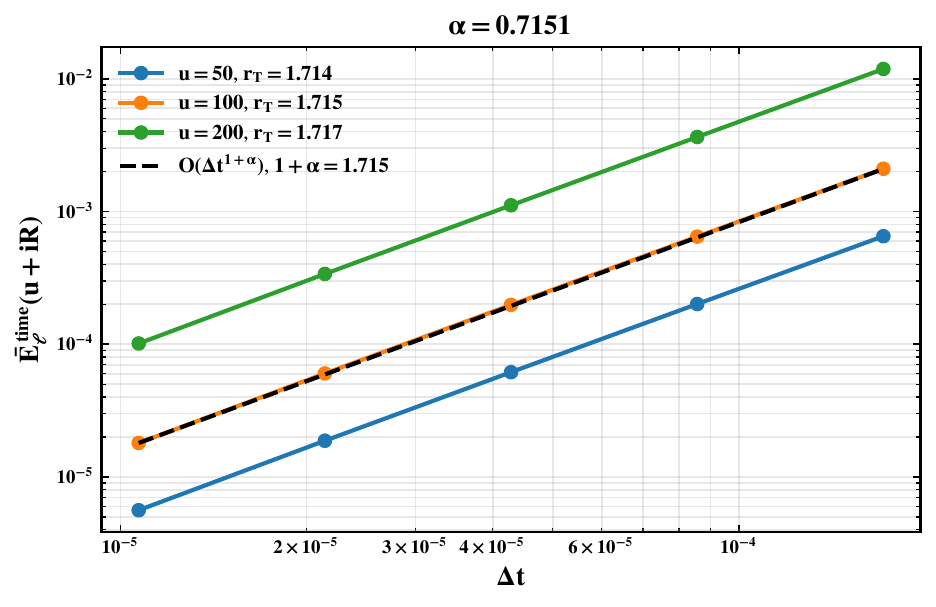}
        \caption{SPY configuration.}
    \end{subfigure}
    \caption{
    $\bar E_\ell^{\,\mathrm{time}}(u+\mathrm{i}R)$ for EuRos and SPY.
    The dashed lines indicate
    $\mathcal O(\Delta t_\ell^{1+\alpha})$.
    }
    \label{fig:time_integration_rate}
\end{figure}

The fitted rates are approximately $1.617$--$1.622$ for EuRos and
$1.714$--$1.717$ for SPY, which are close to $1+\alpha=1.62$ and
$1+\alpha=1.7151$, respectively. These results provide numerical evidence
for using $r_T=1+\alpha$ in the tested configurations, but a theoretical proof
of this rate is not provided in this work. Combining these observations with
the measured nodal rates and the definition
$p_G=\min\{q,r_T\}$ in~\eqref{eq:exponent_error_and_rate}, we obtain
empirically $p_G\approx1+\alpha$ and use $p_G=1+\alpha$ in the numerical
experiments below.

We next examine whether the Fourier-integrand error exhibits the same rate.
Define the exact Fourier-integrand error by
\[
E_\ell^{\,g}
:=
\left\|g-g_\ell\right\|_{L^1(0,\infty)}.
\]
Provided that Conjecture~\ref{conj:contour_domination} holds,
Proposition~\ref{prop:discretization_error} transfers the exponent rate $p_G$
to the $L^1$ error of the Fourier integrand. Thus, it predicts
$E_\ell^{\,g}=\mathcal O(\Delta t_\ell^{p_G})$ and hence $p=p_G$ in
Assumption~\ref{ass:fourier_integrand_rate}.

For numerical validation, let $\bar g$ denote the reference integrand computed
from the fully discrete exponent
in~\eqref{eq:discrete_characteristic_exponent} on the reference grid, and set
\[
\bar E_\ell^{\,g}
:=
\left\|\bar g-g_\ell\right\|_{L^1(0,\infty)}.
\]
Its convergence is shown in
Figure~\ref{fig:reference_time_discretization_errors}.

\begin{figure}[!htbp]
    \centering
    \begin{subfigure}[t]{0.49\textwidth}
        \centering
        \includegraphics[width=\linewidth]
        {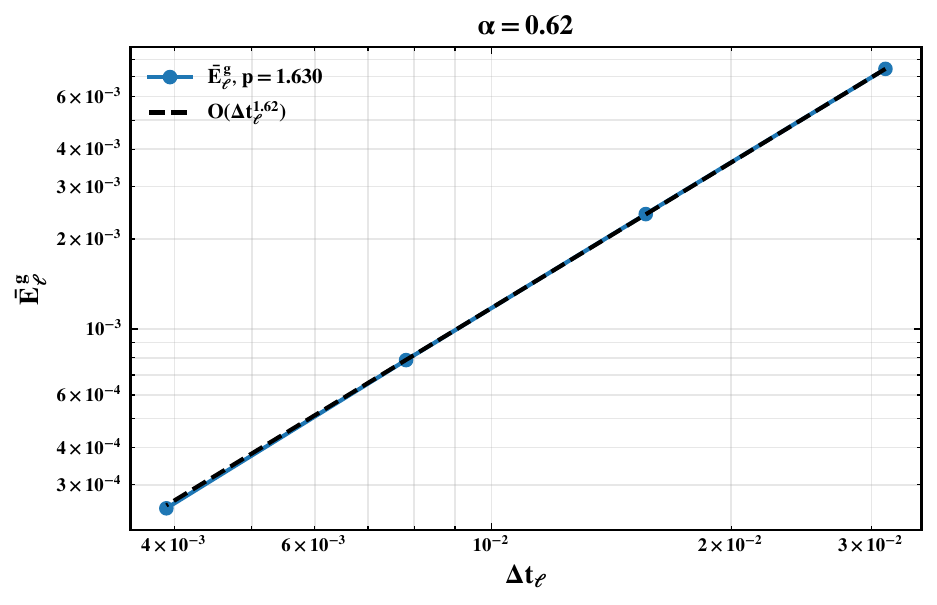}
        \caption{EuRos configuration.}
    \end{subfigure}
    \hfill
    \begin{subfigure}[t]{0.49\textwidth}
        \centering
        \includegraphics[width=\linewidth]
        {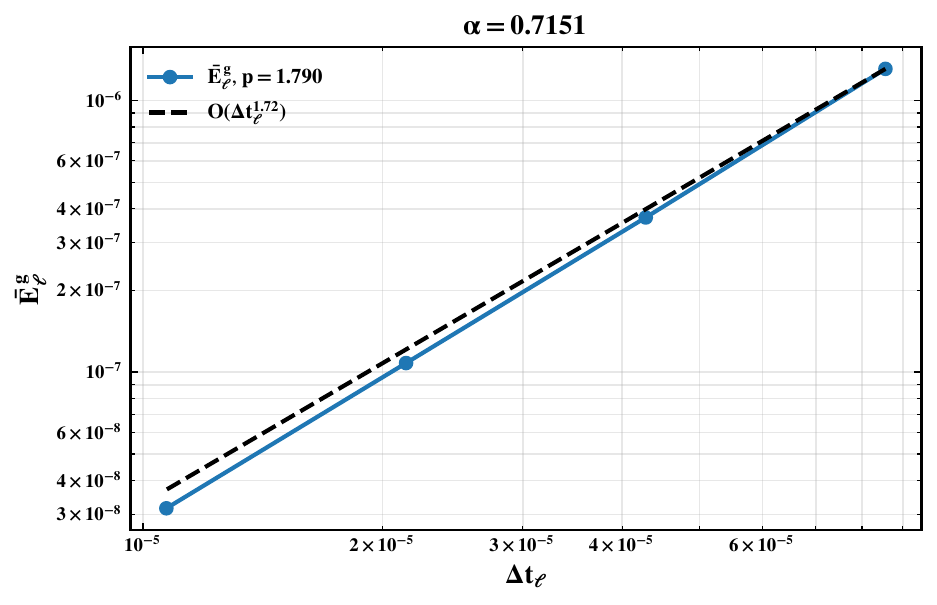}
        \caption{SPY configuration.}
    \end{subfigure}
    \caption{
    $\bar E_\ell^{\,g}$ for EuRos and SPY.
    The dashed lines indicate $\mathcal O(\Delta t_\ell^{1+\alpha})$.
    }
    \label{fig:reference_time_discretization_errors}
\end{figure}

The fitted rates of $\bar E_\ell^{\,g}$ in
Figure~\ref{fig:reference_time_discretization_errors} are
$1.630$ for EuRos and $1.790$ for SPY, supporting
Assumption~\ref{ass:fourier_integrand_rate} with $p\approx p_G\approx
1+\alpha$. We therefore use
\[
p=p_G=1+\alpha
\]
in the pricing experiments below.

The exact price error $E_\ell^{\,V}=|V-V_\ell|$ cannot be evaluated directly,
since it contains the unknown quantity of interest $V$. We therefore use the
Richardson estimator $\widehat E_\ell^{\,V}$ defined
in~\eqref{eq:richardson_price_notation}. Under
Assumption~\ref{ass:sharp_price_expansion},
Proposition~\ref{prop:richardson} shows that it is asymptotically equivalent to
$E_\ell^{\,V}$.

For numerical validation, define the reference value
\[
\bar V
:=
\mathcal Q_{\bar N}^{\widetilde\sigma}[\bar g],
\]
and approximate $E_\ell^{\,V}$ by
\[
\bar E_\ell^{\,V}
:=
|\bar V-V_\ell|.
\]
The same Gauss--Laguerre rule is used to evaluate $V_\ell$ in this comparison.

In practice, in the computable Richardson
estimator~\eqref{eq:computable_richardson_estimator}, we use
$\bar N^{\mathrm{comp}}=16$ quadrature points. To assess the resulting
quadrature error, let
$Q_{N,\ell}:=\mathcal Q_N^{\widetilde\sigma}[\Delta g_\ell]$. For
$\ell=1,\ldots,4$, we compare it with the reference estimator through
\[
\begin{aligned}
\frac{
\left|
\widehat E_{\bar N,\ell}^{\,V}
-
\widehat E_{\bar N^{\mathrm{comp}},\ell}^{\,V}
\right|
}{
\widehat E_{\bar N,\ell}^{\,V}
}
&=
\frac{
\left|
\left|Q_{\bar N,\ell}\right|
-
\left|Q_{\bar N^{\mathrm{comp}},\ell}\right|
\right|
}{
\left|Q_{\bar N,\ell}\right|
}
\\
&\leq
\frac{
\left|
Q_{\bar N,\ell}-Q_{\bar N^{\mathrm{comp}},\ell}
\right|
}{
\left|Q_{\bar N,\ell}\right|
}
=:
\rho_\ell.
\end{aligned}
\]
For EuRos, $\rho_\ell$ ranges from
$2.5215\times10^{-9}$ to $5.5594\times10^{-9}$, corresponding to only
$2.5215\times10^{-7}\%$--$5.5594\times10^{-7}\%$. For SPY, it ranges from
$1.0290\times10^{-3}$ to $2.1979\times10^{-2}$, corresponding to
$0.1029\%$--$2.1979\%$. These small ratios show that the quadrature error is
negligible relative to the level difference and are consistent with
condition~\eqref{eq:quadrature_condition_order_p} for both configurations.
The practical indicator $\eta_{\mathrm{disc}}(\ell)$
in~\eqref{eq:practical_disc_indicator} is computed using the same pilot
resolution $\bar N^{\mathrm{comp}}=16$. Both quantities are therefore
computable without evaluating $V$. For numerical validation, we examine
$(V_\ell-\bar V)/\Delta t_\ell^p$ and
$\widehat E_{\bar N^{\mathrm{comp}},\ell}^{\,V}/\bar E_\ell^{\,V}$. Under
Assumption~\ref{ass:sharp_price_expansion}, the normalized error
$(V_\ell-V)/\Delta t_\ell^p$ converges to $c_V\neq0$. Thus, stabilization of
its reference approximation $(V_\ell-\bar V)/\Delta t_\ell^p$ around a
nonzero value supports the assumed expansion, while a ratio close to one
shows that the computable Richardson estimate accurately approximates the
discretization error.

\begin{figure}[!htbp]
    \centering
    \begin{subfigure}[t]{0.49\textwidth}
        \centering
        \includegraphics[width=\linewidth]
        {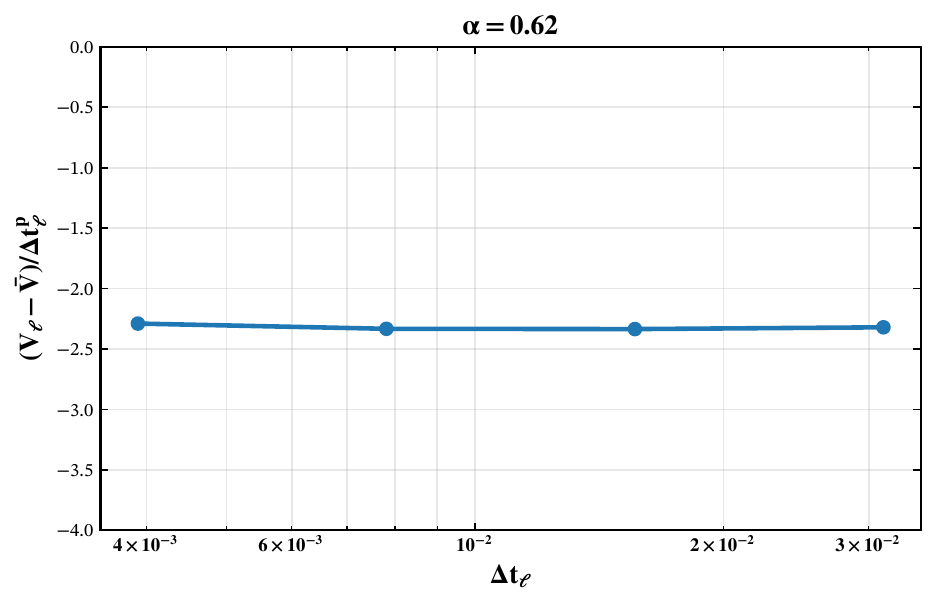}
        \caption{EuRos: signed normalized error.}
    \end{subfigure}
    \hfill
    \begin{subfigure}[t]{0.49\textwidth}
        \centering
        \includegraphics[width=\linewidth]
        {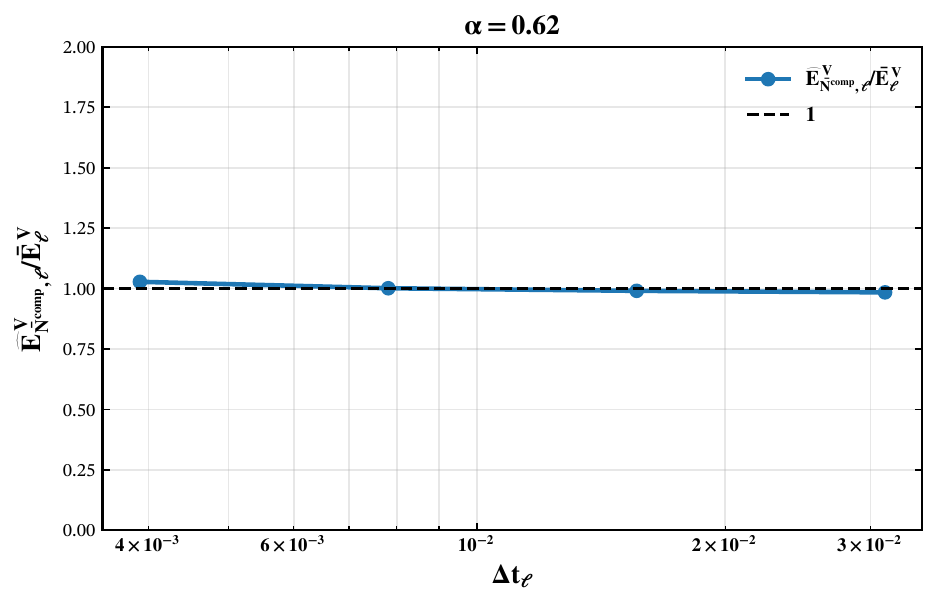}
        \caption{EuRos: Richardson ratio.}
    \end{subfigure}

    \medskip

    \begin{subfigure}[t]{0.49\textwidth}
        \centering
        \includegraphics[width=\linewidth]
        {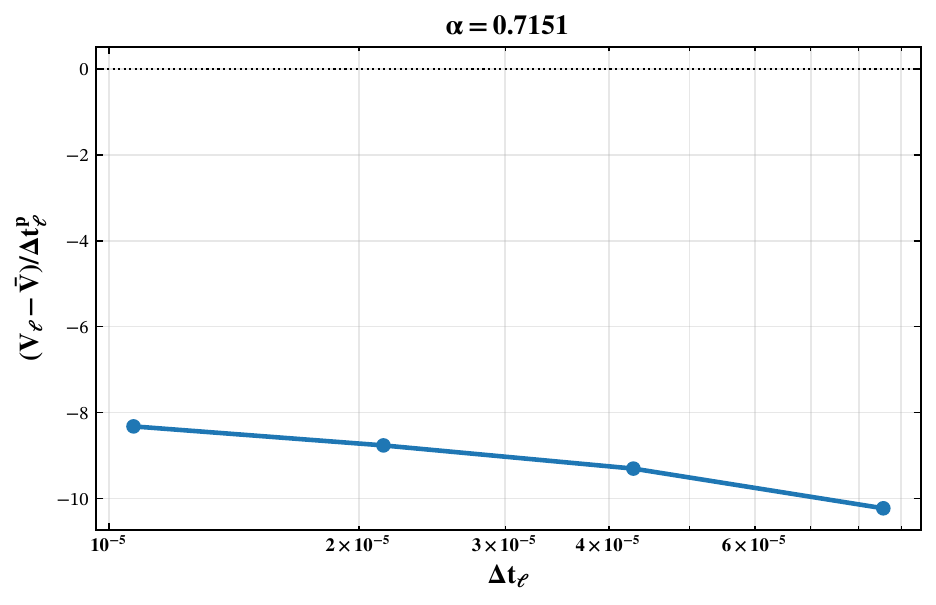}
        \caption{SPY: signed normalized error.}
    \end{subfigure}
    \hfill
    \begin{subfigure}[t]{0.49\textwidth}
        \centering
        \includegraphics[width=\linewidth]
        {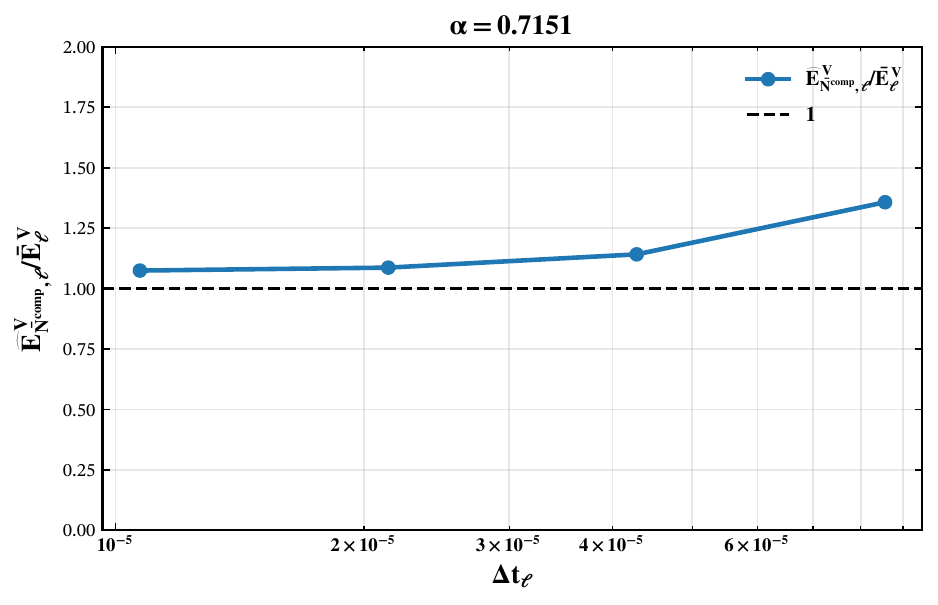}
        \caption{SPY: Richardson ratio.}
    \end{subfigure}
    \caption{
    Numerical evidence for Assumption~\ref{ass:sharp_price_expansion} and
    Proposition~\ref{prop:richardson}, using $p=1+\alpha$ for each parameter
    set. The reference error $\bar E_\ell^{\,V}$ uses $\bar N=128$
    Gauss--Laguerre nodes, while the Richardson estimate
    $\widehat E_{\bar N^{\mathrm{comp}},\ell}^{\,V}$ uses
    $\bar N^{\mathrm{comp}}=16$ nodes.
    }
    \label{fig:sharp_price_expansion_validation}
\end{figure}

For both parameter sets, the normalized errors are approximately stable and
remain bounded away from zero over the tested levels, supporting the nonzero
leading coefficient $c_V$ in Assumption~\ref{ass:sharp_price_expansion}.
Moreover, the Richardson error
estimate $\widehat E_{\bar N^{\mathrm{comp}},\ell}^{\,V}$ remains close to
$\bar E_\ell^{\,V}$ for both parameter sets.

We also compare $\bar E_\ell^{\,V}$ with
$\widehat E_{\bar N^{\mathrm{comp}},\ell}^{\,V}$ and the practical indicator
$\eta_{\mathrm{disc}}(\ell)$.

\begin{figure}[!htbp]
    \centering
    \begin{subfigure}[t]{0.49\textwidth}
        \centering
        \includegraphics[width=\linewidth]
        {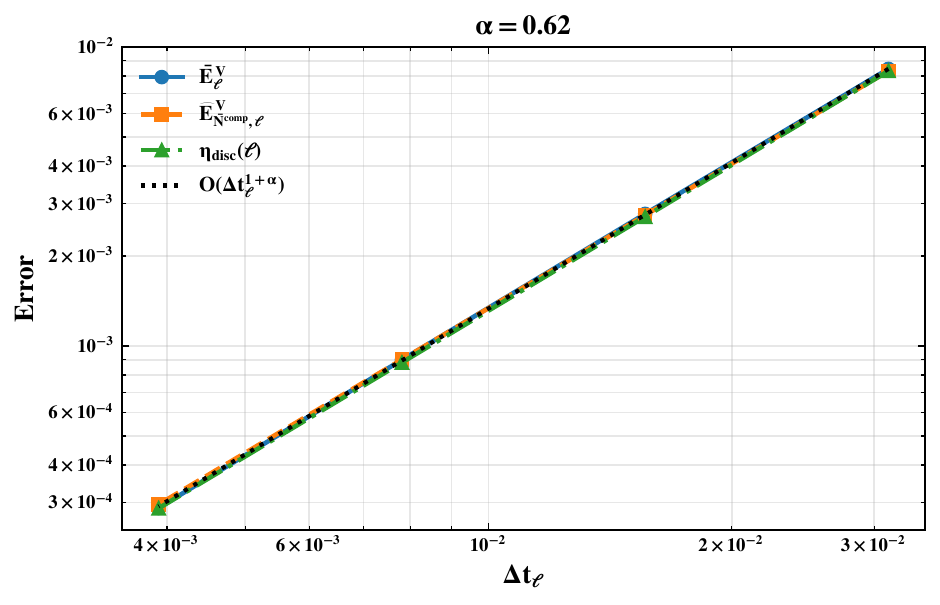}
        \caption{EuRos configuration.}
    \end{subfigure}
    \hfill
    \begin{subfigure}[t]{0.49\textwidth}
        \centering
        \includegraphics[width=\linewidth]
        {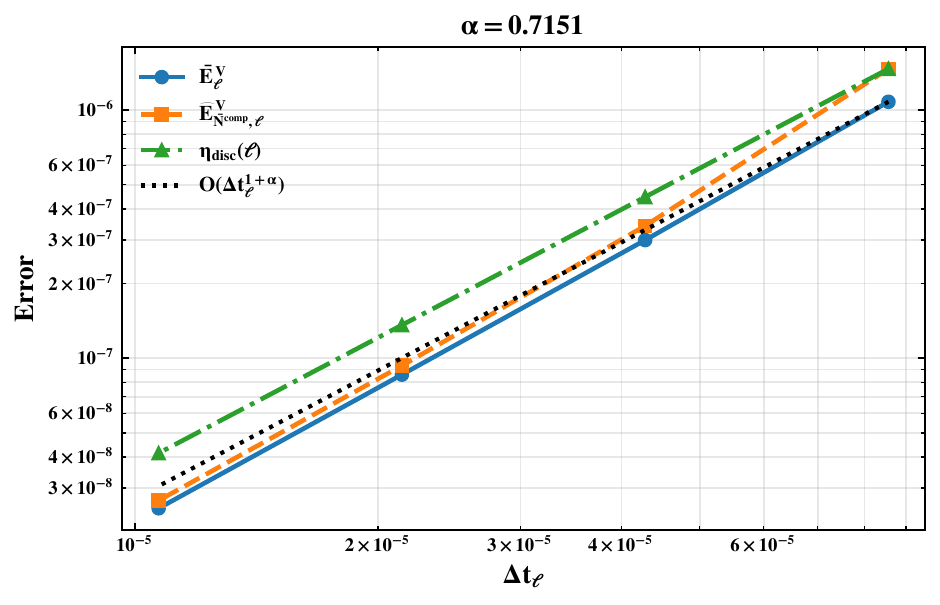}
        \caption{SPY configuration.}
    \end{subfigure}
    \caption{
    Reference error $\bar E_\ell^{\,V}$, Richardson estimator
    $\widehat E_{\bar N^{\mathrm{comp}},\ell}^{\,V}$, and practical
    indicator $\eta_{\mathrm{disc}}(\ell)$. The Richardson estimator and
    practical indicator use $\bar N^{\mathrm{comp}}=16$ Gauss--Laguerre
    nodes. The dotted lines
    indicate the respective rates $\mathcal O(\Delta t_\ell^{1+\alpha})$.
    }
    \label{fig:richardson_price_validation}
\end{figure}

Figure~\ref{fig:richardson_price_validation} shows that both computable
quantities reproduce the decay of $\bar E_\ell^{\,V}$, which
supports their use for practical level selection.

}

\subsection{\blue{Scaled Gauss--Laguerre Quadrature and Error Diagnostics}}
\label{subsec:quadrature_numerics}

{We begin by examining the choice of the scaling parameter
$\widetilde\sigma$ in~\eqref{eq:sigma_choice}; see
Section~\ref{subsubsec:scaling_parameter_selection}. We then turn to the
quadrature assumptions underlying the single-level and multilevel analyses of
Sections~\ref{sec:SL_method} and~\ref{sec:ML_method}; see
Section~\ref{subsubsec:quadrature_assumptions}. In particular, we consider the
algebraic quadrature error estimates and the fitted root-exponential error
estimates for the discretized integrands $g_\ell$ and the level differences
$\Delta g_\ell$.}

\subsubsection{Selection of the scaling parameter}
\label{subsubsec:scaling_parameter_selection}

{The decay of $g_\ell(u)$ and $\Delta g_\ell(u)$ as $u\to\infty$
depends on the rough Heston parameters, the maturity $T$, and the damping
parameter $R\in\delta_V$. To illustrate the dependence on maturity, we fix the
EuRos parameter set, optimize $R$ separately for each maturity, and consider
$T\in\{0.1,1,10\}$.
Figure~\ref{fig:GL_decay_maturity} shows
$|g_0(u)|/\|g_0\|_\infty$ and
$|\Delta g_1(u)|/\|\Delta g_1\|_\infty$, where
$\Delta g_1=g_1-g_0$. The decay is substantially slower for short maturities.
This dependence is relevant for the choice of $\widetilde\sigma$ because,
by~\eqref{eq:weighted_integrand}, the quadrature is applied to
$\widetilde g_\ell(u)=e^{\widetilde\sigma u}g_\ell(u)$.
\red{The scale controls the placement of the Laguerre nodes and the constants
in the quadrature remainder estimates. Matching it to the estimated Fourier
decay can reduce the quadrature error and improve the observed convergence
rate.} The same principle of adapting the
numerical transformation to the decay of the integrand has also been exploited
in the quasi Monte Carlo (QMC) and multilevel QMC settings; see
\cite{bayer2024quasi,hammouda2026single}.}

We next {compare the quadrature convergence obtained with the standard
Gauss--Laguerre rule, corresponding to $\widetilde\sigma=1$, with that obtained
using the scaling in~\eqref{eq:sigma_choice}.} For each maturity, the damping
parameter $R$ is optimized once and then held fixed across the two scaling
choices. The {discretization level} and all remaining model and payoff
parameters are also kept fixed. The quadrature errors are measured relative
to a rule with $\bar N=64$ {quadrature points}.
{Figures~\ref{fig:GL_scaling_comparison_level_zero}
and~\ref{fig:GL_scaling_comparison_level_difference} report the quadrature
errors for the level-zero integrand $g_0$ and the first level difference
$\Delta g_1$, respectively.}

\begin{figure}[!htbp]
    \centering
    \begin{subfigure}[t]{0.49\textwidth}
        \centering
        \includegraphics[width=\linewidth]
        {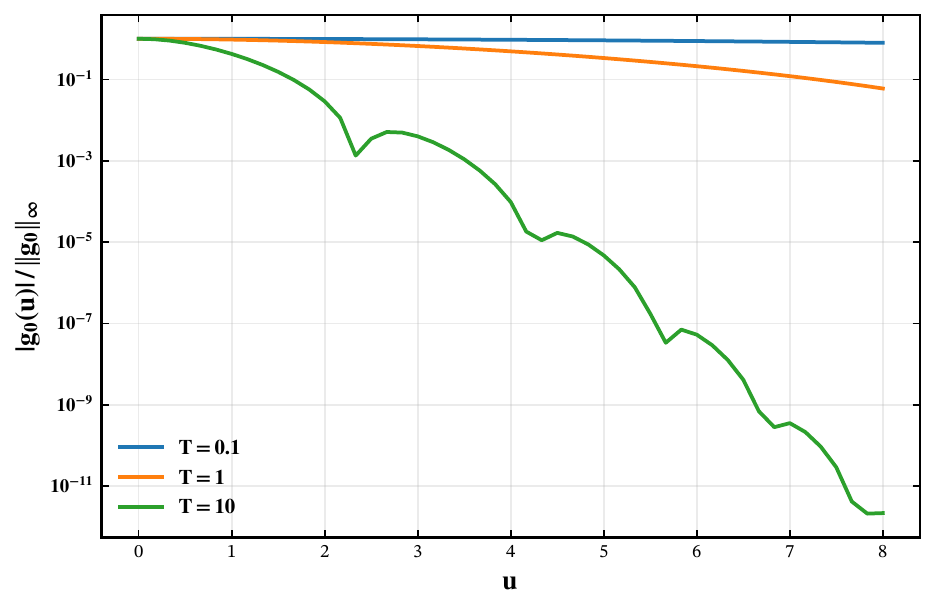}
        \caption{{Level-zero integrand,
        $|g_0(u)|/\|g_0\|_{\infty}$.}}
        \label{fig:GL_integrand_support}
    \end{subfigure}
    \hfill
    \begin{subfigure}[t]{0.49\textwidth}
        \centering
        \includegraphics[width=\linewidth]
        {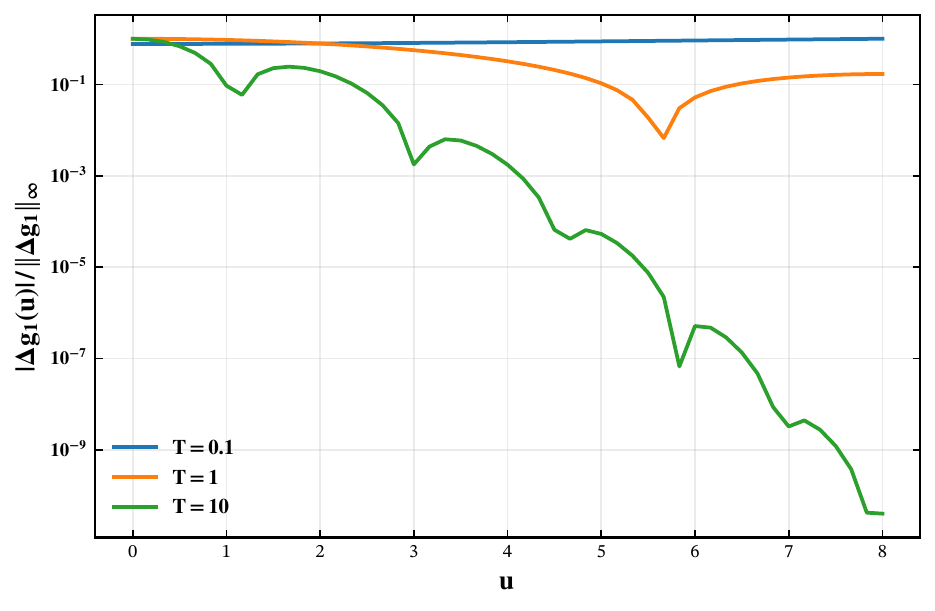}
        \caption{{First level difference,
        $|\Delta g_1(u)|/\|\Delta g_1\|_{\infty}$.}}
        \label{fig:GL_level_difference_support}
    \end{subfigure}
    \caption{{
    Relative magnitude of the level-zero integrand $g_0$ and the first level
    difference $\Delta g_1$ as functions of the integration variable $u$, for
    $T\in\{0.1,1,10\}$. The EuRos parameter set is used with the damping
    parameter $R\in\delta_V$ optimized separately for each maturity.
    }}
    \label{fig:GL_decay_maturity}
\end{figure}

\begin{figure}[!htbp]

\centering
\begin{subfigure}[t]{0.32\textwidth}
    \centering
    \includegraphics[width=\linewidth]
    {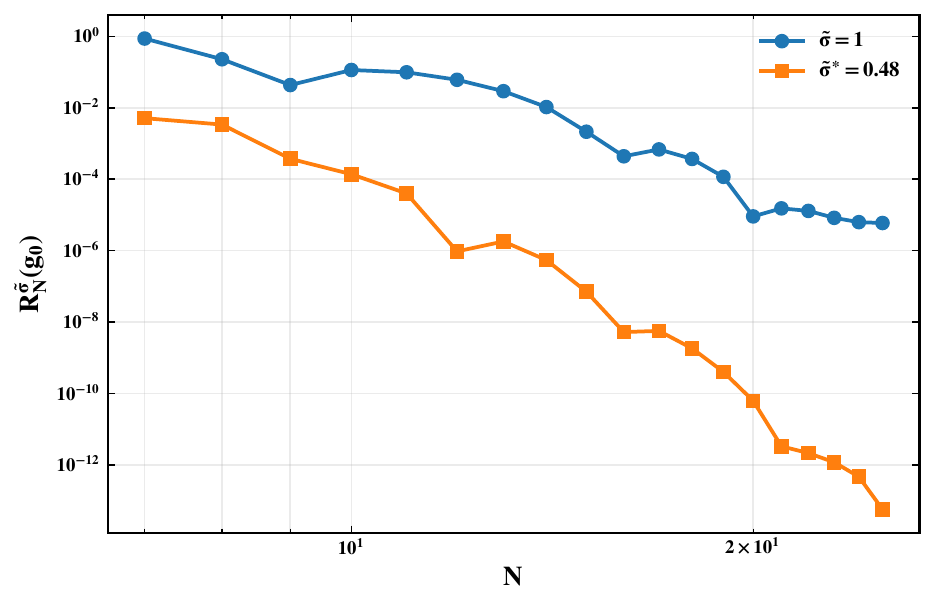}
    \caption{\(T=0.1\), \(\widetilde\sigma^\star=0.48\).}
\end{subfigure}
\hfill
\begin{subfigure}[t]{0.32\textwidth}
    \centering
    \includegraphics[width=\linewidth]
    {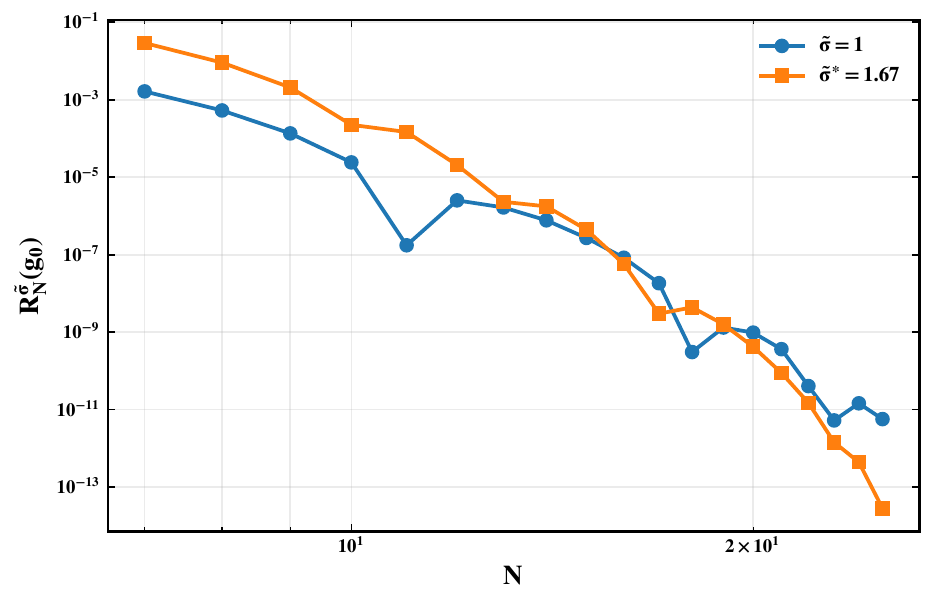}
    \caption{\(T=1\), \(\widetilde\sigma^\star=1.67\).}
\end{subfigure}
\hfill
\begin{subfigure}[t]{0.32\textwidth}
    \centering
    \includegraphics[width=\linewidth]
    {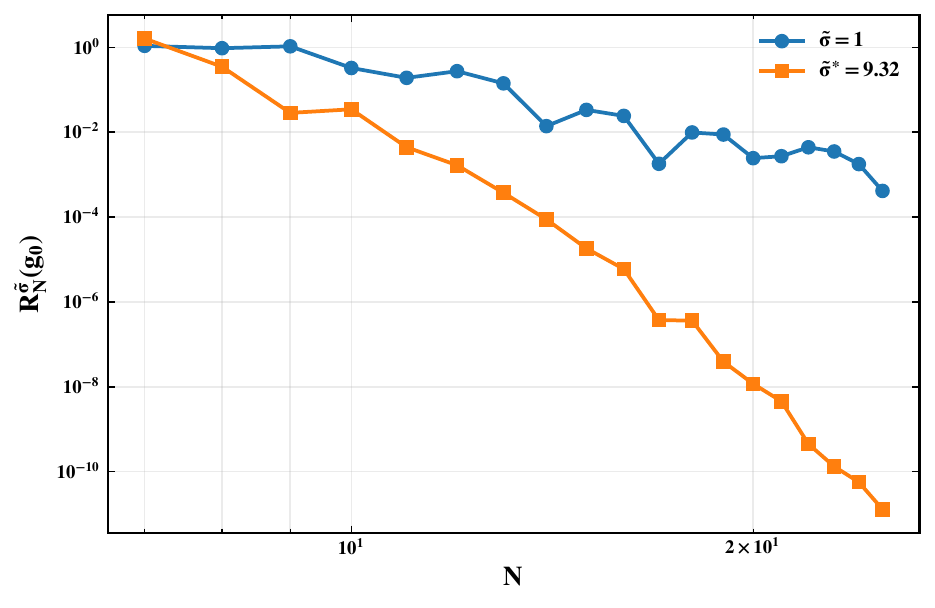}
    \caption{\(T=10\), \(\widetilde\sigma^\star=9.32\).}
\end{subfigure}
\caption{
Quadrature errors for the level-zero integrand $g_0$ at
$T\in\{0.1,1,10\}$, using the standard Gauss--Laguerre rule
$\widetilde\sigma=1$, marked in blue, and the scaling
in~\eqref{eq:sigma_choice}, marked in orange.
}
\label{fig:GL_scaling_comparison_level_zero}
\end{figure}

\begin{figure}[!htbp]
\centering
\begin{subfigure}[t]{0.32\textwidth}
    \centering
    \includegraphics[width=\linewidth]
    {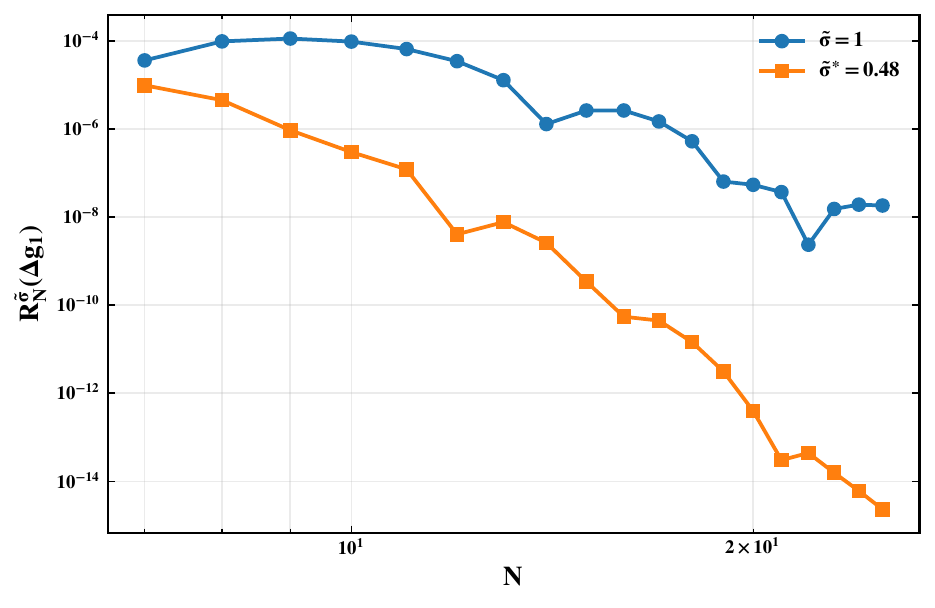}
    \caption{\(T=0.1\), \(\widetilde\sigma^\star=0.48\).}
\end{subfigure}
\hfill
\begin{subfigure}[t]{0.32\textwidth}
    \centering
    \includegraphics[width=\linewidth]
    {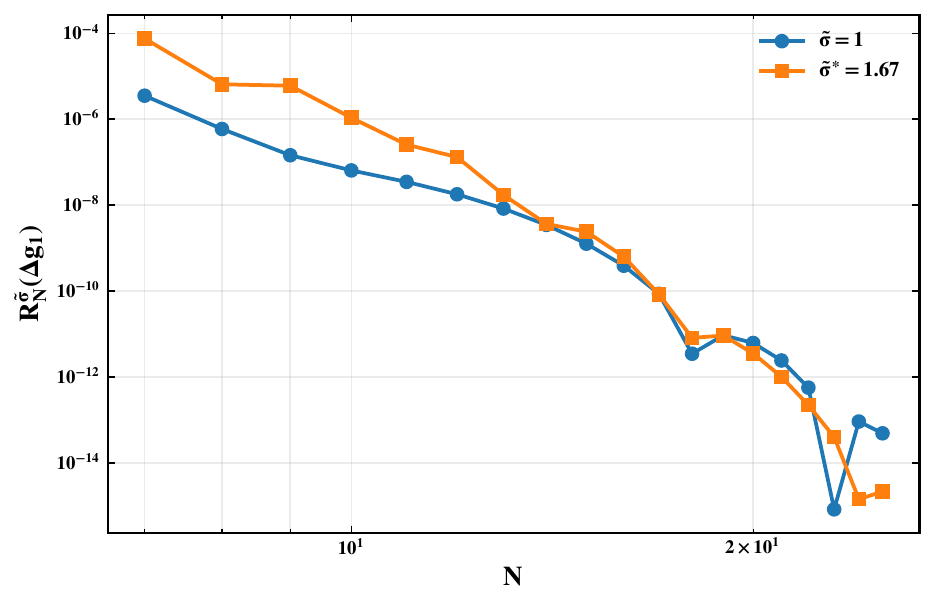}
    \caption{\(T=1\), \(\widetilde\sigma^\star=1.67\).}
\end{subfigure}
\hfill
\begin{subfigure}[t]{0.32\textwidth}
    \centering
    \includegraphics[width=\linewidth]
    {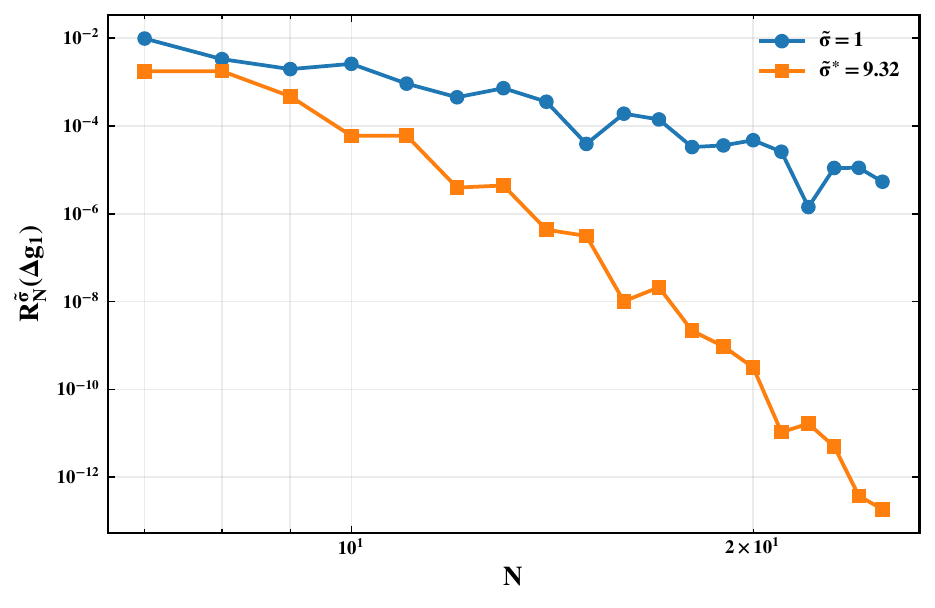}
    \caption{\(T=10\), \(\widetilde\sigma^\star=9.32\).}
\end{subfigure}
\caption{{
Quadrature errors for the first level difference $\Delta g_1$ at
$T\in\{0.1,1,10\}$, using the standard Gauss--Laguerre rule
$\widetilde\sigma=1$, marked in blue, and the scaling
in~\eqref{eq:sigma_choice}, marked in orange.
}}
\label{fig:GL_scaling_comparison_level_difference}
\end{figure}

{For both $g_0$ and $\Delta g_1$, the scaling
in~\eqref{eq:sigma_choice} yields faster convergence at short and long
maturities and performs comparably to the standard rule at $T=1$. We
therefore use this scaling in the remaining experiments.}

\subsubsection{{Quadrature error estimates}}
\label{subsubsec:quadrature_assumptions}

\red{Throughout this subsection, the fitted values of $s_\ell$, $A_\ell$,
and $B_\ell$ are numerical estimates obtained from finite-range quadrature
errors. They are used to assess consistency with the proposed quadrature
models and are not exact values of the theoretical regularity quantities
appearing in the assumptions.}

{The convergence analysis of the Gauss--Laguerre quadrature in
Section~\ref{sec:SL_method} includes the algebraic quadrature error estimate of
Theorem~\ref{thm:algebraic_error} and the root-exponential quadrature error
estimate of Theorem~\ref{thm:root_exponential}. We first compare the two
estimates for $g_0$ in Figure~\ref{fig:cover_model_comparison}. Over the range
of quadrature points considered, the root-exponential estimate provides a
closer description of the observed quadrature errors, whereas the algebraic
estimate is more conservative. \blue{Moreover, the algebraic cover is more
sensitive to the range of quadrature orders $N$ used in its construction.
We therefore show separate algebraic covers for the small- and large-$N$
regimes to assess this sensitivity and illustrate the effect of the selected
range.} The algebraic estimate, however, permits the explicit optimization and
complexity analysis developed in
{Propositions~\ref{prop:quadrature_order_algebraic},
\ref{prop:optimal_N_ml}, \ref{prop:work_scaling_single},
and~\ref{prop:work_scaling_multilevel}}, including the
closed-form expression for the minimum multilevel work
in~\eqref{eq:ml_work_optimal}. We therefore consider both error estimates in
the numerical experiments below.}

\begin{figure}[!htbp]
    \centering
    \includegraphics[width=0.72\linewidth]{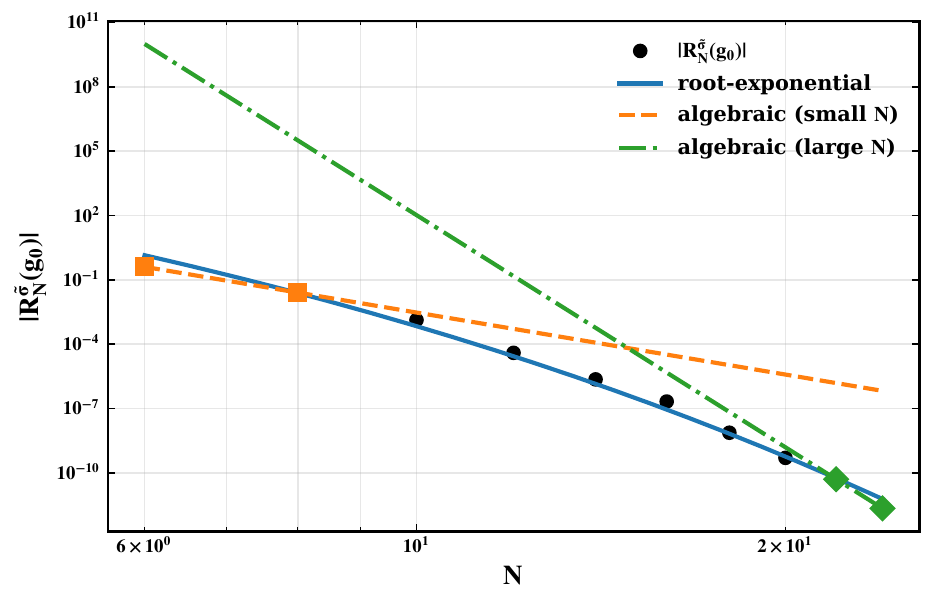}
    \caption{{
Quadrature error $|R_N^{\widetilde\sigma}(g_0)|$ as a function of the number
$N$ of quadrature points. \blue{The root-exponential cover is compared with
separate algebraic covers constructed in the small- and large-$N$ regimes.
The two algebraic covers illustrate their sensitivity to the range of
quadrature orders used in their construction.}
}}
    \label{fig:cover_model_comparison}
\end{figure}

\emph{{Transformed integrands.}}
We first examine the conditions used for the {transformed integrands}
\(\tilde g_\ell\). The algebraic
case is covered by Assumption~\ref{ass:weighted_smoothness} and the uniform
bounds in Proposition~\ref{prop:work_scaling_single}. The root-exponential
case is covered by Assumption~\ref{ass:root_exponential} and the uniform bounds
in Proposition~\ref{prop:work_scaling_single_root_exponential}. We use the
EuRos configuration
with \(T=2\) and the hierarchy
\red{\(\Delta t_\ell=T/(32 \times 2^\ell)\)}. At each level, the quadrature error is
measured relative to \(\bar N=64\) {quadrature points}.
{Fitting the parameters at every level is performed only to validate the
practical approximation described in the theory. The implemented method
estimates the level-zero parameters and uses them as surrogates for the
parameters at the selected finest level.}
\blue{Figure~\ref{fig:single_level_cover_fits} shows the fitted quadrature-error
models for the transformed integrands across the levels considered.}
The fitted {values of the root-exponential parameter} satisfy
\(B_\ell\approx 8.83\),
and a regression of the fitted {prefactors gives}
$$A_\ell^{\mathrm{re}}\approx C_{\mathrm{re}}\,\Delta t_\ell^{-0.019}.$$
The fitted {smoothness indices} satisfy \(s_\ell\approx 8.55\),
while a regression of the corresponding prefactors gives
$$A_\ell\approx C_{\mathrm{alg}}\,\Delta t_\ell^{\,0.001}.$$
{Thus, $B_\ell$, $s_\ell$, $A_\ell^{\mathrm{re}}$, and $A_\ell$
depend only weakly on the level over the range considered. These observations
are consistent with the bounds in
Propositions~\ref{prop:work_scaling_single}
and~\ref{prop:work_scaling_single_root_exponential}
being uniform across levels. {In particular, the estimated values of
$s_\ell$ are consistent with taking $s_{\mathrm{SL}}=8$ as the uniform lower
bound across these levels.}}

\begin{figure}[!htbp]
    \centering
    \begin{subfigure}[t]{0.49\textwidth}
        \centering
        \includegraphics[width=\linewidth]{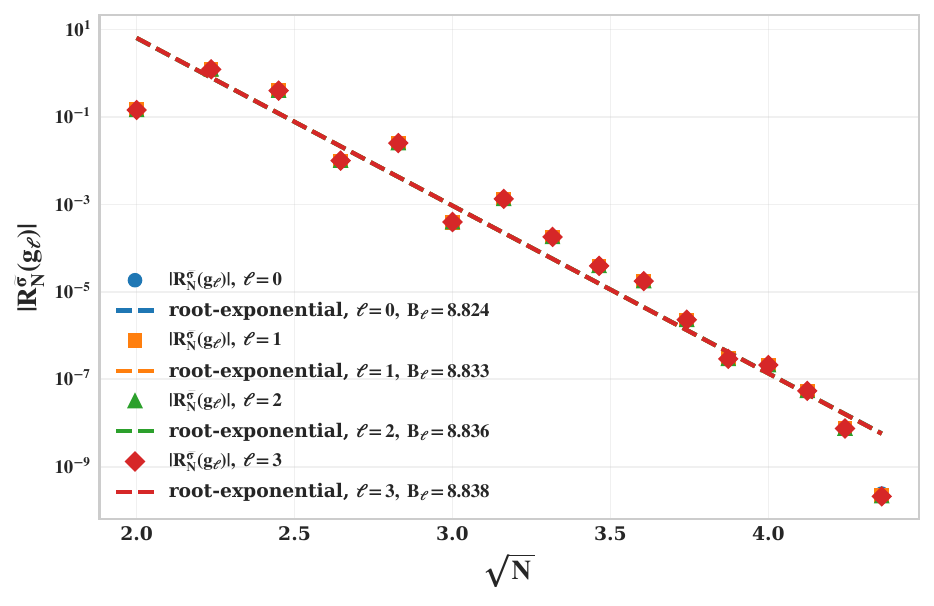}
        \caption{{Root-exponential fits of the quadrature error for
        \(g_\ell\).}}
    \end{subfigure}
    \hfill
    \begin{subfigure}[t]{0.49\textwidth}
        \centering
        \includegraphics[width=\linewidth]{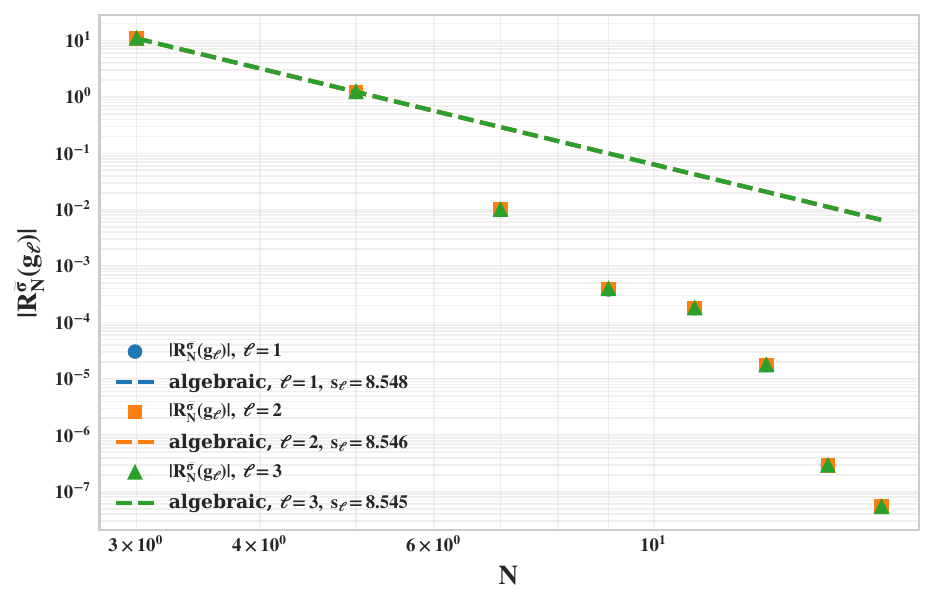}
        \caption{{Algebraic fits of the quadrature error for
        \(g_\ell\).}}
    \end{subfigure}
    \caption{{
    Quadrature errors for the discretized integrands \(g_\ell\), together with
    the root-exponential and algebraic fits. The fitted parameters and
    prefactors remain approximately constant across the levels considered.
    }}
    \label{fig:single_level_cover_fits}
\end{figure}

\emph{{Level differences.}}
We then examine the {level differences}
\(\Delta g_\ell=g_\ell-g_{\ell-1}\), which enter the multilevel
{method}. Assumptions~\ref{ass:level_difference_regularity}
and~\ref{ass:level_difference_decay} require {a common lower bound $s$
for the smoothness indices $s_\ell$} and the decay
\(A_\ell=\mathcal O(\Delta t_\ell^p)\).

\blue{Figures~\ref{fig:correction_exponential_cover} and
\ref{fig:correction_algebraic_cover} show the fitted correction models and
the scaling of their prefactors.}

For the root-exponential fits, the {fitted parameters} satisfy
\(B_\ell\approx8.26\), while a regression of the fitted {prefactors}
gives the decay rate
\(\Delta t_\ell^{\,1.613}\). For the algebraic fits, the {fitted
smoothness indices} satisfy \(s_\ell\approx7.60\), while a regression of the
fitted {prefactors} gives the decay rate
\(\Delta t_\ell^{\,1.675}\).

{Both $B_\ell$ and $s_\ell$ are nearly constant across the levels
considered, whereas the fitted prefactors decrease with an exponent close to
the observed discretization rate $p\approx1+\alpha=1.62$.}
{The observed algebraic decay
$A_\ell\approx C\Delta t_\ell^p$ supports
Assumption~\ref{ass:level_difference_decay}. The analogous root-exponential
decay $A_\ell^{\mathrm{re}}\approx C_{\mathrm{re}}\Delta t_\ell^p$ supports
the empirical level-scaling rule in
Remark~\ref{rem:practical_ml_quadrature_parameters}. Fitting all correction
levels here is only diagnostic; the implementation fits $A_1$ and
$A_1^{\mathrm{re}}$ and scales them across the finer levels as described
there.}
{The values of $s_\ell$ are consistent with taking $s=7$ as a common
integer lower bound across these levels. In this configuration, the fitted
smoothness indices of the level differences are slightly smaller than those
of the transformed integrands, but they remain sufficiently large for the
multilevel analysis. The reduction in multilevel work is therefore supported
primarily by the decay of the correction prefactors with the level and by the
level-dependent allocation of quadrature points.}

\begin{figure}[!htbp]
    \centering
    \begin{subfigure}[t]{0.49\textwidth}
        \centering
        \includegraphics[width=\linewidth]{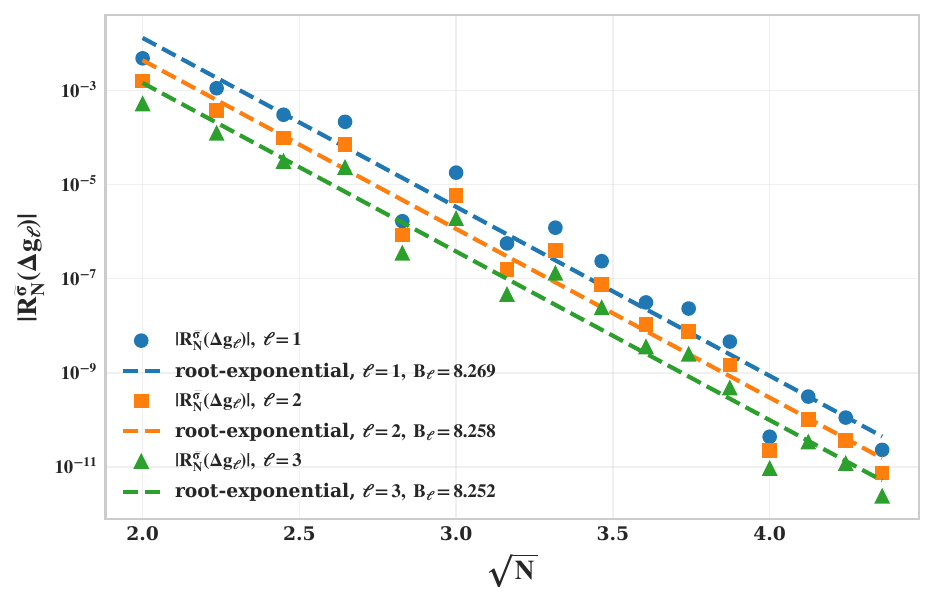}
        \caption{{Root-exponential fits of the quadrature error for
        \(\Delta g_\ell\).}}
    \end{subfigure}
    \hfill
    \begin{subfigure}[t]{0.49\textwidth}
        \centering
        \includegraphics[width=\linewidth]{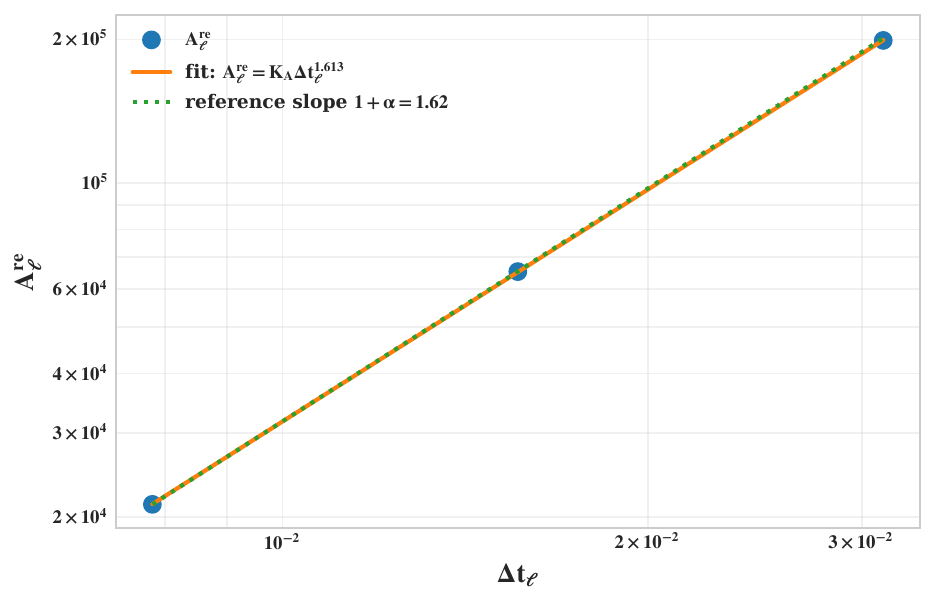}
        \caption{{Scaling of the fitted prefactors
        \(A_\ell^{\mathrm{re}}\).}}
    \end{subfigure}
    \caption{{
    Root-exponential fits of the quadrature error for the level differences
    \(\Delta g_\ell\). The fitted parameters \(B_\ell\) remain approximately
    constant across the levels considered, while the prefactors
    \(A_\ell^{\mathrm{re}}\) decrease approximately as
    \(\Delta t_\ell^{1.613}\).
    }}
    \label{fig:correction_exponential_cover}
\end{figure}

\begin{figure}[!htbp]
    \centering
    \begin{subfigure}[t]{0.49\textwidth}
        \centering
        \includegraphics[width=\linewidth]{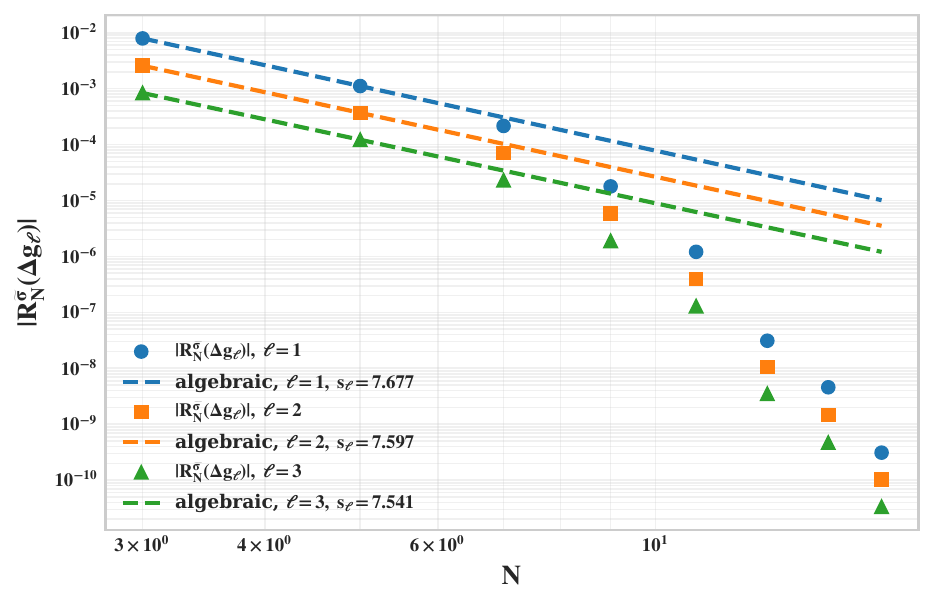}
        \caption{{Algebraic fits of the quadrature error for
        \(\Delta g_\ell\).}}
    \end{subfigure}
    \hfill
    \begin{subfigure}[t]{0.49\textwidth}
        \centering
        \includegraphics[width=\linewidth]{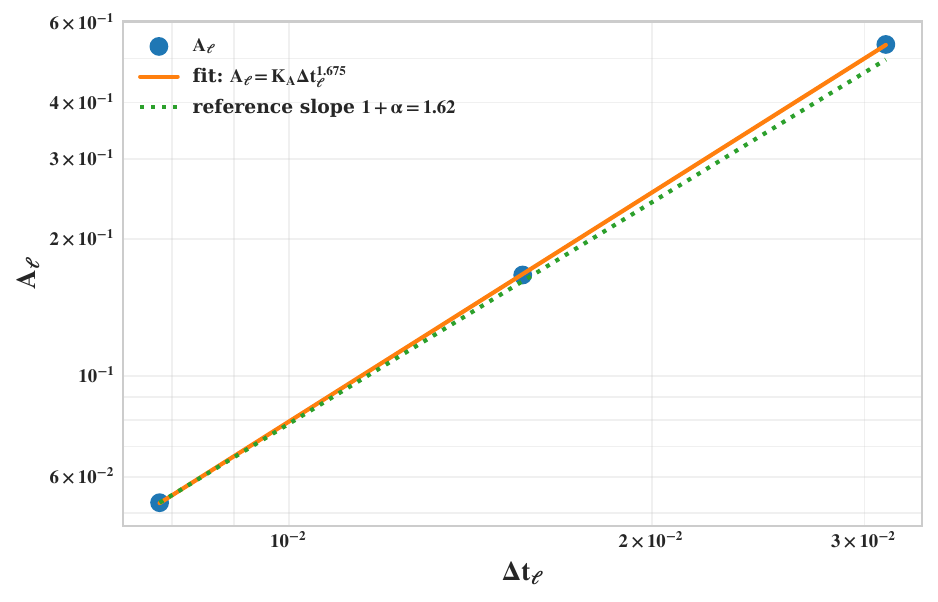}
        \caption{{Scaling of the fitted prefactors \(A_\ell\).}}
    \end{subfigure}
    \caption{{
    Algebraic fits of the quadrature error for the level differences
    \(\Delta g_\ell\). The fitted smoothness indices \(s_\ell\) show only weak
    dependence on the level, while \(A_\ell\) decreases approximately as
    \(\Delta t_\ell^{1.675}\), close to the observed rate
    \(\Delta t_\ell^p\) with \(p\approx1+\alpha=1.62\).
    }}
    \label{fig:correction_algebraic_cover}
\end{figure}

\subsection{{Computational Cost Comparison}}
\label{subsec:pricing_comparison}

{In this subsection, we first compare the SL and ML methods using the
algebraic and root-exponential quadrature error estimates in
Section~\ref{subsubsec:single_multilevel_comparison}. We then compare the ML
method with the BL2 Markovian approximation in
Section~\ref{subsubsec:bl2_comparison}. To distinguish the computational cost
of evaluating an option price from that of selecting the numerical parameters,
we report both the pricing CPU time and the total CPU time.

The pricing CPU time is the CPU time required to evaluate the option price
after all numerical parameters required by the method have been selected and
are reused. For SL, this is the CPU time required for the Gauss--Laguerre
quadrature at the selected finest discretization level and with the selected
number of quadrature points. For ML, it is the CPU time required for the
quadrature of $g_0$ and the level differences $\Delta g_\ell$, using the
selected discretization levels and numbers of quadrature points. The total CPU
time includes, in addition, the CPU time required for initialization, selection
of the damping parameter, computation and fitting of the quadrature error
estimates, selection of the discretization level, and selection of the numbers
of quadrature points.}

\subsubsection{Single-Level and Multilevel Methods Comparison}
\label{subsubsec:single_multilevel_comparison}
In this section, we compare the single-level and multilevel {methods} using
the algebraic and root-exponential quadrature {error estimates}. The
{time discretization} and {numbers of quadrature points} are selected as in
Sections~\ref{sec:SL_method} and~\ref{sec:ML_method}. The figures report the
scaled work, CPU time, and relative pricing error {with respect to a
high-accuracy benchmark price}. For the present comparison, we focus on
at-the-money options, \(S_0=K\).

\red{Let $C_0$ be the measured CPU time of one level-zero Riccati evaluation.
Using $\beta=2$, we estimate
$C_\ell=C_0(\Delta t_0/\Delta t_\ell)^\beta=C_0\,2^{\beta\ell}$.
The reported estimated times are
\[
\widehat W_{\mathrm{SL}}=C_LN_L,
\qquad
\widehat W_{\mathrm{ML}}
=C_0N_0+\sum_{\ell=1}^L(C_{\ell-1}+C_\ell)N_\ell.
\]}

{The relative pricing error is
\(\lvert V_{\mathrm{num}}-V_{\mathrm{ref}}\rvert/\lvert V_{\mathrm{ref}}\rvert\).
Thus, a relative target \(\varepsilon_{\mathrm{rel}}\) corresponds to the
absolute tolerance
\(\varepsilon=\varepsilon_{\mathrm{rel}}\lvert V_{\mathrm{ref}}\rvert\)
used in the theory. \red{This reference-based conversion is used only to
construct comparable relative-accuracy targets for the numerical benchmark;
it is not part of the pricing algorithm, whose input is an absolute tolerance
$\varepsilon>0$.} Since the benchmark price is fixed, this constant change
of scale does not alter the complexity exponents. The estimated work is
computed from the work proxies in~\eqref{eq:sl_work} and~\eqref{eq:ml_work}.}

\blue{The multilevel gain is not specific to this setting and is expected when
the integrand has low regularity, since the single-level method then requires
many quadrature points on the finest level. Greater regularity of the level
differences can further improve the gain, although it is not observed in the
present test. The gain depends on the model and numerical setup.}

\paragraph{Algebraic quadrature estimate.}

We first use the EuRos parameter set {in
Table~\ref{tab:numerical_benchmarks}}, with
\(T=2\), \(S_0=K=1000\), and \(r=0\).
\blue{The numerical results in
Section~\ref{subsec:time_discretization_numerics} give the observed convergence
rate \(p\approx1+\alpha=1.62\) for
\(\|g-g_\ell\|_{L^1(0,\infty)}\). The fitted smoothness indices \(s_\ell\)
of the transformed integrands \(\tilde g_\ell\) are approximately \(8.55\)
across the levels considered in
Figure~\ref{fig:single_level_cover_fits}, so we take
\(s_{\mathrm{SL}}=8\) as the uniform lower bound for the single-level
method in Proposition~\ref{prop:work_scaling_single}.}
\blue{For the multilevel method, the level-zero integrand $g_0$ is treated
separately and has $s_0=8$. The fitted smoothness index of the correction
integrand is approximately $7.68$, as shown
in Figure~\ref{fig:correction_algebraic_cover}. Consequently, the common
correction index in~\eqref{eq:common_order} is $s=7$.}
{The fitted prefactors
\(A_\ell\) for \(\ell\geq1\) decrease approximately as
\(\Delta t_\ell^p\), consistently with
Assumption~\ref{ass:level_difference_decay}. For the direct fractional Adams
implementation described in Section~\ref{sec:riccati_discretization}, the cost
exponent is \(\beta=2\).}

\blue{Here $2p/\beta=1.62$. Hence $s=7>2p/\beta$ and
$s_0=8\geq2p/\beta$, so both conditions required to use the simplified
multilevel complexity estimate~\eqref{eq:work_scaling_multilevel} are
satisfied.}

\blue{Substituting \(p=1.62\), \(\beta=2\), and
\(s_{\mathrm{SL}}=8\) into the
single-level complexity estimate~\eqref{eq:work_scaling_single}, and using the
multilevel complexity estimate~\eqref{eq:work_scaling_multilevel}, gives}
$${\color{blue}\frac{\beta}{p}+\frac{2}{s_{\mathrm{SL}}}=1.485\qquad\text{and}\qquad\frac{\beta}{p}=1.235.}$$
for the single-level and multilevel methods, respectively.

Figure~\ref{fig:single_multilevel_algebraic} {shows that the computational
work required by the ML method increases more slowly than that of the SL method
as the relative tolerance is reduced}. Both {methods} satisfy
the prescribed relative tolerance, and {the dependence of the
computational work on the tolerance is consistent with the complexity
estimates~\eqref{eq:work_scaling_single}
and~\eqref{eq:work_scaling_multilevel}.}
\blue{More precisely, the fitted exponents displayed in the figure are
$1.458$ for SL and $1.237$ for ML, compared with the predicted exponents
$1.485$ and $1.235$, respectively, over the tested tolerance range.}

\begin{figure}[!htbp]
    \centering
    \begin{subfigure}[t]{0.49\textwidth}
        \centering
        \includegraphics[width=\linewidth]
        {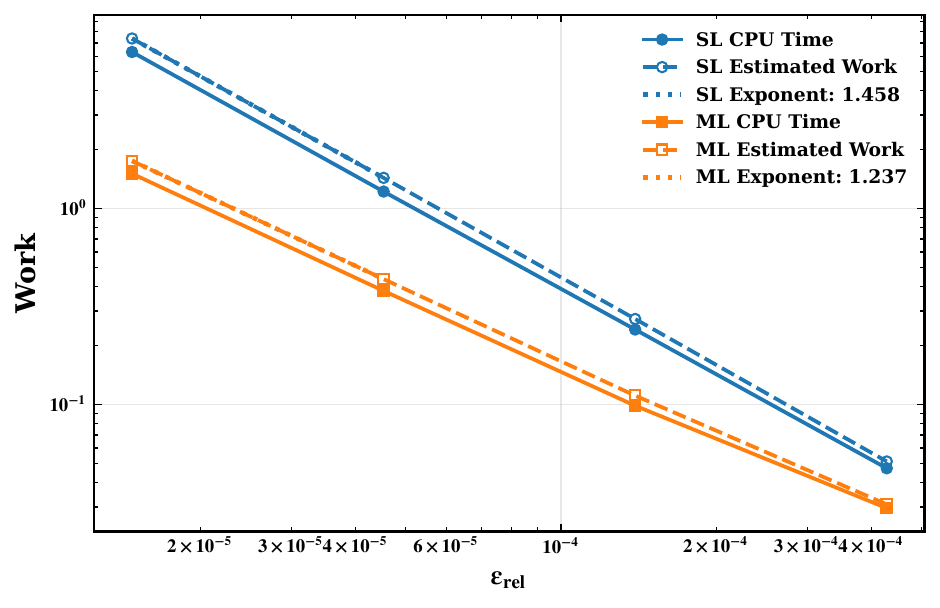}
        \caption{CPU time and estimated work.}
    \end{subfigure}
    \hfill
    \begin{subfigure}[t]{0.49\textwidth}
        \centering
        \includegraphics[width=\linewidth]
        {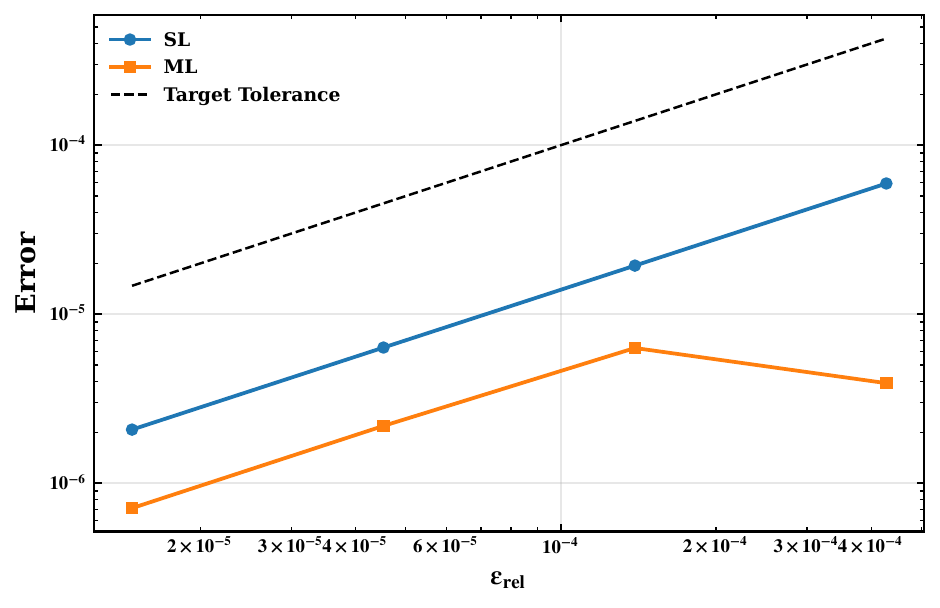}
        \caption{Achieved relative pricing errors.}
    \end{subfigure}
   \caption{{
    Comparison of the single-level and multilevel pricing methods for the
    EuRos configuration using the algebraic quadrature error estimate, for
    varying relative tolerances.
    }}
    \label{fig:single_multilevel_algebraic}
\end{figure}

\paragraph{{Root-exponential quadrature estimate.}}

\blue{The separation between SL and ML is generally expected to be less
pronounced under the root-exponential estimate because the difference in
convergence rates is only logarithmic. We therefore consider a challenging
case using the SPY parameter set in
Table~\ref{tab:numerical_benchmarks} with the short maturity}
\(T=2/365\), \(S_0=K=1\), and \(r=0\). {As shown in
Figure~\ref{fig:GL_decay_maturity}, the integrands decay more slowly as
\(u\) increases at short maturities. This makes the single-level quadrature
more demanding.} The single-level
{number of quadrature points} is selected
from~\eqref{eq:optimal_N_root_exponential}. {For the multilevel method
based on the root-exponential quadrature error estimate, the numbers of
quadrature points at the different levels are obtained by  solving
\eqref{eq:ml_root_exponential_optimization}.}
{The correction prefactors used in this optimization are constructed
with the level-scaling rule of
Remark~\ref{rem:practical_ml_quadrature_parameters}.}

Figure~\ref{fig:single_multilevel_exponential} shows that the multilevel method
\blue{requires less computational work and has more than ten times lower CPU
time at the smallest tested tolerance}, while both {methods satisfy the
prescribed relative tolerance}. {Section~\ref{sec:ML_method} does not
provide an asymptotic complexity result for the multilevel method based on the
root-exponential quadrature error estimate. The comparison in
Figure~\ref{fig:single_multilevel_exponential} therefore provides numerical
evidence that, for this configuration, selecting the number of quadrature
points separately at each level reduces both the computational work and the
CPU time relative to the single-level method.}
\begin{figure}[!htbp]
    \centering
    \begin{subfigure}[t]{0.49\textwidth}
        \centering
        \includegraphics[width=\linewidth]
        {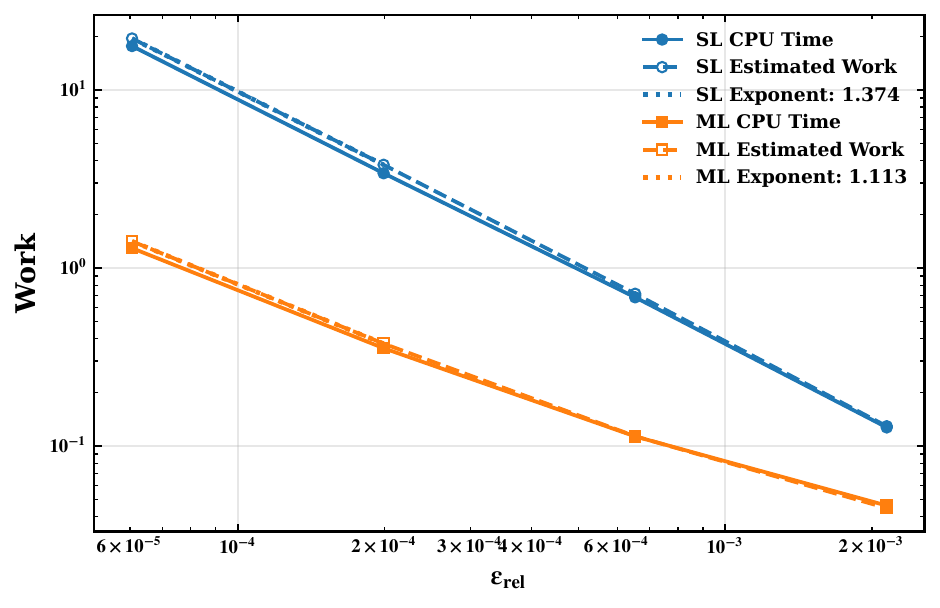}
        \caption{CPU time and estimated work.}
    \end{subfigure}
    \hfill
    \begin{subfigure}[t]{0.49\textwidth}
        \centering
        \includegraphics[width=\linewidth]
        {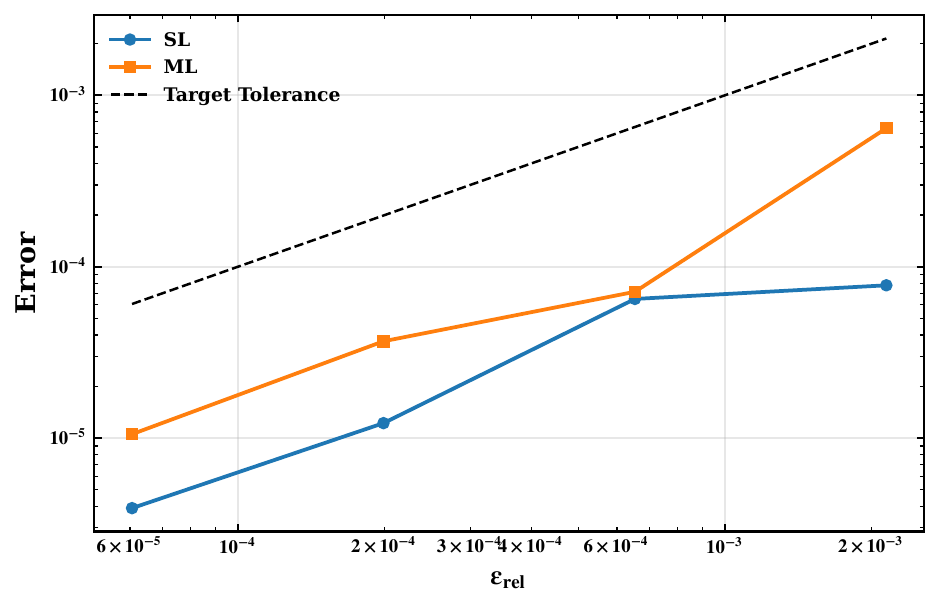}
        \caption{Achieved relative pricing errors.}
    \end{subfigure}
   \caption{{
    Comparison of the single-level and multilevel pricing methods for the
    two-day SPY parameter set using the root-exponential quadrature error
    estimate. 
    }}
    \label{fig:single_multilevel_exponential}
\end{figure}

\subsubsection{Comparison with the BL2 Method}
\label{subsubsec:bl2_comparison}

{We next compare the multilevel method with the BL2 Markovian
approximation of Bayer and Breneis~\cite{bayer2023weak}. We restrict the
comparison to the multilevel method because
Figures~\ref{fig:single_multilevel_algebraic}
and~\ref{fig:single_multilevel_exponential} show that it \blue{outperforms the
single-level method in terms of computational work} over the relative
tolerances considered. We use the SPY and EuRos parameter sets in
Table~\ref{tab:numerical_benchmarks} and consider maturities of two days and
one week. For each configuration, the two methods use the same model, payoff,
and market parameters. We choose the BL2 approximation because Bayer and
Breneis~\cite{bayer2023weak} report it as the \blue{best-performing} among the Markovian
approximations considered in their numerical comparison. For each target
relative tolerance \(\varepsilon_{\mathrm{rel}}\), we run the reference
implementation~\cite{breneisGithub} with internal tolerance
\(\varepsilon_{\mathrm{rel}}/2\) and select the smallest number of factors
\(n_{\mathrm{BL2}}\in\{1,2,3,4\}\) for which the relative pricing error with
respect to the high-accuracy reference price satisfies the prescribed
tolerance. The BL2 approximation and selection of
\(n_{\mathrm{BL2}}\) are described in Appendix~\ref{app:bl2_method}.}
\red{This factor-count selection uses $V_{\mathrm{ref}}$ and is therefore a
reference-assisted benchmarking procedure, not a production rule for selecting
$n_{\mathrm{BL2}}$.}

{Figures~\ref{fig:bl2_spy_comparison} and
\ref{fig:bl2_euros_comparison} compare the pricing and total CPU times of the
multilevel and BL2 methods. For BL2, the pricing CPU time is the CPU time
required to evaluate the option price after the number of factors
\(n_{\mathrm{BL2}}\) has been selected and the corresponding quadrature nodes
and weights have been computed. The total CPU time additionally includes the
selection of \(n_{\mathrm{BL2}}\). Starting from \(n_{\mathrm{BL2}}=1\), the
BL2 approximation is constructed and the option price is computed for each
candidate number of factors until the resulting relative pricing error
satisfies the prescribed tolerance. Each CPU time is the mean of three
repetitions, and the selected value of \(n_{\mathrm{BL2}}\) is displayed
beside the corresponding BL2 point.}
\begin{figure}[!htbp]
    \centering
    \begin{subfigure}[t]{0.49\textwidth}
        \centering
        \includegraphics[width=\linewidth]
        {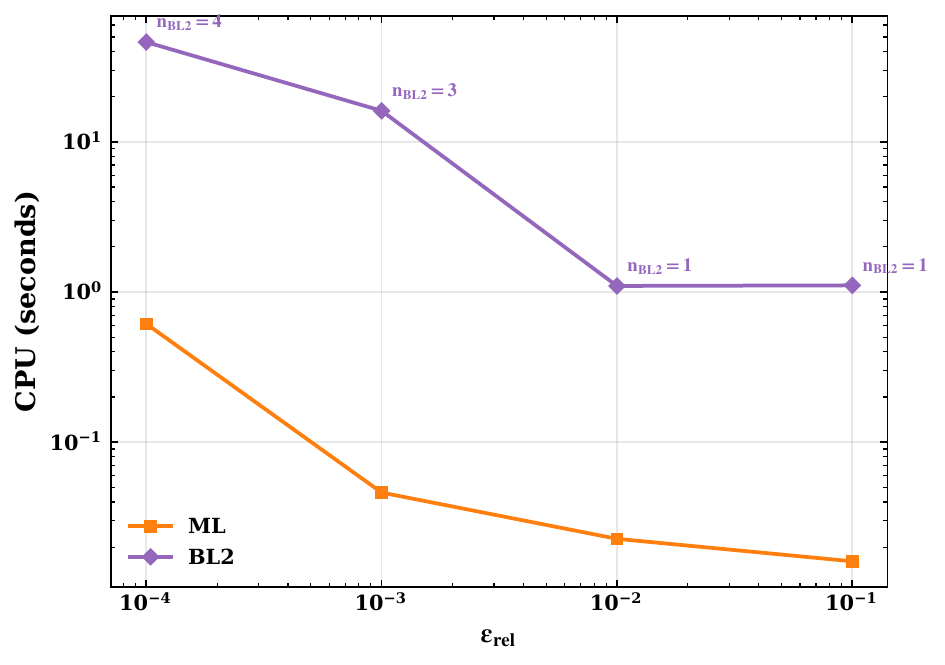}
        \caption{Two-day maturity: pricing CPU time.}
    \end{subfigure}
    \hfill
    \begin{subfigure}[t]{0.49\textwidth}
        \centering
        \includegraphics[width=\linewidth]
        {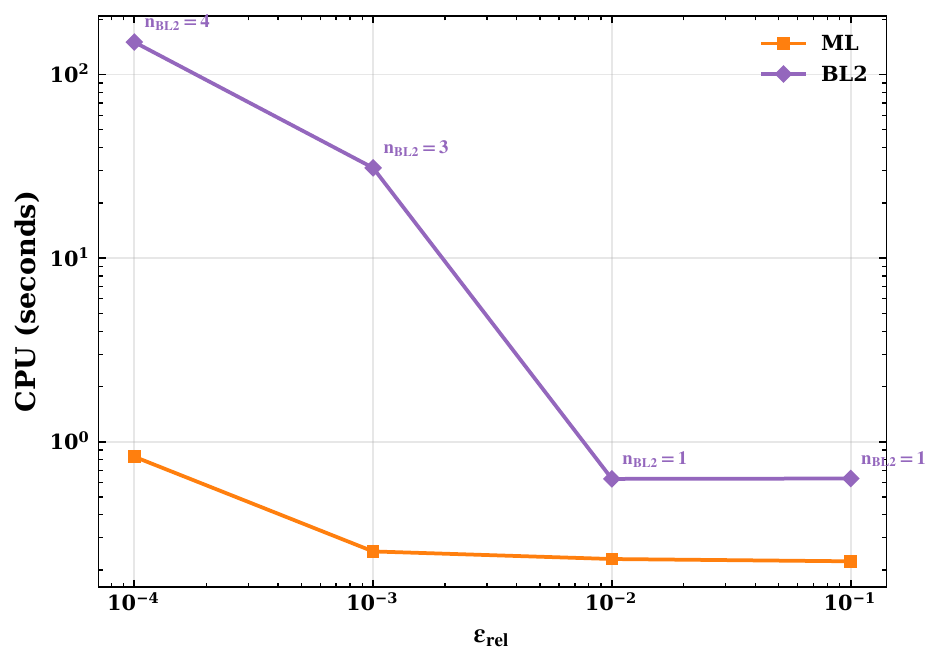}
        \caption{Two-day maturity: total CPU time.}
    \end{subfigure}
    \medskip
    \begin{subfigure}[t]{0.49\textwidth}
        \centering
        \includegraphics[width=\linewidth]
        {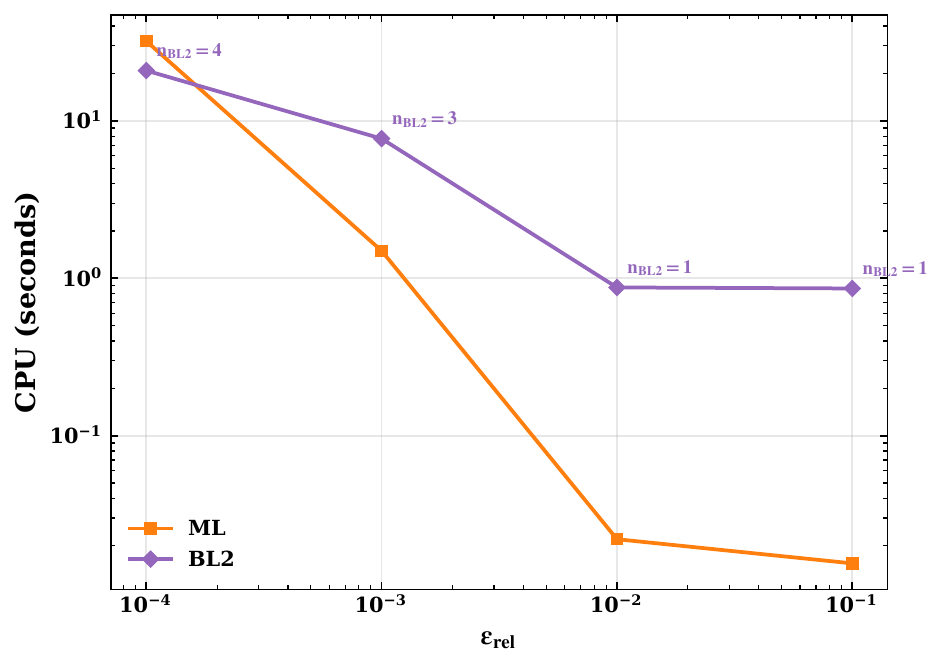}
        \caption{One-week maturity: pricing CPU time.}
    \end{subfigure}
    \hfill
    \begin{subfigure}[t]{0.49\textwidth}
        \centering
        \includegraphics[width=\linewidth]
        {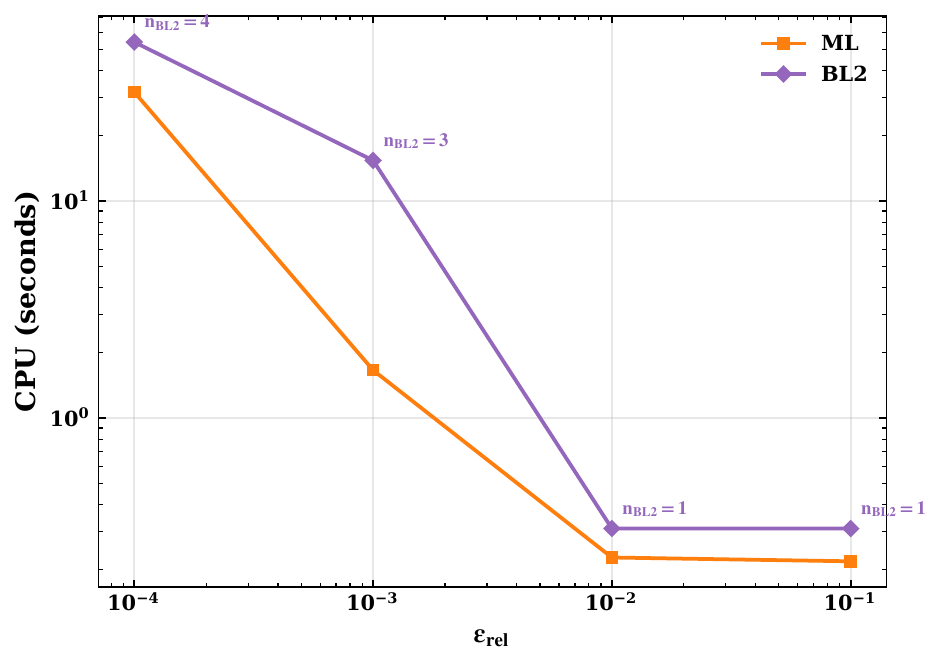}
        \caption{One-week maturity: total CPU time.}
    \end{subfigure}
    \caption{
    ML and BL2 CPU-time comparisons for the SPY parameter set using the
    pricing and total CPU times defined above.
    }
    \label{fig:bl2_spy_comparison}
\end{figure}

\begin{figure}[!htbp]
    \centering
    \begin{subfigure}[t]{0.49\textwidth}
        \centering
        \includegraphics[width=\linewidth]
        {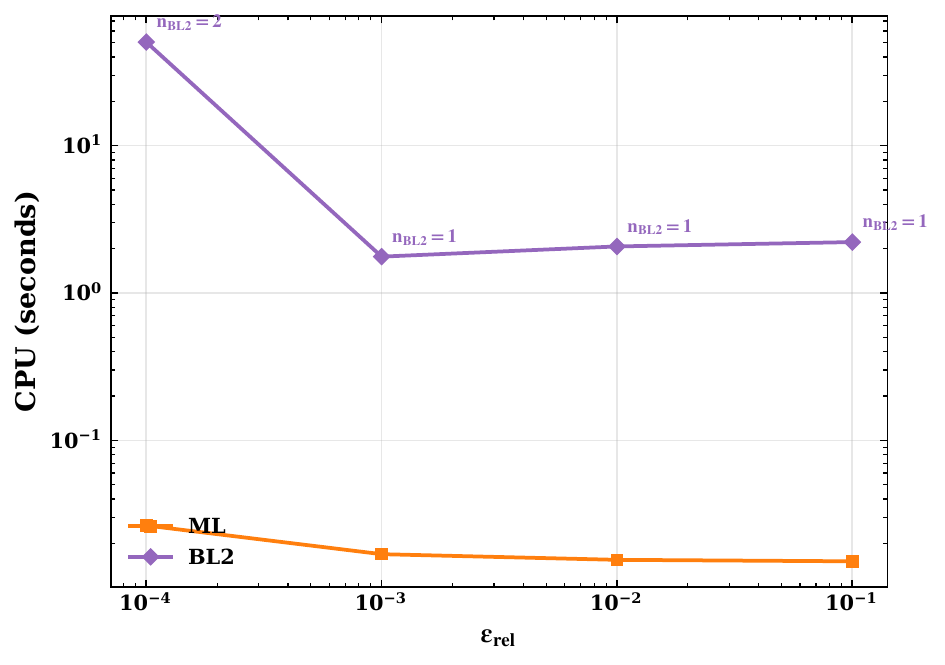}
        \caption{Two-day maturity: pricing CPU time.}
    \end{subfigure}
    \hfill
    \begin{subfigure}[t]{0.49\textwidth}
        \centering
        \includegraphics[width=\linewidth]
        {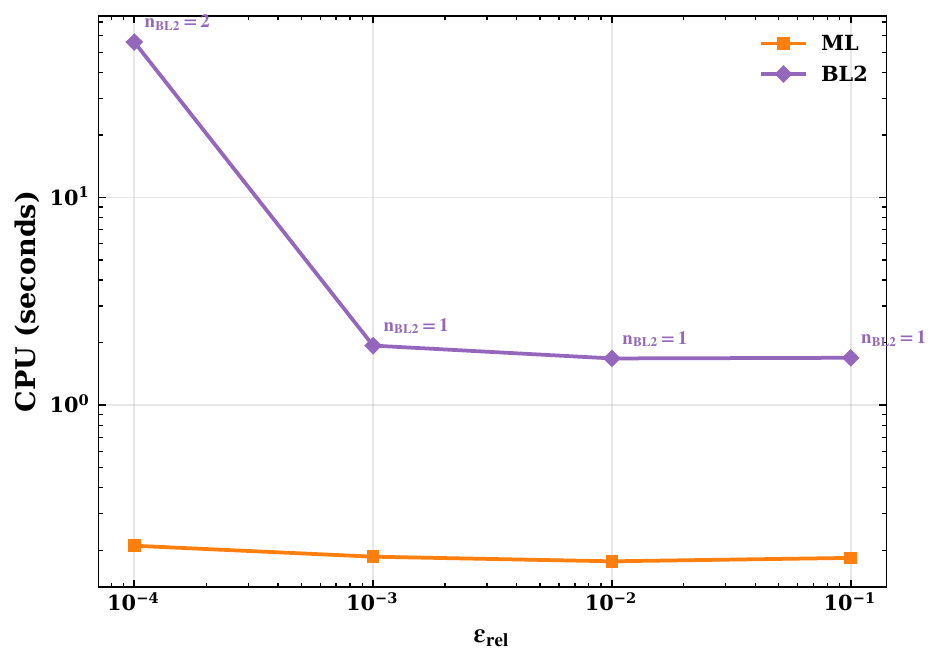}
        \caption{Two-day maturity: total CPU time.}
    \end{subfigure}
    \medskip
    \begin{subfigure}[t]{0.49\textwidth}
        \centering
        \includegraphics[width=\linewidth]
        {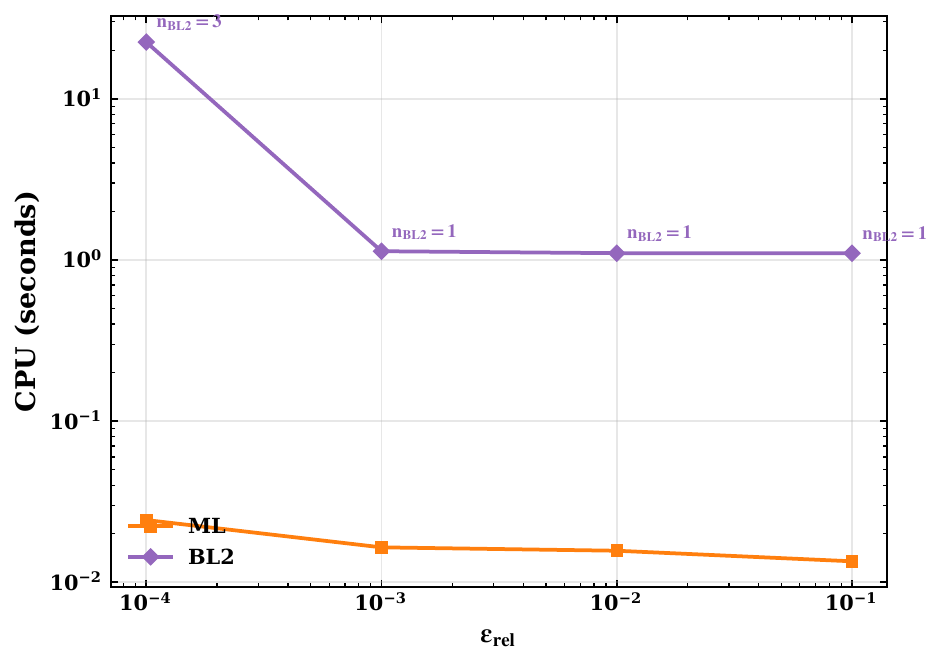}
        \caption{One-week maturity: pricing CPU time.}
    \end{subfigure}
    \hfill
    \begin{subfigure}[t]{0.49\textwidth}
        \centering
        \includegraphics[width=\linewidth]
        {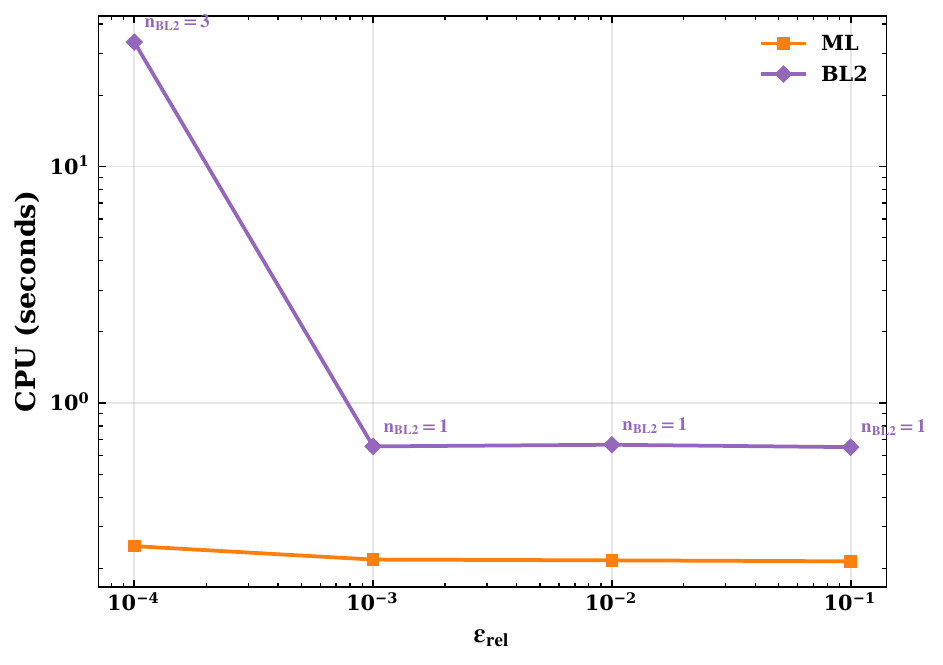}
        \caption{One-week maturity: total CPU time.}
    \end{subfigure}
    \caption{
    ML and BL2 CPU-time comparisons for the EuRos parameter set using the
    pricing and total CPU times defined above.
    }
    \label{fig:bl2_euros_comparison}
\end{figure}

\blue{Across the four configurations, ML is substantially faster in total CPU
time and is also faster in pricing CPU time except at the tightest tolerance
for the one-week SPY case.} These results demonstrate the effectiveness of the proposed hierarchical
framework, particularly in parameter regimes where the Fourier integrand has
limited effective regularity and is therefore more demanding to integrate.

\section{Conclusion and Future Work}
\label{sec:conclusion}

In this work, we introduced a hierarchical Fourier framework for pricing options under the rough Heston model. The main difficulty in this setting comes from the high cost of evaluating the characteristic function, which requires solving a fractional Riccati equation at every quadrature point in the Fourier domain. \blue{In the single-level construction, all Fourier nodes are evaluated on the same finest Riccati grid selected for the target accuracy. This can be costly when the required discretization is fine and the quadrature needs many nodes. Both the single- and multilevel constructions use scaled Gauss-Laguerre quadrature, with the scaling chosen to reflect the estimated decay of the Fourier integrand.}

{\color{blue}
For the single-level method, the conditional algebraic analysis gives
\[
W_{\mathrm{SL}}(\varepsilon)
=
\mathcal{O}\!\left(
\varepsilon^{-\beta/p-2/s_{\mathrm{SL}}}
\right),
\]
whereas, under the multilevel regularity and correction-decay assumptions of
Proposition~\ref{prop:work_scaling_multilevel}, the multilevel work satisfies
\[
W_{\mathrm{ML}}(\varepsilon)
=
\mathcal{O}\!\left(
\varepsilon^{-2/s_0}
+
\varepsilon^{-\beta/p}
\right),
\]
and reduces to $\mathcal{O}(\varepsilon^{-\beta/p})$ when
$s_0\geq 2p/\beta$. These are algebraic-cover complexity statements.
In the root-exponential regime used in the practical allocation, we do not
prove a separate multilevel complexity theorem. The numerical experiments
instead show that the level-dependent allocation reduces the number of
expensive fine-grid Riccati solves and provides substantial speedups while
maintaining the same level of accuracy as the single-level approach. The
numerical gains therefore depend on the quadrature regime, model/payoff
parameters, hierarchy, and setup cost.

The complexity results are stated for a generic Fourier-integrand convergence
rate $p$. For the fractional Adams discretization used in the numerical
experiments, the results in
Section~\ref{subsec:time_discretization_numerics} indicate the empirical rate
$p\approx1+\alpha$.

For the four reported SPY and EuRos benchmark configurations, and with the
BL2 factor count selected by the reference-assisted procedure described in
Section~\ref{subsubsec:bl2_comparison}, \red{the present ML implementation has
lower total CPU time in the reported measurements.}
}

The framework developed in this work suggests several directions for future work. One natural extension is the use of adaptive time-stepping schemes for the fractional Riccati equation, which may reduce the discretization cost even further. It would also be interesting to study alternative quadrature rules, such as Gauss-Kronrod or double-exponential formulas, and compare their performance with Gauss-Laguerre quadrature in rough volatility settings.

Also understanding the link between the {smoothness index} $s_L$ and the roughness parameter $\alpha$ would give deeper insight into the behavior of the Fourier integrand. Finally, the proposed method can be adapted to other models where the characteristic function {is} semi-explicit, such as signature volatility models or polynomial Ornstein-Uhlenbeck volatility models.

\section*{\blue{Use of artificial intelligence}}

\blue{AI assistance was used to improve writing clarity and presentation and to help identify relevant references. \red{The authors take responsibility} for the mathematical content, numerical results, and arguments.}

\newpage
\appendix

\section{Proof of Lemma~\ref{lemma:terminal_exponent_error}}
\label{sec:proof_terminal_exponent_error}

{\color{blue}
\begin{proof}
Fix $u\geq0$ and set $\xi=u+\mathrm{i}R$. Introduce the exponent
obtained by applying the time-integration rule to the exact Riccati solution
at the grid points:
\[
\widetilde G_\ell(\xi)
:=
\mathrm{i}\xi(X_0+rT)
+
\mathcal T_\ell
\left[
\left(
J\!\left(\xi,h(\xi,t_{\ell,j})\right)
\right)_{j=0}^{M_\ell}
\right].
\]
Then
\[
E_\ell(\xi)
=
\left(G(\xi)-\widetilde G_\ell(\xi)\right)
+
\left(\widetilde G_\ell(\xi)-G_\ell(\xi)\right).
\]
Assumption~\ref{ass:time_integration_error} gives
\begin{equation}
\left|G(\xi)-\widetilde G_\ell(\xi)\right|
\leq
C_{\mathrm{time},\xi}\Delta t_\ell^{\,r_T}.
\label{eq:exact_exponent_quadrature_bound}
\end{equation}

For $j=0,\ldots,M_\ell$, define
\[
d_{\ell,j}(\xi)
:=
h(\xi,t_{\ell,j})-h_{\ell,j}(\xi).
\]
Since $h(\xi,0)=h_{\ell,0}(\xi)=0$, we have $d_{\ell,0}(\xi)=0$.
Set
\[
a_\xi
:=
\theta\gamma+V_0\gamma(\mathrm{i}\xi\rho\nu-1),
\qquad
b
:=
V_0\frac{(\gamma\nu)^2}{2},
\qquad
H_\xi
:=
\|h(\xi,\cdot)\|_{L^\infty(0,T)}.
\]
By~\eqref{eq:exponent_integrand_J} and~\eqref{eq:F_def},
\[
\begin{aligned}
&J\!\left(\xi,h(\xi,t_{\ell,j})\right)
-J\!\left(\xi,h_{\ell,j}(\xi)\right)
\\
&\quad=
a_\xi d_{\ell,j}(\xi)
+
b
\left(
2h(\xi,t_{\ell,j})d_{\ell,j}(\xi)
-d_{\ell,j}(\xi)^2
\right),
\end{aligned}
\]
and therefore
\[
\begin{aligned}
&\left|
J\!\left(\xi,h(\xi,t_{\ell,j})\right)
-J\!\left(\xi,h_{\ell,j}(\xi)\right)
\right|
\\
&\quad\leq
\left(|a_\xi|+2|b|H_\xi\right)|d_{\ell,j}(\xi)|
+
|b||d_{\ell,j}(\xi)|^2.
\end{aligned}
\]
The positivity of the trapezoidal weights gives
\[
\begin{aligned}
\left|\widetilde G_\ell(\xi)-G_\ell(\xi)\right|
&\leq
\left(|a_\xi|+2|b|H_\xi\right)
\mathcal T_\ell
\left[
\left(
|d_{\ell,j}(\xi)|
\right)_{j=0}^{M_\ell}
\right]
\\
&\quad+
|b|
\mathcal T_\ell
\left[
\left(
|d_{\ell,j}(\xi)|^2
\right)_{j=0}^{M_\ell}
\right].
\end{aligned}
\]
By Assumption~\ref{ass:riccati_pointwise_error},
\[
\begin{aligned}
\mathcal T_\ell
\left[
\left(
|d_{\ell,j}(\xi)|
\right)_{j=0}^{M_\ell}
\right]
&\leq
C_\xi\Delta t_\ell^{q+1}
\sum_{j=1}^{M_\ell}t_{\ell,j}^{\alpha-1}
\\
&=
C_\xi\Delta t_\ell^{q+\alpha}
\sum_{j=1}^{M_\ell}j^{\alpha-1},
\end{aligned}
\]
and
\[
\begin{aligned}
\mathcal T_\ell
\left[
\left(
|d_{\ell,j}(\xi)|^2
\right)_{j=0}^{M_\ell}
\right]
&\leq
C_\xi^2\Delta t_\ell^{2q+1}
\sum_{j=1}^{M_\ell}t_{\ell,j}^{2\alpha-2}
\\
&=
C_\xi^2\Delta t_\ell^{2q+2\alpha-1}
\sum_{j=1}^{M_\ell}j^{2\alpha-2}.
\end{aligned}
\]
Since $\alpha\in(\frac12,1)$,
\[
\sum_{j=1}^{M}j^{\alpha-1}
\leq
\frac{M^\alpha}{\alpha},
\qquad
\sum_{j=1}^{M}j^{2\alpha-2}
\leq
\frac{M^{2\alpha-1}}{2\alpha-1}.
\]
Using $M_\ell\Delta t_\ell=T$, we obtain
\begin{equation}
\mathcal T_\ell
\left[
\left(
|d_{\ell,j}(\xi)|
\right)_{j=0}^{M_\ell}
\right]
\leq
C_\xi\frac{T^\alpha}{\alpha}\Delta t_\ell^q,
\label{eq:nodal_error_trapezoidal_l1}
\end{equation}
and
\begin{equation}
\mathcal T_\ell
\left[
\left(
|d_{\ell,j}(\xi)|^2
\right)_{j=0}^{M_\ell}
\right]
\leq
C_\xi^2\frac{T^{2\alpha-1}}{2\alpha-1}
\Delta t_\ell^{2q}.
\label{eq:nodal_error_trapezoidal_l2}
\end{equation}
Consequently,
\[
\begin{aligned}
\left|\widetilde G_\ell(\xi)-G_\ell(\xi)\right|
&\leq
\left(|a_\xi|+2|b|H_\xi\right)
C_\xi\frac{T^\alpha}{\alpha}\Delta t_\ell^q
\\
&\quad+
|b|C_\xi^2\frac{T^{2\alpha-1}}{2\alpha-1}
\Delta t_\ell^{2q}
\\
&\leq
C_{\mathrm R,\xi,T}\Delta t_\ell^q,
\end{aligned}
\]
where
\[
C_{\mathrm R,\xi,T}
:=
\left(|a_\xi|+2|b|H_\xi\right)
C_\xi\frac{T^\alpha}{\alpha}
+
|b|C_\xi^2\frac{T^{2\alpha-1}}{2\alpha-1}
\Delta t_0^q.
\]
Combining this estimate with~\eqref{eq:exact_exponent_quadrature_bound},
\[
\begin{aligned}
|E_\ell(\xi)|
&\leq
C_{\mathrm{time},\xi}\Delta t_\ell^{r_T}
+
C_{\mathrm R,\xi,T}\Delta t_\ell^q
\\
&=
\Delta t_\ell^{p_G}
\left(
C_{\mathrm{time},\xi}\Delta t_\ell^{r_T-p_G}
+
C_{\mathrm R,\xi,T}\Delta t_\ell^{q-p_G}
\right)
\\
&\leq
\left(
C_{\mathrm{time},\xi}\Delta t_0^{r_T-p_G}
+
C_{\mathrm R,\xi,T}\Delta t_0^{q-p_G}
\right)
\Delta t_\ell^{p_G}
\\
&=:
C_{E,\xi,T}\Delta t_\ell^{p_G}.
\end{aligned}
\]
Therefore,
\[
A_E(u)
=
\sup_{\ell\geq0}
\Delta t_\ell^{-p_G}|E_\ell(u+\mathrm{i}R)|
\leq
C_{E,\xi,T}
<
\infty,
\]
which proves~\eqref{eq:exponent_error_pG}.
\end{proof}
}

\section{Proof of Proposition~\ref{prop:discretization_error}}
\label{sec:proof_fourier_integrand_error}

\begin{proof}
Let $u\geq0$ and $\ell\geq0$. By Lemma~\ref{lemma:terminal_exponent_error},
\begin{equation}
\left|E_\ell(u+\mathrm{i}R)\right|
\leq
A_E(u)\Delta t_\ell^{\blue{p_G}}.
\label{eq:terminal_exponent_error_in_proof}
\end{equation}
Moreover, for every $z\in\mathbb C$,
\begin{equation}
\left|1-\exp(-z)\right|
\leq
|z|\exp(|z|).
\label{eq:complex_exponential_difference_bound}
\end{equation}
Combining~\eqref{eq:terminal_exponent_error_in_proof}
and~\eqref{eq:complex_exponential_difference_bound}, and using
$\Delta t_\ell\leq\Delta t_0$, gives
\begin{equation}
\Delta t_\ell^{-\blue{p_G}}
\left|1-\exp\!\left(-E_\ell(u+\mathrm{i}R)\right)\right|
\leq
A_E(u)\exp\!\left(A_E(u)\Delta t_0^{\blue{p_G}}\right).
\label{eq:normalized_exponential_difference_bound}
\end{equation}
Taking the supremum over $\ell\geq0$
in~\eqref{eq:normalized_exponential_difference_bound} gives
$\Psi(u)<\infty$. Since $E_\ell=G-G_\ell$,
\begin{equation}
\left|\Phi(u+\mathrm{i}R)-\Phi_\ell(u+\mathrm{i}R)\right|
=
\left|\Phi(u+\mathrm{i}R)\right|
\left|1-\exp\!\left(-E_\ell(u+\mathrm{i}R)\right)\right|
\leq
\left|\Phi(u+\mathrm{i}R)\right|\Psi(u)
\Delta t_\ell^{\blue{p_G}}.
\label{eq:characteristic_function_difference_bound}
\end{equation}
By the definition of the Fourier integrand,
$|\operatorname{Re}(z)|\leq|z|$ for every $z\in\mathbb C$, and
\eqref{eq:characteristic_function_difference_bound},
\begin{equation}
\begin{aligned}
\|g-g_\ell\|_{L^1(0,\infty)}
&\leq
\frac{e^{-rT}}{2\pi}
\int_0^\infty
\left|\widehat P(u+\mathrm{i}R)\right|
\left|\Phi(u+\mathrm{i}R)-\Phi_\ell(u+\mathrm{i}R)\right|\,du \\
&\leq
\frac{e^{-rT}}{2\pi}\Delta t_\ell^{\blue{p_G}}
\int_0^\infty
\left|\widehat P(u+\mathrm{i}R)\right|
\left|\Phi(u+\mathrm{i}R)\right|
\Psi(u)\,du.
\end{aligned}
\label{eq:fourier_integrand_difference_bound}
\end{equation}
\blue{By Conjecture~\ref{conj:contour_domination},
\[
C_g
:=
\frac{e^{-rT}}{2\pi}
\int_0^\infty \mathcal D_R(u)\,du
<
\infty.
\]
Therefore,~\eqref{eq:fourier_integrand_difference_bound} gives
\[
\|g-g_\ell\|_{L^1(0,\infty)}
\leq
C_g\Delta t_\ell^{p_G},
\]
which proves~\eqref{eq:g_error_pG}.}
\end{proof}

\section{Proof of Proposition~\ref{prop:richardson}}
\label{sec:proof_richardson}

\begin{proof}
{
Assumption~\ref{ass:sharp_price_expansion} gives
\begin{equation}
V_\ell-V
=
c_V\Delta t_\ell^{\,p}
+
o\!\left(\Delta t_\ell^{\,p}\right).
\label{eq:signed_discretization_error_expansion}
\end{equation}
Since $c_V\neq0$, it follows that
\begin{equation}
e_{\mathrm{disc}}(\ell)
=
|V-V_\ell|
=
|c_V|\Delta t_\ell^{\,p}
\left(1+o(1)\right).
\label{eq:discretization_error_expansion}
\end{equation}

Because $\Delta t_{\ell-1}=2\Delta t_\ell$, applying
\eqref{eq:signed_discretization_error_expansion} at levels $\ell$ and
$\ell-1$ gives
\begin{equation}
V_\ell-V_{\ell-1}
=
-c_V\left(2^p-1\right)\Delta t_\ell^{\,p}
+
o\!\left(\Delta t_\ell^{\,p}\right).
\label{eq:signed_level_difference_expansion}
\end{equation}
Hence
\begin{equation}
|V_\ell-V_{\ell-1}|
=
\left(2^p-1\right)|c_V|\Delta t_\ell^{\,p}
\left(1+o(1)\right).
\label{eq:level_difference_expansion}
\end{equation}
Comparing~\eqref{eq:discretization_error_expansion}
and~\eqref{eq:level_difference_expansion} proves
\eqref{eq:richardson_error_estimate}.
}
\end{proof}

\section{Proof of Proposition~\ref{prop:work_scaling_single}}
\label{app:complexity_proof_single}

\begin{proof}
{
We estimate separately the two factors in the single-level work
$W_{\mathrm{SL}}=N_{\mathrm{alg}}W_L$, starting with the cost $W_L$ of one
evaluation at level $L$.

Because $L$ is the smallest level satisfying
\eqref{eq:theoretical_level_bound}, the coarser level $L-1$ does not satisfy
the discretization constraint. Thus, with
$\delta_\varepsilon:=(\varepsilon_{\mathrm{disc}}/(2C_g))^{1/p}$, for all
sufficiently small $\varepsilon$,
\[
\Delta t_L
\leq
\delta_\varepsilon
<
\Delta t_{L-1}
=
2\Delta t_L.
\]
Hence $\Delta t_L\asymp\varepsilon^{1/p}$, and the cost model
$W_L=\mathcal O(\Delta t_L^{-\beta})$ gives
\[
W_L
=
\mathcal O\!\left(\varepsilon^{-\beta/p}\right).
\]

It remains to bound the number of quadrature points. By~\eqref{eq:optimal_N},
\[
N_{\mathrm{alg}}
\leq
1+
\left(
\frac{A_L}{\varepsilon_{\mathrm{quad}}}
\right)^{2/s_L}.
\]
\red{For sufficiently small $\varepsilon$, we have
$A/\varepsilon_{\mathrm{quad}}\geq1$.}
Since $A_L\leq A$ and {$s_L\geq s_{\mathrm{SL}}$}, this gives
\[
N_{\mathrm{alg}}
\leq
1+
\left(
\frac{A}{\varepsilon_{\mathrm{quad}}}
\right)^{{2/s_{\mathrm{SL}}}}
=
\mathcal O\!\left({\varepsilon^{-2/s_{\mathrm{SL}}}}\right).
\]

By construction, $L$ and $N_{\mathrm{alg}}$ satisfy both constraints
in~\eqref{eq:tolerance_allocation}, and therefore
$|V-V_{N_{\mathrm{alg}},L}|\leq\varepsilon$. Multiplying the two estimates
gives
\[
W_{\mathrm{SL}}(\varepsilon)
=
N_{\mathrm{alg}}W_L
=
\mathcal O\!\left(
\varepsilon^{-\beta/p}{\varepsilon^{-2/s_{\mathrm{SL}}}}
\right)
=
\mathcal O\!\left(
{\varepsilon^{-\left(\beta/p+2/s_{\mathrm{SL}}\right)}}
\right).
\]
}
\end{proof}

{
\section{Proof of Proposition~\ref{prop:work_scaling_single_root_exponential}}
\label{app:complexity_proof_single_root_exponential}

\begin{proof}
Recall that
\(
\varepsilon_{\mathrm{disc}}
=
\varepsilon_{\mathrm{quad}}
=
\varepsilon/2
\).
The choice of \(L\) gives
\[
e_{\mathrm{disc}}(L)
\leq
\varepsilon_{\mathrm{disc}},
\]
and Proposition~\ref{prop:quadrature_order_root_exponential} gives
\[
e_{\mathrm{quad}}(N_{\mathrm{re}},L)
\leq
\varepsilon_{\mathrm{quad}}.
\]
Therefore,
\[
|V-V_{N_{\mathrm{re}},L}|
\leq
\varepsilon.
\]

We now prove the asymptotic work estimate under the additional uniform bounds
on \(A_{L(\varepsilon)}^{\mathrm{re}}\) and \(B_{L(\varepsilon)}\). Because
\(L\) is the smallest level satisfying
\eqref{eq:theoretical_level_bound}, with
\[
\delta_\varepsilon
:=
\left(
\frac{\varepsilon_{\mathrm{disc}}}{2C_g}
\right)^{1/p},
\]
we have, for all sufficiently small \(\varepsilon\),
\[
\Delta t_L
\leq
\delta_\varepsilon
<
\Delta t_{L-1}
=
2\Delta t_L.
\]
Hence
\[
\Delta t_L
\asymp
\varepsilon_{\mathrm{disc}}^{1/p}
\asymp
\varepsilon^{1/p},
\]
and
\[
W_L
=
\mathcal O\!\left(\Delta t_L^{-\beta}\right)
=
\mathcal O\!\left(\varepsilon^{-\beta/p}\right).
\]

By~\eqref{eq:optimal_N_root_exponential},
\[
N_{\mathrm{re}}
\leq
1+
\frac{1}{B_L^2}
\log^2\!\left(
\frac{A_L^{\mathrm{re}}}{\varepsilon_{\mathrm{quad}}}
\right).
\]
Since \(A_L^{\mathrm{re}}\leq A^{\mathrm{re}}\) and \(B_L\geq B\),
\[
N_{\mathrm{re}}
\leq
1+
\frac{1}{B^2}
\log^2\!\left(
\frac{A^{\mathrm{re}}}{\varepsilon_{\mathrm{quad}}}
\right)
=
\mathcal O\!\left(\log^2(1/\varepsilon_{\mathrm{quad}})\right)
=
\mathcal O\!\left(\log^2(1/\varepsilon)\right).
\]

Consequently,
\[
W_{\mathrm{SL}}(\varepsilon)
=
N_{\mathrm{re}}W_L
=
\mathcal O\!\left(
\varepsilon^{-\beta/p}\log^2(1/\varepsilon)
\right).
\]
\end{proof}
}

{
\section{Proof of Proposition~\ref{prop:optimal_N_ml}}
\label{app:optimal_N_ml}

\begin{proof}
{
The split constraints in~\eqref{eq:ml_quadrature_constraint} decouple the
level-zero variable from the correction variables. At level zero, we solve
\[
\min_{N_0>0} W_0N_0
\qquad\text{subject to}\qquad
A_0N_0^{-s_0/2}
\leq
\frac{\varepsilon_{\mathrm{quad}}}{2}.
\]
Since the objective is increasing in \(N_0\), the constraint is active at the
minimizer, and therefore
\[
N_0^\star
=
\left(
\frac{2A_0}{\varepsilon_{\mathrm{quad}}}
\right)^{2/s_0}.
\]

For the correction levels, define
\[
c_\ell:=W_\ell+W_{\ell-1},
\qquad
\ell=1,\ldots,L.
\]
The remaining relaxed problem is
\[
\min_{N_1,\ldots,N_L>0}
\sum_{\ell=1}^L c_\ell N_\ell
\qquad\text{subject to}\qquad
\sum_{\ell=1}^L A_\ell N_\ell^{-s/2}
\leq
\frac{\varepsilon_{\mathrm{quad}}}{2}.
\]
Since \(N\mapsto N^{-s/2}\) is convex on \((0,\infty)\), this is a convex
optimization problem. Strictly feasible points exist by taking all \(N_\ell\)
sufficiently large, so the Karush-Kuhn-Tucker conditions characterize the
minimizer.

With a multiplier \(\lambda\geq0\), the Lagrangian is
\[
\mathcal L(\mathbf N,\lambda)
=
\sum_{\ell=1}^L c_\ell N_\ell
+
\lambda
\left(
\sum_{\ell=1}^L A_\ell N_\ell^{-s/2}
-
\frac{\varepsilon_{\mathrm{quad}}}{2}
\right).
\]
Stationarity gives
\[
c_\ell
=
\lambda\frac{s}{2}
A_\ell N_\ell^{-s/2-1},
\qquad
\ell=1,\ldots,L.
\]
Since \(c_\ell>0\) and \(A_\ell>0\), necessarily \(\lambda>0\), and the
correction constraint is active. Solving the stationarity condition gives
\[
N_\ell
=
\left(
\lambda\frac{s}{2}\frac{A_\ell}{c_\ell}
\right)^{2/(s+2)}.
\]
Substitution into the active constraint yields
\[
\left(\lambda\frac{s}{2}\right)^{-s/(s+2)}
\sum_{k=1}^L
A_k^{2/(s+2)}c_k^{s/(s+2)}
=
\frac{\varepsilon_{\mathrm{quad}}}{2},
\]
and hence
\[
\left(\lambda\frac{s}{2}\right)^{2/(s+2)}
=
\left(
\frac{2}{\varepsilon_{\mathrm{quad}}}
\sum_{k=1}^L
A_k^{2/(s+2)}c_k^{s/(s+2)}
\right)^{2/s}.
\]
Therefore,
\[
N_\ell^\star
=
\left(
\frac{A_\ell}{c_\ell}
\right)^{2/(s+2)}
\left(
\frac{2}{\varepsilon_{\mathrm{quad}}}
\sum_{k=1}^L
A_k^{2/(s+2)}c_k^{s/(s+2)}
\right)^{2/s},
\qquad
\ell=1,\ldots,L.
\]
Replacing \(c_\ell\) by \(W_\ell+W_{\ell-1}\), and combining this expression
with \(N_0^\star\), proves~\eqref{eq:optimal_N_ml}. Substitution
into~\eqref{eq:ml_work} gives~\eqref{eq:ml_work_optimal}.
}
\end{proof}
}

{
\section{Proof of Proposition~\ref{prop:work_scaling_multilevel}}
\label{app:complexity_proof_multilevel}

\begin{proof}
{
Since \(L\) is the smallest level satisfying~\eqref{eq:theoretical_level_bound},
for all sufficiently small \(\varepsilon\) we have \(L\geq1\) and
\[
\frac{1}{2}
\left(
\frac{\varepsilon_{\mathrm{disc}}}{2C_g}
\right)^{1/p}
<
\Delta t_L
\leq
\left(
\frac{\varepsilon_{\mathrm{disc}}}{2C_g}
\right)^{1/p}.
\]
Hence
\begin{equation}
\Delta t_L
\asymp
\varepsilon_{\mathrm{disc}}^{1/p}
\asymp
\varepsilon^{1/p}.
\label{eq:ml_level_scaling}
\end{equation}

For \(\ell\geq1\),~\eqref{eq:multilevel_cost_bound} and
\(\Delta t_{\ell-1}=2\Delta t_\ell\) give
\[
W_\ell+W_{\ell-1}
\leq
C_W\left(
\Delta t_\ell^{-\beta}
+
\Delta t_{\ell-1}^{-\beta}
\right)
=
C_W(1+2^{-\beta})\Delta t_\ell^{-\beta}.
\]
Together with Assumption~\ref{ass:level_difference_decay}, this gives
\[
A_\ell^{2/(s+2)}
\left(W_\ell+W_{\ell-1}\right)^{s/(s+2)}
=
\mathcal O\!\left(
\Delta t_\ell^{(2p-\beta s)/(s+2)}
\right).
\]
The assumption \(s>2p/\beta\) ensures that the exponent is negative. Since
\(\Delta t_\ell=\Delta t_0 2^{-\ell}\),
\[
\begin{aligned}
\sum_{\ell=1}^L
\Delta t_\ell^{(2p-\beta s)/(s+2)}
&=
\Delta t_0^{(2p-\beta s)/(s+2)}
\sum_{\ell=1}^L
2^{(\beta s-2p)\ell/(s+2)}
\\
&=
\mathcal O\!\left(
\Delta t_L^{(2p-\beta s)/(s+2)}
\right).
\end{aligned}
\]
\begin{equation}
\sum_{\ell=1}^L
A_\ell^{2/(s+2)}
\left(W_\ell+W_{\ell-1}\right)^{s/(s+2)}
=
\mathcal O\!\left(
\Delta t_L^{(2p-\beta s)/(s+2)}
\right).
\label{eq:ml_aggregate_sum_bound}
\end{equation}
Substituting~\eqref{eq:ml_aggregate_sum_bound}
into the correction term of~\eqref{eq:ml_work_optimal}, and using
\(\varepsilon_{\mathrm{quad}}=\varepsilon/2\) and
\eqref{eq:ml_level_scaling}, gives
\begin{equation}
\begin{aligned}
\left(
\frac{2}{\varepsilon_{\mathrm{quad}}}
\right)^{2/s}
\left(
\sum_{\ell=1}^L
A_\ell^{2/(s+2)}
\left(W_\ell+W_{\ell-1}\right)^{s/(s+2)}
\right)^{(s+2)/s}
&=
\mathcal O\!\left(
\varepsilon^{-2/s}
\Delta t_L^{\,2p/s-\beta}
\right)
\\
&=
\mathcal O\!\left(
\varepsilon^{-2/s}
\varepsilon^{(2p/s-\beta)/p}
\right)
=
\mathcal O\!\left(
\varepsilon^{-\beta/p}
\right).
\end{aligned}
\end{equation}

Because \(A_0\), \(W_0\), and the level-zero time step are independent of
\(\varepsilon\), the first term of~\eqref{eq:ml_work_optimal} satisfies
\[
W_0
\left(
\frac{2A_0}{\varepsilon_{\mathrm{quad}}}
\right)^{2/s_0}
=
\mathcal O\!\left(\varepsilon^{-2/s_0}\right).
\]
Consequently,
\begin{equation}
W_{\mathrm{ML}}(\mathbf N^\star,L)
=
\mathcal O\!\left(
\varepsilon^{-2/s_0}
+
\varepsilon^{-\beta/p}
\right).
\label{eq:ml_relaxed_work_complexity}
\end{equation}

It remains to account for the integer numbers of quadrature points. Set
\(N_\ell=\lceil N_\ell^\star\rceil\), \(\ell=0,\ldots,L\). Since rounding up
can only decrease the two quadrature-error bounds,
\[
A_0N_0^{-s_0/2}
\leq
A_0(N_0^\star)^{-s_0/2}
=
\frac{\varepsilon_{\mathrm{quad}}}{2},
\]
and
\[
\sum_{\ell=1}^L
A_\ell N_\ell^{-s/2}
\leq
\sum_{\ell=1}^L
A_\ell (N_\ell^\star)^{-s/2}
=
\frac{\varepsilon_{\mathrm{quad}}}{2}.
\]
Together with~\eqref{eq:ml_quadrature_split} and
\eqref{eq:common_order_estimate}, these inequalities give
\[
e_{\mathrm{quad}}(\mathbf N,L)
\leq
\varepsilon_{\mathrm{quad}}.
\]
By~\eqref{eq:theoretical_level_bound},
\(e_{\mathrm{disc}}(L)\leq\varepsilon_{\mathrm{disc}}\).
Therefore,
\[
|V-V_{\mathbf N,L}|
\leq
e_{\mathrm{disc}}(L)
+
e_{\mathrm{quad}}(\mathbf N,L)
\leq
\varepsilon.
\]

Finally, \(\lceil N_\ell^\star\rceil\leq N_\ell^\star+1\),
\eqref{eq:ml_work}, \eqref{eq:ml_relaxed_work_complexity},
\eqref{eq:multilevel_cost_bound}, the refinement relation
$\Delta t_{\ell-1}=2\Delta t_\ell$, and
\eqref{eq:ml_level_scaling} give
\[
\begin{aligned}
W_{\mathrm{ML}}(\mathbf N,L)
&\leq
W_{\mathrm{ML}}(\mathbf N^\star,L)
+
W_0
+
\sum_{\ell=1}^L
\left(W_\ell+W_{\ell-1}\right)
\\
&=
\mathcal O\!\left(
\varepsilon^{-2/s_0}
+
\varepsilon^{-\beta/p}
\right)
+
\mathcal O\!\left(
\sum_{\ell=1}^L\Delta t_\ell^{-\beta}
\right)
\\
&=
\mathcal O\!\left(
\varepsilon^{-2/s_0}
+
\varepsilon^{-\beta/p}
\right).
\end{aligned}
\]
This proves~\eqref{eq:work_scaling_multilevel_general}. If
\(s_0\geq2p/\beta\), then
\(\varepsilon^{-2/s_0}=\mathcal O(\varepsilon^{-\beta/p})\), which
gives~\eqref{eq:work_scaling_multilevel}.
}
\end{proof}
}

\section{{Numerical evidence for Conjecture~\ref{conj:contour_domination}}}
\label{app:contour_domination_numerics}

{
The exact function $\mathcal D_R$ in~\eqref{eq:domination_function} cannot be
evaluated because the characteristic function $\Phi$ is not available in
closed form. We therefore compute a finer-grid approximation $\bar\Phi$ and,
for each tested level $\ell$, define the computable normalized function
\begin{equation}
\overline{\mathcal D}_{\ell,R}(u)
:=
\red{\Delta t_\ell^{-q_{\mathrm{obs}}}}
\left|\widehat P(u+\mathrm{i}R)\right|
\left|\bar\Phi(u+\mathrm{i}R)-\Phi_\ell(u+\mathrm{i}R)\right|,
\qquad u\geq0.
\label{eq:numerical_domination_function}
\end{equation}
Here the overbar indicates that the finer-grid characteristic function is used
as the reference quantity. We use the observed rate
\red{$q_{\mathrm{obs}}=1+\alpha$} reported in
Section~\ref{subsec:time_discretization_numerics}.

We evaluate~\eqref{eq:numerical_domination_function} for several sufficiently
fine levels on a common Fourier grid. On the same grid, we compute the
cumulative trapezoidal approximations
\begin{equation}
\overline{I}_{\ell,R}(U)
:=
\int_0^U\overline{\mathcal D}_{\ell,R}(u)\,du.
\label{eq:numerical_domination_cumulative}
\end{equation}
}

\FloatBarrier
\ifconvertedfigures
\begin{figure}[ht]
\centering

\begin{subfigure}[t]{0.48\textwidth}
    \vspace{0pt}
    \centering
    \includegraphics[width=\linewidth]{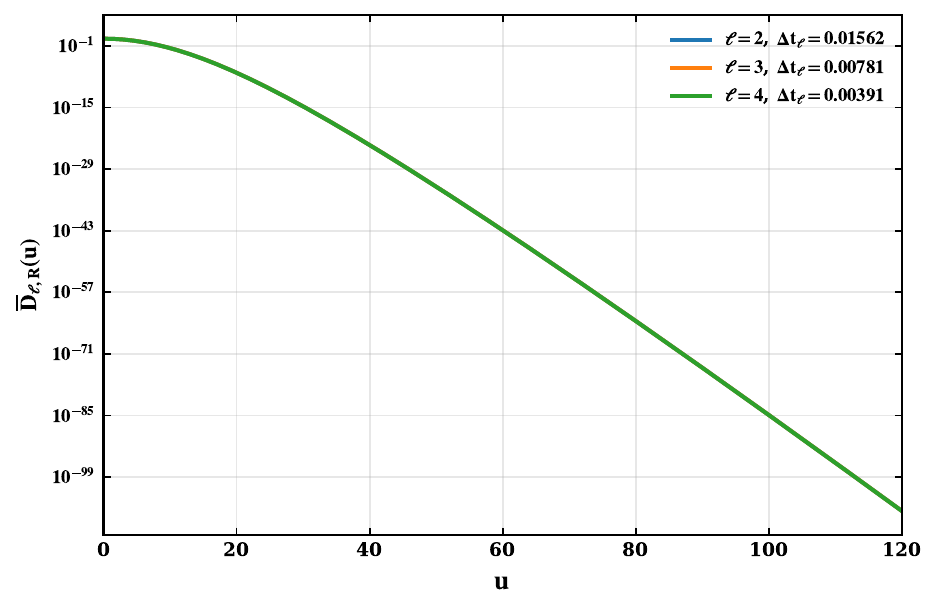}
    \caption{{
    The normalized functions
    $\overline{\mathcal D}_{\ell,R}(u)$ against $u$ for several sufficiently
    fine levels $\ell$.
    }}
    \label{fig:contour_decay}
\end{subfigure}
\hfill
\begin{subfigure}[t]{0.48\textwidth}
    \vspace{0pt}
    \centering
    \includegraphics[width=\linewidth]{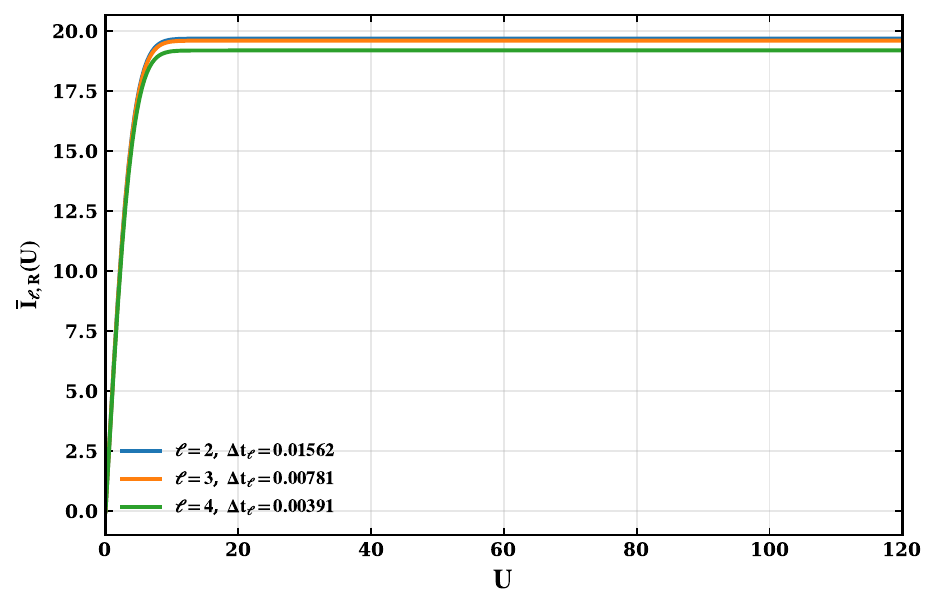}
    \caption{{
    The cumulative integrals $\overline I_{\ell,R}(U)$ against $U$ for the
    same levels.
    }}
    \label{fig:contour_integral}
\end{subfigure}

\caption{{
Numerical evidence for Conjecture~\ref{conj:contour_domination} using the
finer-grid approximation $\bar\Phi$. The normalized functions in
\subref{fig:contour_decay} decay with $u$ and stabilize under refinement,
while the cumulative integrals in~\subref{fig:contour_integral} approach a
common plateau.
}}
\label{fig:contour_domination_evidence}
\end{figure}
\else
\begin{figure}[ht]
\centering
\begin{subfigure}[t]{0.48\textwidth}
    \centering
    \fbox{%
        \parbox[c][3.2cm][c]{0.88\linewidth}{%
            \centering{\textbf{A plot should compile here.}}%
        }%
    }
    \caption{{Placeholder for $\overline{\mathcal D}_{\ell,R}(u)$.}}
    \label{fig:contour_decay}
\end{subfigure}
\hfill
\begin{subfigure}[t]{0.48\textwidth}
    \centering
    \fbox{%
        \parbox[c][3.2cm][c]{0.88\linewidth}{%
            \centering{\textbf{A plot should compile here.}}%
        }%
    }
    \caption{{Placeholder for $\overline I_{\ell,R}(U)$.}}
    \label{fig:contour_integral}
\end{subfigure}
\caption{{Placeholder for the numerical evidence supporting
Conjecture~\ref{conj:contour_domination}.}}
\label{fig:contour_domination_evidence}
\end{figure}
\fi
\FloatBarrier

{
The observed stabilization of $\overline{\mathcal D}_{\ell,R}$ and the
plateau of $\overline I_{\ell,R}$ provide numerical support for the decay and
integrability required in Conjecture~\ref{conj:contour_domination}. Because
the experiment uses a finite Fourier interval, finitely many levels, and the
approximation $\bar\Phi$ in place of $\Phi$, it does not constitute a proof of
integrability on $(0,\infty)$.
}

\section{\blue{Damping parameters used in the numerical experiments}}
\label{app:numerical_damping_values}

{\color{blue}
Table~\ref{tab:numerical_damping_values} lists every distinct combination of
the model parameter set, maturity, and market/payoff parameters for which the
practical damping optimization is performed. The model parameters associated
with the EuRos and SPY labels are given in
Table~\ref{tab:numerical_benchmarks}. All options are at the money and have
zero interest rate. The values of $R^\star$ are obtained from the one-sided
constrained optimization described in
the paragraph on practical damping selection in
Section~\ref{sec:numerical_results}.

\begin{table}[H]
    \centering
    \small
    \caption{Damping parameters selected for the distinct numerical
    configurations by the constrained damping optimization. Payoff-transform
    admissibility is enforced, whereas model-moment admissibility remains
    heuristic.}
    \label{tab:numerical_damping_values}
    \begin{tabular}{lccccc}
        \toprule
        Case & $T$ & $S_0$ & $K$ & $r$ & $R^\star$ \\
        \midrule
        EuRos & $0.1$ & $1000$ & $1000$ & $0$ & $-23.144$ \\
        EuRos & $1$ & $1000$ & $1000$ & $0$ & $-6.817$ \\
        EuRos & $2$ & $1000$ & $1000$ & $0$ & $-4.673$ \\
        EuRos & $10$ & $1000$ & $1000$ & $0$ & $-2.064$ \\
        EuRos & $2/365$ & $1000$ & $1000$ & $0$ & $-100.241$ \\
        EuRos & $7/365$ & $1000$ & $1000$ & $0$ & $-53.563$ \\
        SPY & $2/365$ & $1$ & $1$ & $0$ & $-62.432$ \\
        SPY & $7/365$ & $1$ & $1$ & $0$ & $-47.635$ \\
        \bottomrule
    \end{tabular}
\end{table}
}

\section{BL2 Markovian approximation used in the benchmarks}
\label{app:bl2_method}

The BL2 method of Bayer and Breneis~\cite{bayer2023weak} replaces the
fractional kernel in the rough Heston variance equation by a finite sum of
exponentials. With \(H=\alpha-\tfrac12\), it approximates
\[
K_\alpha(t)
=
\frac{t^{\alpha-1}}{\Gamma(\alpha)}
\qquad\text{by}\qquad
K_\alpha^{(n)}(t)
=
\sum_{j=1}^n w_j e^{-x_jt},
\]
where \((x_j,w_j)_{j=1}^n\) are positive nodes and weights. The BL2 rule is the
bounded \(L^2(0,T)\) approximation proposed in~\cite{bayer2023weak}; the
implementation constructs the rules successively, using the optimized
\((n-1)\)-factor rule to initialize the \(n\)-factor optimization. Hence,
\(n_{\mathrm{BL2}}\) is the number of exponential factors, or equivalently the
number of kernel-quadrature nodes. It is not the number of Fourier nodes or
Riccati time steps.

Introducing \(U_0^{(j)}=0\), the resulting Markovian approximation is
\[
V_t^{(n)}:=V_0+\sum_{j=1}^n w_jU_t^{(j)}.
\]
Together with the asset dynamics in~\eqref{eq:rHeston_model}, its factors
satisfy
\begin{align*}
dU_t^{(j)}
&=
\left[-x_jU_t^{(j)}
+\gamma\bigl(\theta-V_t^{(n)}\bigr)\right]dt
+\gamma\nu\sqrt{V_t^{(n)}}\,dW_t^2,
\qquad j=1,\ldots,n,
\end{align*}
so that the non-Markovian variance equation is replaced by an
\(n\)-dimensional Markovian system. In the notation of the reference
implementation~\cite{breneisGithub}, the parameter conversion is
\[
H=\alpha-\tfrac12,
\qquad
\lambda=\gamma,
\qquad
\nu_{\mathrm{BL2}}=\gamma\nu,
\qquad
\theta_{\mathrm{BL2}}=\gamma\theta.
\]

For a prescribed factor count, the implementation calls
\texttt{quadrature\_rule} with \texttt{mode="european"}, which invokes the
\texttt{european\_rule} construction and returns the kernel nodes and weights.
For this fixed rule, the Fourier truncation and the Riccati and Fourier
resolutions are then selected adaptively. In our comparison, we precompute one
rule for every \(n\in\{1,2,3,4\}\). Given a target relative tolerance
\(\varepsilon_{\mathrm{rel}}\), each candidate is priced using the internal
tolerance
\[
\texttt{rel\_tol}_{\mathrm{BL2}}
=
\frac{\varepsilon_{\mathrm{rel}}}{2}.
\]
This half-tolerance controls the adaptive Fourier and Riccati calculation for
the fixed kernel rule; it does not control the finite-factor kernel error. We
therefore compute the realized relative pricing error
\[
e_n
:=
\frac{|V_{\mathrm{BL2}}^{(n)}-V_{\mathrm{ref}}|}
{|V_{\mathrm{ref}}|}.
\]
We report the smallest factor count satisfying
\begin{equation}
n_{\mathrm{BL2}}
:=
\min\left\{
n\in\{1,2,3,4\}:e_n\leq\varepsilon_{\mathrm{rel}}
\right\}.
\label{eq:bl2_factor_selection}
\end{equation}
Thus, we select the minimum number of BL2 kernel-quadrature nodes that attains
the target relative error. Since this selection uses \(V_{\mathrm{ref}}\), it
is an oracle procedure used only for benchmarking. In our experiments, at
most four factors were sufficient to achieve all tested tolerances.

\newpage
\bibliographystyle{plain}
\bibliography{references}

\end{document}